\theoremstyle{plain}
\newtheorem{thm}{Theorem}
\newtheorem{lem}[thm]{Lemma}
\newtheorem{prop}[thm]{Proposition}
\newtheorem{defn}[thm]{Definition}
\newcommand{\norm}[1]{\left\lVert#1\right\rVert}
\begin{document}

\title{Target Detection within Nonhomogeneous Clutter via Total Bregman Divergence-Based Matrix Information Geometry Detectors}

\author{Xiaoqiang~Hua, Yusuke Ono, Linyu~Peng, Yongqiang~Cheng, Hongqiang~Wang 

\thanks{This work was partially supported by the National Natural Science Foundation of China under Grant Number 61901479, JSPS KAKENHI Grant Number JP20K14365,  JST-CREST Grant Number JPMJCR1914, and Keio Gijuku Fukuzawa Memorial Fund. {\it (Corresponding author: Linyu Peng.)}

X. Hua is with the College of Meteorology and Oceanography, National University of Defense Technology, Changsha, Hunan 410073, China (e-mail: hxq712@yeah.net).

Y. Ono is with the Department of Mechanical Engineering, Keio University, Hiyoshi 3-14-1, Yokohama 223-8522, Japan (e-mail: yuu555yuu@keio.jp).

L. Peng is with the Department of Mechanical Engineering, Keio University, Hiyoshi 3-14-1, Yokohama 223-8522, Japan. He  is also an adjunct faculty member of Waseda Institute for Advanced Study, Waseda University, Japan, and School of Mathematics and Statistics, Beijing Institute of Technology, China (e-mail:
 l.peng@mech.keio.ac.jp).

Y. Cheng and H. Wang are with the College of Electronic Science, National University of Defense Technology, Changsha, Hunan 410073, China (e-mail: cyq101600@126.com; oliverwhq@tom.com).}}



\maketitle

\begin{abstract}
Information divergences are commonly used to measure the dissimilarity of two elements on a statistical manifold. Differentiable manifolds endowed with different divergences may possess different geometric properties, which can result in totally different performances in many practical applications. In this paper, we propose a total Bregman divergence-based matrix information geometry (TBD-MIG) detector and apply it to detect targets emerged  into nonhomogeneous clutter. In particular, each sample data is assumed to be modeled as a Hermitian positive-definite (HPD) matrix and the clutter covariance matrix is estimated by the TBD mean of a set of secondary HPD matrices. We then reformulate the problem of signal detection as discriminating two points on the HPD matrix manifold. Three TBD-MIG detectors, referred to as the total square loss, the total log-determinant and the total von Neumann MIG detectors, are proposed, and they can achieve great performances due to their power of discrimination and robustness to interferences. Simulations show the advantage of the proposed TBD-MIG detectors in comparison with the geometric detector using an affine invariant Riemannian metric  as well as the adaptive matched filter  in nonhomogeneous clutter.
\end{abstract}

\begin{IEEEkeywords}
Total Bregman divergence (TBD), Matrix information geometry (MIG) detector, Nonhomogeneous clutter, Matrix manifold.
\end{IEEEkeywords}

\IEEEpeerreviewmaketitle

\section{Introduction}

\IEEEPARstart{D}{etecting} a target of interest emerging into nonhomogeneous clutter is always a challenging subject in fields of radar, sonar and communications. Typically,  detection performance is mainly affected by the estimate accuracy of clutter covariance matrix (CCM) \cite{839973}. Classical sample covariance matrix (SCM) estimators are derived using a set of secondary data collected from range gates spatially close to the one under test according to the maximum likelihood estimation (MLE) criterion \cite{9097436,8052571}, and have been widely used in the generalized likelihood ratio test (GLRT) detectors \cite{4104190,135446,511809}, and Rao and Wald tests \cite{7384511,7605536,8968391}. However, performances of SCM estimators are sensitive to nonhomogeneous clutter due to the limited number of homogeneous sample data as well as the heterogeneity. On the one hand, sufficient number of homogeneous sample data that is independent and identically distributed  and shares the same spectral property is needed to achieve a satisfactory estimate performance. For instance, to guarantee the performance loss less than $3$ dB, the number of homogeneous sample data must be more than $2$ times of the dimension of sample data. On the other hand, the sample data is inevitably contaminated by outliers caused by the interferences or the variation of clutter power, although the outliers can be censored by the sample selection methods,  such as, the training sample selection with matrix whitening \cite{9067074}, the geometric mean or median-based generalized inner product (GIP) \cite{6573681,6825699}, the covariance structure selection \cite{CHEN199910} and the MLE method \cite{1039406}. An acceptable performance cannot be obtained unless  sufficient number of homogeneous sample data are available. However, the number of homogeneous sample data can be very limited in real nonhomogeneous clutter, that often results in a remarkable degradation in detection performance.

An effective strategy to improve  detection performance is to incorporate {\it a priori} information about the nonhomogeneous clutter environment into the detector design, namely to perform a knowledge-based processing. For instance, in \cite{4267625}, the environmental information provided by geographic information system is employed to select the homogeneous sample data, and a significant improvement in the detection performance is achieved in real IPIX radar data. In \cite{5417154}, the unknown CCM is assumed to obey the complex Wishart and inverse complex Wishart distributions, and two GLRT-based detectors are designed in a Bayesian framework. The advantage of the proposed detectors with respect to their non-Bayesian counterparts is validated on real L-band clutter data. Another example is provided in \cite{6020816}, where the CCM is modeled as a multi-channel auto-regressive process. Based on this model, two knowledge-aided parametric adaptive detectors are derived by employing a prior information about the spatial correlation through colored-loading. The performance analysis on various datasets reveals the advantage of the proposed parametric adaptive detectors, especially in the case of limited data (see also \cite{8289377,5484507,4154721}). That knowledge-aided target detection method can achieve  performance improvement in nonhomogeneous clutter is mainly due to that statistical characteristics of clutter environment are provided. Unfortunately, statistical characteristics of clutter are often unknown or are difficult to capture in real practical applications. Insufficient knowledge about clutter often leads to a severe performance degradation as well.

Another approach to circumvent the degradation in detection performance is to design the detector in the framework of matrix information geometry (MIG). This type of detectors do not require {\it a priori}  knowledge about the statistical characteristics of clutter environment but simply invoke  the geometry of Riemannian manifolds. MIG is a relatively new branch in the study of information geometry, which was pioneered by Rao in the 1940s \cite{Rao1992} and further developed by Chentsov \cite{Chentsov1972}, Efron \cite{Efron1975Defining},   Amari \cite{97844712097,amari2000methods}, etc. Information geometry  studies intrinsic properties of statistical models, and many  information processing problems from information science can be transformed into discriminational problems on differentiable manifolds equipped with a Riemannian metric or, in particular, an affine invariant Riemannian metric (AIRM). MIG is a natural  extension of classical information geometry.  Lots of signal processing problems have been successfully solved using the MIG theory. For instance, in \cite{6573681,7887259}, the CCM estimation related to a geometric distance is transformed into computing the geometric barycenter of basic covariance matrices. The basic covariance matrices are constructed by a set of secondary data with a condition number upper bound constraint. Then, the GIP is used together with the estimated covariance matrix to design a training sample selector. The results have shown significant performance improvements over the GIP method in nonhomogeneous clutter. In \cite{7346232}, a signal detection method based on the Riemannian $p$-mean of covariance matrices estimated by a neighbourhood of the considered cell is designed on the Toeplitz Hermitian positive-definite (HPD)  matrix Riemannian manifold. This detector is called the matrix constant false alarm rate (CFAR) detector or MIG detector. The advantage of MIG detector has been shown on target detection in high frequency X-band radar clutter \cite{4721049}, Burg estimation of Radar scatter matrix \cite{7842633}, the analysis of statistical characterization \cite{9078971} and monitoring of wake vortex turbulences \cite{Liu2013,BARBARESCO201054}.  It is worth noting that the diagonal loading is also an useful tool for CCM estimation. This can be achieved by resorting to adding the identity multiplied by a loading factor to the SCM. Diagonal loading has been successfully applied to target detection \cite{4383590,8450037} and adaptive beamforming \cite{7181,5417174}. For instance, diagonal loading can be used to reduce the main-lobe distortion maintaining also lower sidelobes to stabilize the beampattern, as the large eigenvalues of the CCM due to strong interference are not significantly affected by the loading process, whereas the smaller eigenvalues are increased \cite{8450037}. Recently, a novel MIG detector based upon  information divergences that possess many nice properties rather than the geodesic distance is proposed \cite{HUA2017106,8000811,HUA2019640}. Specifically, the sample data is assumed to be modeled as an HPD matrix. The new observation is that an HPD matrix represents the power or correlation of the sample data. The set of all HPD matrices form a differentiable  manifold with non-positive curvature \cite{bridson2013metric}. Then, the problem of target detection can be treated as discriminating two points on the HPD matrix manifold. The CCM is estimated as the geometric mean of a set of secondary HPD matrices. Since the geometric mean is robust to outliers and the detector does not rely on {\it a priori} knowledge about clutter environment, the new detector leads to better performance  over the conventional detector in nonhomogeneous clutter.

In this paper, we extend our previous ideas, presented in \cite{HUA2018232}, by proposing a class of total Bregman divergences (TBDs) on the HPD matrix manifold, and designing a TBD-based MIG (named as TBD-MIG) detector for target detection in nonhomogeneous clutter. Contributions  in this paper are summarized as follows.

\begin{enumerate}
  \item TBD on the HPD matrix manifold is defined, motivated by the TBD defined on vector spaces. Specifically, three TBDs, including the total square loss (TSL), the total von-Neumann (TVN) divergence, and the total log-determinant (TLD) divergence, are defined by resorting to different convex functions. We also analyze  the difference in geometric structures of TBDs on the HPD matrix manifold. Several computational confusions in \cite{HUA2018232} are also clarified in the current paper.
  \item Geometric mean associated with the TBD for a set of HPD matrices is defined. We derive the TSL, TVN and TLD means in closed-form using the stationary condition on the HPD matrix manifold and use them as the estimators of CCM. In addition, an influence function that describes the influence of the outlier on the TBD mean is proposed to analyze  the robustness. The results show that the TBD mean is much more robust to the strong outlier compared with the AIRM mean.
  \item Experiments performed on simulation database verify the advantages of the proposed TBD-MIG detector compared with the AIRM-MIG detector and the adaptive matched filter (AMF). Moreover, we analyze  the influence of different matrix structures on detection performance.
\end{enumerate}

The rest of the paper is organized as follows. Section II reformulates the problem of signal detection on the HPD matrix manifold and describes the framework of MIG detector. Section III provides a brief mathematical knowledge of MIG. In Section IV, we define the TBD on the HPD matrix manifold and deduce the TBD mean using the stationary condition. Influence functions are derived in closed-form and the analysis of robustness to the outliers is shown numerically. Section V provides the simulation results and comparative analysis using the proposed detector. Conclusions are summarized in Section VI.

 \textit{Notations:} In the sequel, scalars, vectors and matrices are denoted by lowercase, boldface lowercase and boldface uppercase letters, respectively. The symbols $(\cdot)^{\operatorname{T}}$ and $(\cdot)^{\operatorname{H}}$ stand for the transpose and conjugate transpose of matrices, respectively. The operators $\operatorname{det}(\cdot)$ and $\operatorname{tr}(\cdot)$ denote the determinate and trace of a matrix. The  $N\times N$ identity matrix is denoted by  $\bm{I}_N$ or simply $\bm{I}$ if no confusion would be caused.  The Frobenius norm of a matrix with respect to the Frobenius metric is simply denoted by $\norm{ \cdot }$. We use $\mathbb{C}^N$ to represent the set of $N$-dimensional complex vectors. Finally, $\operatorname{E}[\cdot]$ denotes the statistical expectation.

\section{Problem Formulation}
Suppose that the sample data are collected from $N$ (temporal, spatial, or spatial-temporal) channels. We consider the problem of detecting a moving target embedded in clutter. In general, the detection problem can be formulated as the following binary hypothesis  testing, namely
\begin{equation}
\left\{
\begin{aligned}
&\mathcal{H}_0: \left\{
\begin{aligned}
&\bm{x} = \bm{c} \\
&\bm{x}_k = \bm{c}_k, \quad k=1,2,\ldots,K,
\end{aligned} \right.\\
&\mathcal{H}_1: \left\{
\begin{aligned}
&\bm{x} = \alpha \bm{p} + \bm{c}, \\
&\bm{x}_k = \bm{c}_k, \quad k=1,2,\ldots,K,
\end{aligned} \right. \\
\end{aligned} \right.
\end{equation}
where $\alpha$ is  unknown and complex scalar-valued,  accounting for the channel propagation effects and target reflectivity, the vectors $\bm{c}$ and $\bm{c}_k, k=1,2,\ldots,K$ denote the clutter data, $\bm{x}$ and $\bm{x}_k, k=1,2,\ldots,K$ stand for the sample data, and  $\bm{p}$ denotes the known signal steering vector. Write column vectors
\begin{equation}
\begin{aligned}
&\bm{x} = [x_0, x_1, \ldots, x_{N-1}]^{\operatorname{T}}\in \mathbb{C}^N, \\
&\bm{p} = \frac{1}{\sqrt{N}}[1, \exp(-\operatorname{i}2\pi f_d), \ldots, \exp(-\operatorname{i}2\pi f_d(N-1))]^{\operatorname{T}},
\end{aligned}
\end{equation}
where $f_d$ denotes the normalized Doppler frequency, and $\operatorname{i}$ is the imaginary unit. 

The correlation or power of the sample data is considered for discriminating the target signal and clutter. Important features of the sample data can be captured by a special HPD matrix with either  the Toeplitz structure or using the diagonal loading method.

For the sample data $\bm{x}$, the Toeplitz HPD feature matrix is
\begin{equation}
\begin{aligned}
\bm{R} = \operatorname{E}[\bm{x} \bm{x}^{\operatorname{H}}] =
    \begin{bmatrix}
    r_0    & \bar{r}_1 & \cdots & \bar{r}_{N-1} \\
    r_1    &  r_0      & \cdots & \bar{r}_{N-2} \\
    \vdots & \ddots    & \ddots & \vdots        \\
    r_{N-1}& \cdots    & r_1    & r_0
    \end{bmatrix},
    \label{eq:AR1}
 \end{aligned}
 \end{equation}
 where
 \begin{equation}
r_l = \operatorname{E}[x_i \bar{x}_{i+l}],\quad  0\leq l \leq N-1, 1\leq i \leq N - l - 1.
\end{equation}
Here, $r_l$ is the $l$-th correlation coefficient of data $\bm{x}$ and $\bar{r}_l$ denotes the conjugate of $r_l$. 

According to the ergodicity of stationary Gaussian process, the correlation coefficients $r_l$ can be approximated by the mean of sample data instead of its statistical expectation, as
\begin{equation}
\widetilde{r}_l = \frac{1}{N} \sum_{i=0}^{N-1- l } {x_i\bar{x}_{i+l}},\quad  0 \leq l \leq N-1.
\end{equation}
By using the correlation coefficient $\bm{\widetilde{r}} = [\widetilde{r}_0, \widetilde{r}_1,\ldots, \widetilde{r}_{N-1}]^{\operatorname{T}}$, the diagonal loading feature matrix is given by
\begin{equation}
\bm{\widetilde{R}} = \bm{\widetilde{r}}\bm{\widetilde{r}}^{\operatorname{H}} + \operatorname{tr}(\bm{\widetilde{r}}\bm{\widetilde{r}}^{\operatorname{H}})\bm{I},
\label{eq:AR2}
\end{equation}

Note that both matrices  \eqref{eq:AR1} and  \eqref{eq:AR2} are HPD, but the feature matrix \eqref{eq:AR1}  is Toeplitz while the one defined by \eqref{eq:AR2} is obtained using diagonal loading. Their difference will be further analyzed for detection problems in the simulation part.

Features of each sample data can be captured  by an HPD matrix by means of \eqref{eq:AR1} or \eqref{eq:AR2}. It is known that the set of HPD matrices forms a differentiable  manifold. Each HPD matrix constructed by only the clutter or the clutter plus target signal corresponds to  a point on this differentiable manifold. Then, the problem of signal detection can be treated as discriminating matrices in the cell under test (CUT) and the CCM on the HPD matrix manifold. In general, the CCM is estimated by a set of secondary data using the GLRT criterion. Given a set of secondary data $\{ \bm{x}_1, \bm{x}_2, \ldots, \bm{x}_K \}$, the SCM estimator is given by
\begin{equation}
\bm{R}_{SCM} = \frac{1}{K}\sum_{k=1}^K \bm{x}_k\bm{x}_k^{\operatorname{H}}, \quad \bm{x}_k \in \mathbb{C}^{N}.
\end{equation}

It is noted that $\bm{R}_{SCM}$ is the arithmetic mean of $K$ autocorrelation matrices $\{\bm{x}_k\bm{x}_k^{\operatorname{H}}\}_{k=1}^K$ with rank one. The SCM estimator $\bm{R}_{SCM}$ is not nonsingular unless $K\geq N$. The performance of SCM estimator often suffers from a severe degradation when the secondary data contains an outlier. Based on these  observations, taken the geometry of HPD matrix manifold into consideration, we replace the arithmetic mean with the geometric mean, and the CCM can be estimated as
\begin{equation}
\bm{R}_\mathcal{G} = \mathcal{G}(\bm{R}_1,  \bm{R}_2, \ldots, \bm{R}_K),
\end{equation}
where $\bm{R}_k$ are given by \eqref{eq:AR1} or \eqref{eq:AR2} with the sample data $\bm{x}_k$, and $\mathcal{G}(\bm{R}_1, \bm{R}_2, \ldots, \bm{R}_K)$ denotes the geometric mean of HPD matrices $\{ \bm{R}_1, \bm{R}_2, \ldots, \bm{R}_K \}$. Then we can realise  signal detection by judging whether the observation is a CCM. The signal detection can be formulated on the HPD matrix manifold as the following hypothesis testing:
\begin{equation}
\left\{
\begin{aligned}
\mathcal{H}_0: \bm{R}=\bm{R}_\mathcal{G}, \\
\mathcal{H}_1: \bm{R}\neq\bm{R}_\mathcal{G}.
\end{aligned}
\right.
\end{equation}

Let us consider the null hypothesis $\mathcal{H}_0: \bm{R}=\bm{R}_\mathcal{G}$ versus the alternative hypothesis $\mathcal{H}_1: \bm{R}\neq\bm{R}_\mathcal{G}$ based on a set of observations $\{\bm{R}_1, \bm{R}_2, \ldots, \bm{R}_K \}$. The CCM is estimated by the geometric mean $\bm{R}_\mathcal{G}$. Then, the problem of signal detection can be understood as to determine the inner of isosurfaces of the HPD matrix manifold determined by a distance or divergence; examples of isosurfaces are available in Fig. \ref{fig:isotropy}.
 The hypothesis  $\mathcal{H}_0$ is rejected if the observation $\bm{R}_D$ of CUT does not belong to the inner of an isosurface with radius $\gamma$. As a consequence, the signal detection can be formulated by,
\begin{equation}
d(\bm{R}_\mathcal{G},\bm{R}_D) \mathop{\gtrless}\limits_{\mathcal{H}_0}^{\mathcal{H}_1} \gamma
\end{equation}
where $d(\bm{R}_\mathcal{G},\bm{R}_D)$ is the dissimilarity between $\bm{R}_\mathcal{G}$ and $\bm{R}_D$ measured by a geometric metric and $\gamma$ denotes the detection threshold that is also the radius of the isosurface. The scheme of signal detection is sketched in Fig. \ref{fig:Detection_scheme}.

It is clear that $d(\bm{R}_\mathcal{G},\bm{R}_D)$ is the detection statistic, denoting  the geometric distance between $\bm{R}_\mathcal{G}$ and $\bm{R}_D$ on the HPD matrix manifold $\mathcal{M}$. Then, we know that the detection performance is closely related to the geometric measure utilized in the detector. Note that the HPD matrix manifold endowed with different divergences will yield different geometric properties, that may  result in different detection performances. Besides, the performance is affected by the geometric mean that is used as the CCM estimate, since different geometric means have different robustness about outliers. In the next context, we will be focused on the definitions and analysis of divergences and geometric means, which determine the performance of signal detection. Specifically, we define the TBD  on the HPD matrix manifold and derive several important TBD means in closed-form.

\begin{figure}[H]
  \centering
  \includegraphics[width=8.3cm]{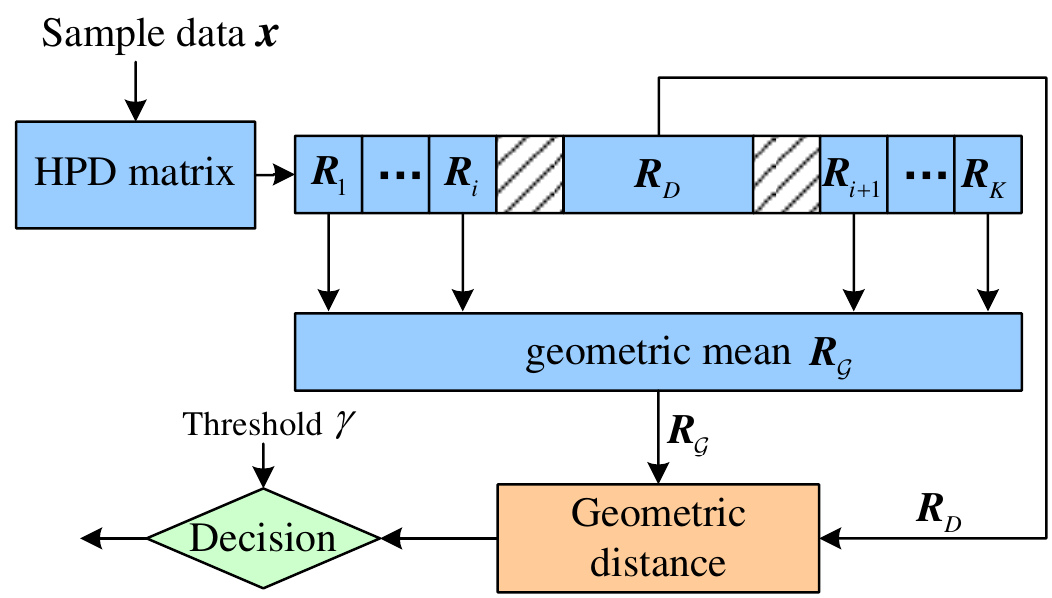}
  \caption{The scheme of signal detection}
  \label{fig:Detection_scheme}
\end{figure}

\section{Information Geometry and TBDs}

The set of all HPD matrices forms a Riemannian manifold equipped with a proper metric, which will allow us to construct efficient algorithms for detection problems. In this section, we will briefly review the theory of classical information geometry and define the TBD divergence on matrix manifolds.

\subsection{Classical information geometry}

The theory of information geometry was firstly established for studying statistical models which are, for instance, associated to a continuous distribution with probability density function $p(x;\bm{\theta})$, where $x$ is the random variable and $\bm{\theta}\in\mathbb{R}^n$ (or a subset of $\mathbb{R}^n$) plays the role of parameters. Assuming the function $p(x;\bm{\theta})$ satisfies the regularity conditions  \cite{amari2000methods,sun2016elementary}, a statistical model is defined as the set of all probability density functions, i.e.
\begin{equation}
\mathcal{M}:=\left\{ p(x;\bm{\theta}) \mid \bm{\theta} \in\mathbb{R}^n, \quad \int p(x;\bm{\theta}) \operatorname{d}\!x=1\right\}.
\end{equation}

A divergence $\operatorname{D}:\mathcal{M}\times \mathcal{M}\rightarrow\mathbb{R}$ is a function measuring the difference/dissimilarity of two elements of $\mathcal{M}$, defined subject to the following properties:
\begin{enumerate}
\item $\operatorname{D}(p_1,p_2)\geq 0$ for all $p_1=p(x;\bm{\theta}_1)$ and $p_2=p(x;\bm{\theta}_2)$.
\item $\operatorname{D}(p_1,p_2)= 0$ if and only if $p_1=p_2$.
\end{enumerate}
The most important classes of divergences are the $f$-divergences and Bregman divergences. For two  elements $p_1=p(x;\bm{\theta}_1)$ and $p_2=p(x;\bm{\theta}_2)$ of $\mathcal{M}$, $f$-divergences are generated through a function $f(z)$, which is convex on $z>0$ and such that $f(1)=0$. An $f$-divergence is defined by the expectation
\begin{equation*}
\begin{aligned}
\operatorname{D}_f(p_1,p_2):&=\operatorname{E}_{p_1}\left[f\left(\frac{p_2}{p_1}\right)\right]=\int p_1 f\left(\frac{p_2}{p_1}\right)\operatorname{d}\!x.
\end{aligned}
\end{equation*}
One well-known example of $f$-divergences is the Kullback--Leibler divergence $\operatorname{D}_{\operatorname{KL}}$  with the function $f$ defined by \cite{kullback1951information}
\begin{equation}
f(z)=-\ln z,\quad z>0.
\end{equation}
It is also known as the relative entropy since
\begin{equation}
\operatorname{D}_{\operatorname{KL}}(p_1,p_2)=H(p_1,p_2)-H(p_1),
\end{equation}
where $H(p_1,p_2)=-\operatorname{E}_{p_1}[\ln p_2]$ is the cross entropy of $p_1$ and $p_2$ and $H(p_1)=-\operatorname{E}_{p_1}[\ln p_1]$ is the entropy of $p_1$. The concept of information entropy was firstly introduced by Shannon in \cite{shannon2001mathematical,shannon1948}.

The Fisher information matrix, firstly introduced by  Fisher in \cite{fisher1922mathematical}, plays the role of a Riemannian metric of the statistical model $\mathcal{M}$. Its components, a symmetric and positive-definite $n\times n$ matrix $g(\bm{\theta})=(g_{ij}(\bm{\theta}))$  can be derived from the infinitesimal behavior of the $f$-divergences; in particular, for the Kullback--Leibler divergence, we have
\begin{equation}
\begin{aligned}
\operatorname{D}_{\operatorname{KL}}&(p(x;\bm{\theta}),p(x;\bm{\theta}+\operatorname{d}\!\bm{\theta}))\\
&=-\int p(x;\bm{\theta}) \ln\left(\frac{p(x;\bm{\theta}+\operatorname{d}\!\bm{\theta})}{p(x;\bm{\theta})}\right)\operatorname{d}\!x\\
&=\frac{1}{2}g_{ij}(\bm{\theta})\operatorname{d}\!\bm{\theta}^i\operatorname{d}\!\bm{\theta}^j+O(\norm{ \operatorname{d}\!\bm{\theta}}^3),
\end{aligned}
\end{equation}
where
\begin{equation}
g_{ij}(\bm{\theta})=\operatorname{E}_{p}\left[\frac{\partial}{\partial \theta^i}\ln p(x;\bm{\theta})~~\frac{\partial}{\partial \theta^j}\ln p(x;\bm{\theta})\right].
\end{equation}
Equipped with the Fisher information matrix or metric $g$, $(\mathcal{M},g)$ becomes a Riemannian manifold, whose Levi-Civita connection $\nabla$ is uniquely given by the torsion-free condition and the following compatibility condition
\begin{equation}
Zg\left(\bm{X},\bm{Y}\right)=g\left(\nabla_{\bm{Z}} \bm{X},\bm{Y}\right)+g\left(\bm{X},\nabla_{\bm{Z}}\bm{Y}\right).
\end{equation}
Here $\bm{X},\bm{Y},\bm{Z}$ are vector fields on $(\mathcal{M},g)$. Chentsov \cite{Chentsov1972}, Efron \cite{Efron1975Defining},   Amari \cite{97844712097,amari2000methods}, etc. extended the above compatibility condition to the existence of a one-parameter family of affine connections $\nabla^{(\alpha)}$ ($\alpha\in\mathbb{R}$) satisfying the duality condition
\begin{equation}
\bm{Z}g\left(\bm{X},\bm{Y}\right)=g\left(\nabla^{(\alpha)}_{\bm{Z}} \bm{X},\bm{Y}\right)+g\left(\bm{X},\nabla^{(-\alpha)}_{\bm{Z}}\bm{Y}\right).
\end{equation}
Here $\nabla^{(\alpha)}$ and $\nabla^{(-\alpha)}$ is a pair of dual connections. When $\alpha=0$, it reduces to the Levi-Civita connection.

\subsection{TBDs on  matrix manifolds}

Let $F$ be a differentiable and strictly convex function defined on a convex domain of $\mathbb{R}^n$. A Bregman divergence, introduced by Bregman \cite{bregman1967relaxation},  measures the difference between the value of $F$ at a point $\bm{x}\in\mathbb{R}^n$ and the linear approximation of $F$ around point $\bm{y}$ evaluated at the point $\bm{x}$, namely
\begin{equation}
\operatorname{B}_F(\bm{x},\bm{y}):=F(\bm{x})-F(\bm{y})-\langle \nabla F(\bm{y}), \bm{x}-\bm{y}\rangle,
\end{equation}
where $\nabla F$ denotes the gradient of $F$ and $\langle\cdot,\cdot\rangle$ is the natural inner product of two vectors. In fact, the Kullback--Leibler divergence is also a special case of the Bregman divergence \cite{zhang2004divergence}. If the space of $\bm{x}:=\bm{\theta}$ is the parameter space of a model or manifold, the Bregman divergence induces a Riemannian metric $\left(g_{ij}(\bm{\theta})\right)=\operatorname{Hess}F(\bm{\theta})$ and a family of information-geometric one-parameter dual connections. The function $F$ is hence sometimes called a potential function.

In recent years, TBDs were introduced and found to be more efficient in dealing with practical problems such as shape retrieval and diffusion tensor image \cite{vemuri2010total,liu2010total}. The TBD $\delta_F$ is defined between two points $\bm{x},\bm{y}$ on a convex domain of $\mathbb{R}^n$ for a differentiable and strictly convex function $F$, as follows
\begin{equation}
\delta_F(\bm{x},\bm{y}):=\frac{F(\bm{x})-F(\bm{y})-\langle \nabla F(\bm{y}), \bm{x}-\bm{y}\rangle}{\sqrt{1+\norm{ \nabla F(\bm{y})}^2}}.
\end{equation}
As a scaling of the Bregman divergence, it shares a lot of similarities as the Bregman divergence, such as convexity (about the first argument).
One aim of this paper is to extend the definition of TBD to matrix manifolds and in particular to the HPD matrix manifold.

The theory of matrix groups is essential in applied sciences. Matrix information geometry  studies the Riemannian-geometric structures of matrix groups, which have been found fundamentally crucial in linear and nonlinear problems.  For the general linear group $GL(N,\mathbb{F})$ of $N\times N$ invertible matrices,  where $\mathbb{F}$ is either $\mathbb{R}$ or $\mathbb{C}$, one can define the following  metric or inner product 
\begin{equation}\label{eq:glmetric}
\langle \bm{X},\bm{Y}\rangle:=\operatorname{tr}(\bm{X}^{\operatorname{H}}\bm{Y}),
\end{equation}
where $\bm{X},\bm{Y}\in GL(N,\mathbb{F})$ and $\bm{X}^{\operatorname{H}}$ denotes the conjugate transpose of $\bm{X}$ in the complex case or simply the transpose of $\bm{X}$ in the real case. This metric is called the Frobenius inner product or the Hilbert--Schmidt inner product. An induced Riemannian metric can be defined at its tangent bundle leading to a unique Levi-Civita connection; this gives its Riemannian structure.  An AIRM of the HPD matrix manifold will be given later in this section.  In many cases, such as subgroups of $GL(N,\mathbb{F})$ with better geometric or topological properties, one may define various metrics and even dual connections as have been greatly investigated for statistical models.

The Bregman divergence for matrices $\bm{X}$ and $\bm{Y}$ is defined as (e.g., \cite{dhillon2008matrix})
\begin{equation}\label{eq:BDmatrix}
\operatorname{B}_F(\bm{X},\bm{Y}):=F(\bm{X})-F(\bm{Y})-\langle \nabla F(\bm{Y}),\bm{X}-\bm{Y}\rangle,
\end{equation}
where $\langle\cdot,\cdot\rangle$ is the Frobenius inner product \eqref{eq:glmetric} and $F:GL(N,\mathbb{F})\rightarrow \mathbb{F}$ is a strictly convex and differentiable function. It can then be immediately generalized to a TBD as the following definition.
\begin{defn}
The TBD of two matrices $\bm{X},\bm{Y}\in GL(N,\mathbb{F})$ is defined as
\begin{equation}\label{eq:TBDmatrix}
\begin{aligned}
\delta_F(\bm{X},\bm{Y})=\frac{F(\bm{X})-F(\bm{Y})-\langle \nabla F(\bm{Y}),\bm{X}-\bm{Y}\rangle}{\sqrt{1+\norm{\nabla F(\bm{Y})}^2}},
\end{aligned}
\end{equation}
where $\norm{\bm{X}}:=\sqrt{\langle \bm{X}, \bm{X}\rangle}$.
\end{defn}

Next, let us consider several examples.

\begin{prop}
Let $F(\bm{X})=\frac{1}{2}\norm{\bm{X}}^2$. The corresponding TBD, called the total square loss (TSL), is given by
\begin{equation}\label{eq:TBD_TSL}
\begin{aligned}
\delta_F(\bm{X},\bm{Y})=\frac{1}{2}\frac{\norm{ \bm{X}-\bm{Y}}^2}{\sqrt{1+\norm{ \bm{Y}}^2}}.
\end{aligned}
\end{equation}
\end{prop}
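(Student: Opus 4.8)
The plan is to substitute the specific potential $F(\bm{X})=\frac{1}{2}\norm{\bm{X}}^2=\frac{1}{2}\langle\bm{X},\bm{X}\rangle$ into the general definition \eqref{eq:TBDmatrix} and simplify the numerator and the denominator separately, reducing the statement to an elementary Hilbert-space identity.

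First I would compute the gradient of $F$ with respect to the Frobenius inner product \eqref{eq:glmetric}. Expanding $F(\bm{Y}+t\bm{H})=\frac{1}{2}\operatorname{tr}\bigl((\bm{Y}+t\bm{H})^{\operatorname{H}}(\bm{Y}+t\bm{H})\bigr)$ and differentiating at $t=0$ isolates the linear term $\frac{1}{2}\operatorname{tr}(\bm{Y}^{\operatorname{H}}\bm{H}+\bm{H}^{\operatorname{H}}\bm{Y})=\operatorname{Re}\langle\bm{Y},\bm{H}\rangle$, which identifies $\nabla F(\bm{Y})=\bm{Y}$. This at once gives the denominator of \eqref{eq:TBDmatrix} as $\sqrt{1+\norm{\bm{Y}}^2}$, matching the claimed expression.

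Next I would treat the numerator $F(\bm{X})-F(\bm{Y})-\langle\nabla F(\bm{Y}),\bm{X}-\bm{Y}\rangle$. Substituting the gradient gives $\frac{1}{2}\norm{\bm{X}}^2-\frac{1}{2}\norm{\bm{Y}}^2-\langle\bm{Y},\bm{X}-\bm{Y}\rangle$, and expanding the inner product leaves $\frac{1}{2}\norm{\bm{X}}^2+\frac{1}{2}\norm{\bm{Y}}^2-\langle\bm{Y},\bm{X}\rangle$. I would then recognize this as the polarization expansion of $\frac{1}{2}\norm{\bm{X}-\bm{Y}}^2=\frac{1}{2}\bigl(\norm{\bm{X}}^2-\langle\bm{X},\bm{Y}\rangle-\langle\bm{Y},\bm{X}\rangle+\norm{\bm{Y}}^2\bigr)$, which closes the computation; dividing numerator by denominator yields \eqref{eq:TBD_TSL}.

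The only delicate point is the complex case: the cross term $\langle\bm{Y},\bm{X}\rangle=\operatorname{tr}(\bm{Y}^{\operatorname{H}}\bm{X})$ need not equal $\langle\bm{X},\bm{Y}\rangle$, so I must verify that $\langle\bm{X},\bm{Y}\rangle+\langle\bm{Y},\bm{X}\rangle=2\operatorname{Re}\langle\bm{X},\bm{Y}\rangle$ is exactly what the quadratic expansion of $\norm{\bm{X}-\bm{Y}}^2$ supplies, so that the imaginary parts cancel and the result is the real nonnegative quantity a divergence must be. This Hermitian-symmetry bookkeeping, together with the real-part convention used in the gradient, is the main thing to check; the remaining manipulations are routine.
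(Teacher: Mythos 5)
Your proposal is correct and follows essentially the same route as the paper: compute $\nabla F(\bm{Y})=\bm{Y}$ via the Fr\'echet derivative with respect to the Frobenius inner product, reduce the Bregman numerator to $\frac{1}{2}\norm{\bm{X}-\bm{Y}}^2$ by polarization, and divide by $\sqrt{1+\norm{\bm{Y}}^2}$. Your explicit attention to the real-part/Hermitian-symmetry bookkeeping is in fact slightly more careful than the paper, which writes $\frac{1}{2}\operatorname{tr}(\bm{X}^{\operatorname{H}}\bm{Y})+\frac{1}{2}\operatorname{tr}(\bm{Y}^{\operatorname{H}}\bm{X})=\langle\bm{X},\bm{Y}\rangle$ without comment (exact only when the pairing is real, as it is on the Hermitian tangent vectors relevant here).
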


\begin{proof}

The gradient of $F(\bm{X})$ associated to the Frobenius inner product is given by the Fr\'echet derivative
\begin{equation*}
\begin{aligned}
\langle \nabla F(\bm{X}),\bm{Y}\rangle:&=\frac{\operatorname{d}}{\operatorname{d}\!\varepsilon}\Big|_{\varepsilon=0} F(\bm{X}+\varepsilon \bm{Y})\\
&=\frac{1}{2}\frac{\operatorname{d}}{\operatorname{d}\!\varepsilon}\Big|_{\varepsilon=0}\operatorname{tr}\left((\bm{X}+\varepsilon \bm{Y})^{\operatorname{H}}(\bm{X}+\varepsilon \bm{Y})\right)\\
&=\frac{1}{2}\operatorname{tr}(\bm{X}^{\operatorname{H}}\bm{Y})+\frac{1}{2}\operatorname{tr}(\bm{Y}^{\operatorname{H}}\bm{X})\\
&=\langle \bm{X},\bm{Y}\rangle.
\end{aligned}
\end{equation*}
Namely, $\nabla F(\bm{X})=\bm{X}$.
Then the Bregman divergence \eqref{eq:BDmatrix} is given by
\begin{equation}
\operatorname{B}_F(\bm{X},\bm{Y})=\frac{1}{2}\norm{ \bm{X}-\bm{Y}}^2.
\end{equation}
Consequently, the TSL is obtained.
\end{proof}

\begin{prop}
Let $F(\bm{X})=-\ln\det \bm{X}$, which is induced from the function  $- \ln x$; see \cite{dhillon2008matrix}. The total log-determinant (TLD) divergence is given by
\begin{equation}\label{eq:TBD_TLD}
\delta_F(\bm{X},\bm{Y})=\frac{\ln \det(\bm{Y}\bm{X}^{-1})+\operatorname{tr}(\bm{Y}^{-1}\bm{X})-N}{\sqrt{1+\norm{ \bm{Y}^{-\operatorname{H}}}^2}}.
\end{equation}
Note that we assumed $\bm{X}$ and $\bm{Y}$ satisfy necessary conditions to avoid singularity.
\end{prop}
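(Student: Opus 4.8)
The plan is to follow exactly the three-step template used in the proof of the TSL: first compute the gradient $\nabla F$ via the Fr\'echet derivative, then assemble the matrix Bregman divergence \eqref{eq:BDmatrix} for the numerator, and finally evaluate $\norm{\nabla F(\bm{Y})}^2$ for the denominator of \eqref{eq:TBDmatrix}.

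First I would compute $\nabla F$ for $F(\bm{X})=-\ln\det\bm{X}$. Differentiating along a direction $\bm{Y}$ and using Jacobi's formula $\frac{\operatorname{d}}{\operatorname{d}\!\varepsilon}\ln\det(\bm{X}+\varepsilon\bm{Y})=\operatorname{tr}((\bm{X}+\varepsilon\bm{Y})^{-1}\bm{Y})$ gives
\begin{equation*}
\langle\nabla F(\bm{X}),\bm{Y}\rangle=\frac{\operatorname{d}}{\operatorname{d}\!\varepsilon}\Big|_{\varepsilon=0}\bigl(-\ln\det(\bm{X}+\varepsilon\bm{Y})\bigr)=-\operatorname{tr}(\bm{X}^{-1}\bm{Y}).
\end{equation*}
Matching this against the Frobenius inner product $\langle\bm{A},\bm{B}\rangle=\operatorname{tr}(\bm{A}^{\operatorname{H}}\bm{B})$ forces $(\nabla F(\bm{X}))^{\operatorname{H}}=-\bm{X}^{-1}$, that is, $\nabla F(\bm{X})=-\bm{X}^{-\operatorname{H}}$.

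Next I would substitute into \eqref{eq:BDmatrix}. The potential difference collapses to $F(\bm{X})-F(\bm{Y})=\ln\det\bm{Y}-\ln\det\bm{X}=\ln\det(\bm{Y}\bm{X}^{-1})$, while the linear term evaluates, using $(\nabla F(\bm{Y}))^{\operatorname{H}}=-\bm{Y}^{-1}$ and $\operatorname{tr}(\bm{Y}^{-1}\bm{Y})=\operatorname{tr}(\bm{I})=N$, to $\langle\nabla F(\bm{Y}),\bm{X}-\bm{Y}\rangle=-\operatorname{tr}(\bm{Y}^{-1}(\bm{X}-\bm{Y}))=-\operatorname{tr}(\bm{Y}^{-1}\bm{X})+N$. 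Subtracting yields the numerator $\ln\det(\bm{Y}\bm{X}^{-1})+\operatorname{tr}(\bm{Y}^{-1}\bm{X})-N$. The denominator follows at once from $\norm{\nabla F(\bm{Y})}^2=\norm{-\bm{Y}^{-\operatorname{H}}}^2=\norm{\bm{Y}^{-\operatorname{H}}}^2$, and dividing gives \eqref{eq:TBD_TLD}.

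I expect the main obstacle to be the careful handling of the conjugate transpose in the gradient. The Fr\'echet derivative produces $-\operatorname{tr}(\bm{X}^{-1}\bm{Y})$, but since the Frobenius inner product pairs $\bm{Y}$ against $\bm{A}^{\operatorname{H}}$ rather than $\bm{A}$, the gradient is $-\bm{X}^{-\operatorname{H}}$ and not $-\bm{X}^{-1}$; this distinction is exactly what makes the denominator read $\norm{\bm{Y}^{-\operatorname{H}}}$. In the HPD case the two coincide, but on $GL(N,\mathbb{C})$ the conjugate genuinely matters, and one must also respect the domain conditions (invertibility, and real-valuedness of $\ln\det$) flagged in the statement so that each trace and logarithm is well defined.
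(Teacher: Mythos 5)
Your proposal is correct and follows essentially the same route as the paper's proof: compute $\langle\nabla F(\bm{X}),\bm{Y}\rangle=-\operatorname{tr}(\bm{X}^{-1}\bm{Y})$ via the Fr\'echet derivative, identify $\nabla F(\bm{X})=-\bm{X}^{-\operatorname{H}}$ against the Frobenius inner product, assemble the Bregman divergence $\ln\det(\bm{Y}\bm{X}^{-1})+\operatorname{tr}(\bm{Y}^{-1}\bm{X})-N$, and divide by $\sqrt{1+\norm{\bm{Y}^{-\operatorname{H}}}^2}$. The paper merely compresses the gradient step into ``similar computation'' (referring to the TSL case), so your explicit treatment of Jacobi's formula and of the conjugate-transpose matching is a faithful, slightly more detailed rendering of the same argument.
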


\begin{proof}
Similar computation gives the gradient by
\begin{equation}
\langle \nabla F(\bm{X}),\bm{Y}\rangle=-\operatorname{tr}(\bm{X}^{-1}\bm{Y}).
\end{equation}
Namely,
\begin{equation}
\nabla F(\bm{X})=-\bm{X}^{-\operatorname{H}},
\end{equation}
and the Bregman divergence is given by
\begin{equation}
\begin{aligned}
\operatorname{B}_F(\bm{X},\bm{Y})
=\ln \det(\bm{Y}\bm{X}^{-1})+\operatorname{tr}(\bm{Y}^{-1}\bm{X})-N.
\end{aligned}
\end{equation}
Here $N$ is the dimension of matrices $\bm{X},\bm{Y}$. Then the TLD divergence can be derived.

\end{proof}

It is known that if $\bm{X}$ is an invertible matrix that does not have eigenvalues in the closed negative real line, then there exists a unique logarithm with eigenvalues lying in the strip $\{z\in \mathbb{C}\mid -\pi<\operatorname{Im}(z)<\pi\}$ \cite{Hig2008}. This logarithm is called the principal logarithm and denoted by $\operatorname{Log}\bm{X}$.

\begin{prop}\label{prop:tvn}
Suppose $\bm{X}$ is invertible and has no eigenvalues lying in the negative real line and define $$F(\bm{X})=\operatorname{tr}(\bm{X}\operatorname{Log}\bm{X}-\bm{X}),$$ which is induced from the function $x\ln x-x$ (e.g., \cite{dhillon2008matrix}). Then, the total von Neumann (TVN) divergence is given by
\begin{equation}\label{eq:TBD_TVN}
\delta_F(\bm{X},\bm{Y})=\frac{\operatorname{tr}\left( \bm{X}(\operatorname{Log} \bm{X} - \operatorname{Log} \bm{Y} ) - \bm{X} + \bm{Y} \right)}{\sqrt{1+\norm{ (\operatorname{Log} \bm{Y})^{\operatorname{H}}}^2}}.
\end{equation}
\end{prop}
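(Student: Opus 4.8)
The plan is to reuse the template established in the two preceding propositions: compute the gradient $\nabla F$ from the Fr\'echet derivative against the Frobenius inner product \eqref{eq:glmetric}, substitute into the matrix Bregman divergence \eqref{eq:BDmatrix}, and finally normalise by $\sqrt{1+\norm{\nabla F(\bm{Y})}^2}$ as prescribed in \eqref{eq:TBDmatrix}. The one step that is genuinely different from the TSL and TLD cases is the differentiation of $\operatorname{tr}(\bm{X}\operatorname{Log}\bm{X})$, since the principal logarithm does not in general commute with the perturbation direction, so a naive product rule leaves behind a term involving the Fr\'echet derivative of $\operatorname{Log}$.

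First I would expand, for an arbitrary direction $\bm{Y}$,
\begin{equation*}
\left.\frac{\operatorname{d}}{\operatorname{d}\!\varepsilon}\right|_{\varepsilon=0}\operatorname{tr}\left((\bm{X}+\varepsilon\bm{Y})\operatorname{Log}(\bm{X}+\varepsilon\bm{Y})\right)=\operatorname{tr}\left(\bm{Y}\operatorname{Log}\bm{X}\right)+\operatorname{tr}\left(\bm{X}\,\bm{L}[\bm{Y}]\right),
\end{equation*}
where $\bm{L}[\bm{Y}]$ denotes the directional derivative of $\operatorname{Log}$ at $\bm{X}$. The main obstacle is to show that the second term collapses. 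I would invoke the integral representation $\bm{L}[\bm{Y}]=\int_0^\infty (\bm{X}+s\bm{I})^{-1}\bm{Y}(\bm{X}+s\bm{I})^{-1}\operatorname{d}\!s$; using the cyclic property of the trace, the commutativity of $\bm{X}$ with $(\bm{X}+s\bm{I})^{-1}$, and the elementary identity $\int_0^\infty (\bm{X}+s\bm{I})^{-2}\operatorname{d}\!s=\bm{X}^{-1}$, this term simplifies to $\operatorname{tr}(\bm{Y})$. Since $F$ carries the extra summand $-\operatorname{tr}(\bm{X})$, whose derivative is $-\operatorname{tr}(\bm{Y})$, the two contributions cancel and I am left with $\langle\nabla F(\bm{X}),\bm{Y}\rangle=\operatorname{tr}\left((\operatorname{Log}\bm{X})\bm{Y}\right)$. (Alternatively, the trace chain rule $\left.\tfrac{\operatorname{d}}{\operatorname{d}\!\varepsilon}\right|_{\varepsilon=0}\operatorname{tr}\,\phi(\bm{X}+\varepsilon\bm{Y})=\operatorname{tr}(\phi'(\bm{X})\bm{Y})$ applied to $\phi(x)=x\ln x-x$, $\phi'(x)=\ln x$, delivers this in one line.) Matching against $\langle\cdot,\cdot\rangle=\operatorname{tr}(\cdot^{\operatorname{H}}\cdot)$ then yields $\nabla F(\bm{X})=(\operatorname{Log}\bm{X})^{\operatorname{H}}$, which immediately produces the denominator $\sqrt{1+\norm{(\operatorname{Log}\bm{Y})^{\operatorname{H}}}^2}$ of \eqref{eq:TBD_TVN}.

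Finally I would assemble the numerator by substituting $\nabla F(\bm{Y})=(\operatorname{Log}\bm{Y})^{\operatorname{H}}$ into \eqref{eq:BDmatrix}, noting that $\langle\nabla F(\bm{Y}),\bm{X}-\bm{Y}\rangle=\operatorname{tr}\left((\operatorname{Log}\bm{Y})(\bm{X}-\bm{Y})\right)$. The Bregman divergence then reads $\operatorname{tr}(\bm{X}\operatorname{Log}\bm{X}-\bm{X})-\operatorname{tr}(\bm{Y}\operatorname{Log}\bm{Y}-\bm{Y})-\operatorname{tr}\left((\operatorname{Log}\bm{Y})\bm{X}\right)+\operatorname{tr}\left((\operatorname{Log}\bm{Y})\bm{Y}\right)$, and here the cyclic property of the trace makes the $\operatorname{tr}(\bm{Y}\operatorname{Log}\bm{Y})$ terms cancel, while $\operatorname{tr}\left((\operatorname{Log}\bm{Y})\bm{X}\right)=\operatorname{tr}\left(\bm{X}\operatorname{Log}\bm{Y}\right)$. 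Collecting the survivors gives $\operatorname{tr}\left(\bm{X}(\operatorname{Log}\bm{X}-\operatorname{Log}\bm{Y})-\bm{X}+\bm{Y}\right)$, exactly the numerator of \eqref{eq:TBD_TVN}, completing the derivation. I expect the only delicate point to be the justification of the cancellation in the gradient step; everything else is bookkeeping with the trace.
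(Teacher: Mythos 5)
Your proposal is correct and follows essentially the same route as the paper's Appendix A proof: product rule on $\operatorname{tr}(\bm{X}\operatorname{Log}\bm{X})$, an integral representation of the Fr\'echet derivative of $\operatorname{Log}$ combined with trace cyclicity and commutation to reduce the awkward term to $\operatorname{tr}(\bm{Y})$, cancellation against the derivative of $-\operatorname{tr}(\bm{X})$ to get $\nabla F(\bm{X})=(\operatorname{Log}\bm{X})^{\operatorname{H}}$, and then the same Bregman-divergence bookkeeping. The only difference is cosmetic: you use the resolvent form $\int_0^\infty(\bm{X}+s\bm{I})^{-1}\bm{Y}(\bm{X}+s\bm{I})^{-1}\operatorname{d}\!s$ where the paper's Lemmas 8--9 use $\int_0^1[(\bm{X}-\bm{I})s+\bm{I}]^{-1}\bm{Y}[(\bm{X}-\bm{I})s+\bm{I}]^{-1}\operatorname{d}\!s$, and these are equivalent under a change of variables, so your manipulations mirror the paper's step for step.
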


 \begin{proof}
See Appendix \ref{appen:A}.
\end{proof}

For the HPD matrix manifold
\begin{equation*}
\begin{aligned}
\mathscr{P}(N,\mathbb{C}):=\left\{  \bm{X}\in GL(N,\mathbb{C}) \mid \bm{z}^{\operatorname{H}}\bm{X}\bm{z}>0, \forall \bm{z}\in\mathbb{C}^N/ \{0\}\right\},
\end{aligned}
\end{equation*}
an AIRM at a point $\bm{P}\in\mathscr{P}(N,\mathbb{C})$ is defined by
\begin{equation}\label{eq:HPDmetric}
g_{\bm{P}}(\bm{A},\bm{B}):=\operatorname{tr}\left(\bm{P}^{-1}\bm{A}\bm{P}^{-1}\bm{B}\right),
\end{equation}
where $\bm{A},\bm{B}\in T_{\bm{P}}\mathscr{P}(N,\mathbb{C})$ and hence $\bm{P}^{-1}\bm{A}, \bm{P}^{-1}\bm{B}\in T_{\bm{I}}\mathscr{P}(N,\mathbb{C})$. The induced distance (called geodesic distance or Riemannian distance) between $\bm{X}$ and $\bm{Y}$ is given by
\begin{equation}
\begin{aligned}
d(\bm{X},\bm{Y}) = \norm{\operatorname{Log}\left(\bm{X}^{-\frac12}\bm{Y}\bm{X}^{-\frac12}\right)}.
\end{aligned}
\end{equation}
Note that the AIRM is consistent with the Frobenius inner product \eqref{eq:glmetric} when restricted to the identity $\bm{I}$ of $\mathscr{P}(N,\mathbb{C})$.  Its Lie algebra $\mathfrak{P}(N,\mathbb{C})=T_{\bm{I}} \mathscr{P}(N,\mathbb{C})$ consists of all Hermitian matrices
\begin{equation}
\mathfrak{P}(N,\mathbb{C})=\left\{\bm{X}\in GL(N,\mathbb{C})\mid \bm{X}^{\operatorname{H}}=\bm{X} \right\}.
\end{equation}
From now on, we will only focus on the HPD matrix manifold $\mathscr{P}(N,\mathbb{C})$. The Bregman divergence \eqref{eq:BDmatrix} and TBD \eqref{eq:TBDmatrix} defined on $GL(N,\mathbb{C})$ can be directly restricted to HPD matrices.

\begin{figure}[H]
  \centering
  \includegraphics[width=8.3cm]{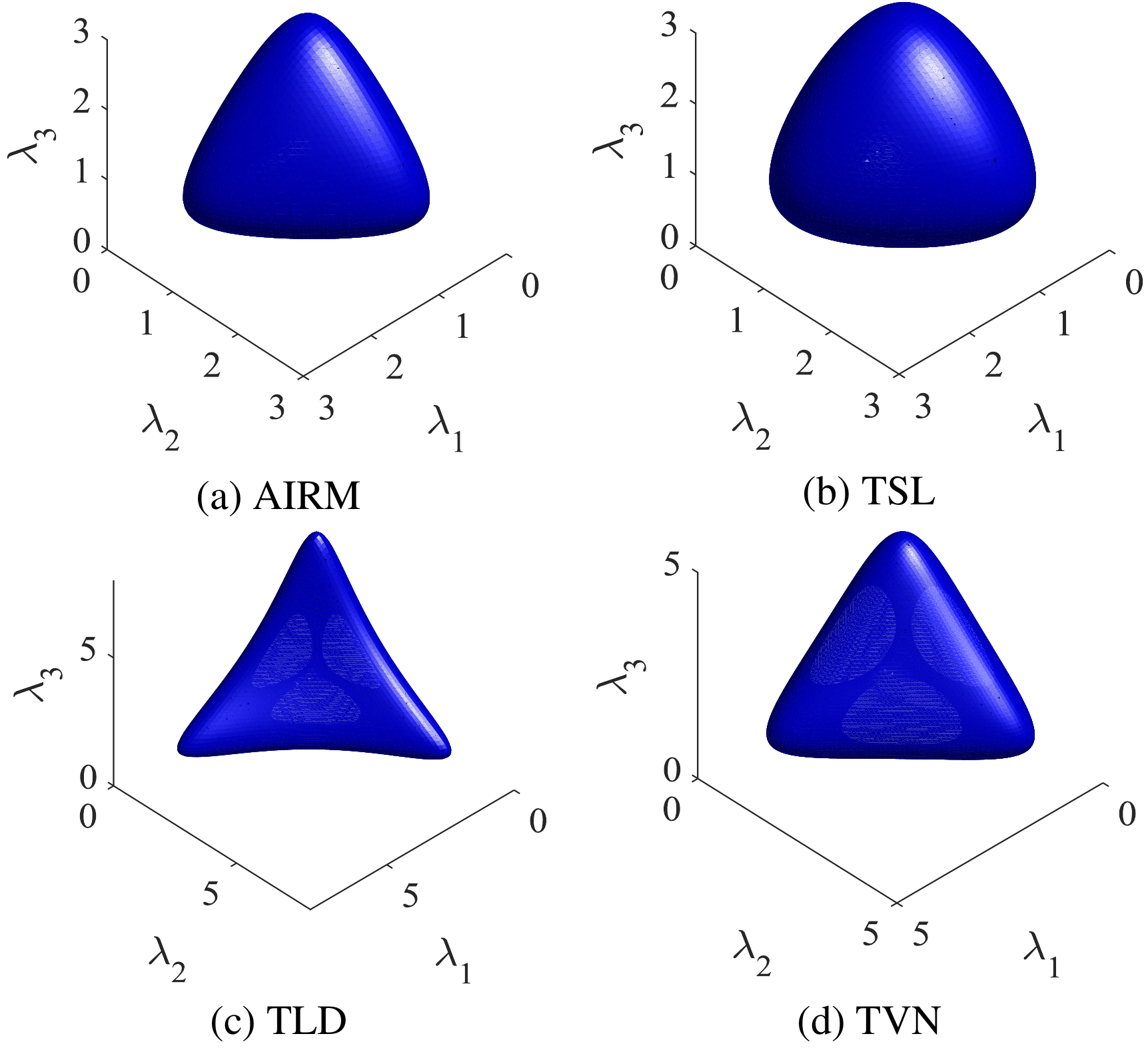}
  \caption{Isosurfaces in $\mathscr{P}(3,\mathbb{C})$ centered at $\bm{I}$ with unit radius, e.g., $\{\bm{X}\in \mathscr{P}(3,\mathbb{C})\mid \delta_F(\bm{I},\bm{X})=1 \}$ where $\lambda_1$, $\lambda_2$ and $\lambda_3$ are eigenvalues of $\bm{X}$}
  \label{fig:isotropy}
\end{figure}

To analyze  the difference of these divergences defined on a matrix manifold, we show the plots of three-dimensional isosurfaces associated with the AIRM and TBD centered at the identity. As shown in Fig. \ref{fig:isotropy}, all the TBDs and AIRM induce non-spherical convex balls. The shapes of isosurfaces  reflecting  geometric properties  of a matrix manifold are totally different.

\section{TBDs and robustness analysis}

In this section, we first recall the definition of arithmetic mean for a set of real numbers and define the geometric mean associated with TBD on the HPD matrix manifold in a similar way. Specifically, three TBD means, associated with the TSL, TLD and TVN divergences, are derived by considering the stationary condition of relevant optimization problems. Finally, influence functions are defined to analyze   robustness of the TBD means.

\subsection{TBD means for a set of HPD matrices}

For a set of $m$ real numbers $\{ x_1, x_2,\ldots, x_m\}$, the well-known arithmetic mean is  
\begin{equation}
\bar{x} := \frac{1}{m} \sum_{i=1}^{m} x_i.
\end{equation}
From a geometric viewpoint, the arithmetic mean can be obtained by considering the minimum of the sum of squared distances, namely
\begin{equation}
\bar{x} := \underset{x \in \mathbb{R}}{\operatorname{argmin}} \sum_{i=1}^{m} | x - x_i |^2,
\end{equation}
where $| x - x_i |$ is the absolute value of $x - x_i$, denoting the distance of two points on the real line.

This can be generalized to define   TBD means for a set of HPD matrices.

\begin{defn}
Let $F$ be a strictly convex and differentiable function, and let $\delta_F$ be the corresponding TBD. The TBD mean for a set of $m$ number of HPD matrices $\{ \bm{X}_1, \bm{X}_2, \ldots, \bm{X}_m\}$ is defined as follows:
\begin{equation}\label{eq:TBDMeanDef}
\overline{\bm{X}} := \underset{\bm{X}\in \mathscr{P}(n,\mathbb{C})}{\operatorname{argmin}} \frac{1}{m}\sum_{i=1}^{m} \delta_F(\bm{X}, \bm{X}_i).
\end{equation}
\end{defn}

Since $F(\bm{X})$ is strictly convex,  the function $\frac{1}{m}\sum\limits_{i=1}^m\delta_F(\bm{X},\bm{X}_i)$ is also strictly convex about $\bm{X}$. Therefore, if the TBD mean \eqref{eq:TBDMeanDef} exists, then it is unique. It can be calculated using the stationary condition as follows. Note that to assure its existence, the function $F$ (and hence the TBD) should be defined in a compact matrix space containing the HPD matrix manifold. Also be noted that we add the coefficient $\frac1m$ for late convenience; this obviously will not affect  value of the mean.


\begin{prop}
If the TBD mean \eqref{eq:TBDMeanDef} exists, then it solves the algebraic equation
\begin{equation}\label{eq:TBDMeanSolution}
\nabla F(\bm{X}) = {\sum\limits_{i=1}^{m} \frac{\nabla F(\bm{X}_i)}{\sqrt{1+\norm{\nabla F(\bm{X}_i)}^2}}}\Big{/}{\sum\limits_{j=1}^{m} \frac{1}{\sqrt{1+\norm{\nabla F(\bm{X}_j)}^2}}}.
\end{equation}
\end{prop}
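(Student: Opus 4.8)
The plan is to exploit the fact that, by definition, the TBD mean is the minimizer of the strictly convex functional $J(\bm{X}):=\frac1m\sum_{i=1}^m\delta_F(\bm{X},\bm{X}_i)$, so that locating its unique interior stationary point suffices. The crucial observation is that in $\delta_F(\bm{X},\bm{X}_i)$ the normalizing denominator $\sqrt{1+\norm{\nabla F(\bm{X}_i)}^2}$ depends only on the fixed datum $\bm{X}_i$ and not on the free variable $\bm{X}$. Setting $w_i:=1/\sqrt{1+\norm{\nabla F(\bm{X}_i)}^2}$, I would therefore regard $J$ as a weighted sum of ordinary Bregman divergences, namely $J(\bm{X})=\frac1m\sum_{i=1}^m w_i\bigl(F(\bm{X})-F(\bm{X}_i)-\langle\nabla F(\bm{X}_i),\bm{X}-\bm{X}_i\rangle\bigr)$, in which only the terms $F(\bm{X})$ and $\langle\nabla F(\bm{X}_i),\bm{X}\rangle$ carry any dependence on $\bm{X}$ and the remaining pieces are additive constants.

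Next I would impose the stationary condition by computing the first variation of $J$ along an arbitrary Hermitian direction $\bm{V}\in\mathfrak{P}(N,\mathbb{C})$. Using the Fr\'echet-derivative characterization of the gradient already employed in the proof of the TSL proposition, $\frac{\operatorname{d}}{\operatorname{d}\!\varepsilon}\big|_{\varepsilon=0}F(\bm{X}+\varepsilon\bm{V})=\langle\nabla F(\bm{X}),\bm{V}\rangle$, while the linear term contributes $\frac{\operatorname{d}}{\operatorname{d}\!\varepsilon}\big|_{\varepsilon=0}\langle\nabla F(\bm{X}_i),\bm{X}+\varepsilon\bm{V}\rangle=\langle\nabla F(\bm{X}_i),\bm{V}\rangle$ and the constant terms drop out. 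Collecting these gives $\frac{\operatorname{d}}{\operatorname{d}\!\varepsilon}\big|_{\varepsilon=0}J(\bm{X}+\varepsilon\bm{V})=\frac1m\big\langle\sum_{i=1}^m w_i\bigl(\nabla F(\bm{X})-\nabla F(\bm{X}_i)\bigr),\bm{V}\big\rangle$.

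Since this must vanish for every Hermitian $\bm{V}$ at the minimizer, and since $\nabla F(\bm{X})-\nabla F(\bm{X}_i)$ is itself Hermitian for the functions $F$ under consideration, the nondegeneracy of the Frobenius inner product (apply it with $\bm{V}$ equal to the bracketed matrix) forces $\sum_{i=1}^m w_i\bigl(\nabla F(\bm{X})-\nabla F(\bm{X}_i)\bigr)=0$. Because the $w_i$ are positive constants this rearranges to $\bigl(\sum_{i=1}^m w_i\bigr)\nabla F(\bm{X})=\sum_{i=1}^m w_i\nabla F(\bm{X}_i)$; dividing by $\sum_{j=1}^m w_j$ and substituting back $w_i=1/\sqrt{1+\norm{\nabla F(\bm{X}_i)}^2}$ yields exactly \eqref{eq:TBDMeanSolution}.

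The differentiation and algebra are immediate once the denominators are recognized as constants, so I expect the genuine subtleties to lie in the two facts the statement quietly relies on. First, one must know that the minimizer is an \emph{interior} point of $\mathscr{P}(N,\mathbb{C})$ so that the unconstrained stationary condition applies; this is precisely what the preceding remark about extending $F$ to a compact matrix space containing the manifold is meant to guarantee. Second, in the complex case one must check that taking variations only along Hermitian directions $\bm{V}\in\mathfrak{P}(N,\mathbb{C})$ still recovers the full matrix equation rather than a projected version of it. This holds because each $\nabla F(\bm{X})-\nabla F(\bm{X}_i)$ remains Hermitian, and a Hermitian matrix orthogonal to all Hermitian directions under the Frobenius pairing must vanish. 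I expect verifying this Hermitian-direction reduction, together with the interiority of the optimum, to be the main obstacle, whereas the variational computation itself is routine.
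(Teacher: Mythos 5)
Your proposal is correct and takes essentially the same route as the paper: both recognize that the normalizing denominators $\sqrt{1+\norm{\nabla F(\bm{X}_i)}^2}$ are constants in $\bm{X}$, compute the gradient (first variation) of the objective $\frac1m\sum_i \delta_F(\bm{X},\bm{X}_i)$, set it to zero, and rearrange to obtain \eqref{eq:TBDMeanSolution}. Your added care about interiority of the minimizer and the sufficiency of Hermitian variation directions makes explicit two points the paper's terse proof leaves implicit, but it is the same argument.
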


\begin{proof}

Denote $G(\bm{X})$ as the function to be minimized, namely
\begin{equation*}
G(\bm{X}) :=\frac1m \sum_{i=1}^{m} \frac{F(\bm{X})-F(\bm{X}_i)-\langle \nabla F(\bm{X}_i),\bm{X}-\bm{X}_i\rangle}{\sqrt{1+\norm{\nabla F(\bm{X}_i)}^2}};
\end{equation*}
its gradient can be immediately calculated and we have
\begin{equation}
\nabla G(\bm{X}) = \frac1m \sum_{i=1}^{m} \frac{\nabla F(\bm{X}) - \nabla F(\bm{X}_i)}{\sqrt{1+\norm{\nabla F(\bm{X}_i)}^2}}.
\end{equation}
Letting $\nabla G(\bm{X}) = \bm{0}$ completes the proof.

%
%

\end{proof}

Next, we will show the TBD means with respect to the TSL, TLD and TVN divergences, respectively.

\begin{prop} \label{prop:tslmean}
The TSL mean for a set of $m$ HPD matrices $\{ \bm{X}_1, \bm{X}_2, \ldots, \bm{X}_m \}$ is given by
\begin{equation*}
\overline{\bm{X}}_{TSL} = \sum_{j=1}^m\sqrt{1+\norm{\bm{X}_j}^2}\times \sum_{i=1}^{m} \frac{\bm{X}_i}{\sqrt{1+\norm{ \bm{X}_i }^2}}.
\end{equation*}
\end{prop}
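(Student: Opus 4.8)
The plan is to read off the TSL mean by specialising the general stationary condition \eqref{eq:TBDMeanSolution} to the quadratic potential $F(\bm{X})=\frac12\norm{\bm{X}}^2$. First I would invoke the convexity remark preceding \eqref{eq:TBDMeanSolution}: since $F$ is strictly convex, $\frac1m\sum_i\delta_F(\bm{X},\bm{X}_i)$ is strictly convex in $\bm{X}$, so if a minimiser exists it is unique and coincides with the stationary point, and it therefore suffices to solve \eqref{eq:TBDMeanSolution}. Next I would substitute the gradient computed in the TSL proposition, $\nabla F(\bm{X})=\bm{X}$ (and hence $\nabla F(\bm{X}_i)=\bm{X}_i$), into \eqref{eq:TBDMeanSolution}. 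Because $\norm{\nabla F(\bm{X}_i)}=\norm{\bm{X}_i}$, the numerator collapses to the matrix $\sum_{i=1}^{m}\bm{X}_i/\sqrt{1+\norm{\bm{X}_i}^2}$ and the denominator to the positive scalar $\sum_{j=1}^{m}1/\sqrt{1+\norm{\bm{X}_j}^2}$; this is a routine substitution and I would not belabour it.

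The hard part will be the final step: recasting the scalar normalising factor produced by \eqref{eq:TBDMeanSolution} into the multiplicative prefactor $\sum_{j=1}^{m}\sqrt{1+\norm{\bm{X}_j}^2}$ displayed in the statement. This is the step I expect to be the genuine obstacle, because the stationary condition naturally divides the weighted sum by $\sum_{j}(1+\norm{\bm{X}_j}^2)^{-1/2}$, whereas the statement multiplies it by $\sum_{j}(1+\norm{\bm{X}_j}^2)^{1/2}$, and these two scalars do not coincide for $m>1$: taking all $\norm{\bm{X}_j}=c$, the stationary denominator yields prefactor $\sqrt{1+c^2}/m$, while the displayed prefactor is $m\sqrt{1+c^2}$, differing by a factor $m^2$. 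To honour the statement as worded I would therefore interrogate the normalisation convention, checking whether the displayed prefactor is intended as a transcription of the reciprocal denominator $\left(\sum_j (1+\norm{\bm{X}_j}^2)^{-1/2}\right)^{-1}$ of \eqref{eq:TBDMeanSolution}, or whether the weights are meant to be pre-normalised to sum to one before solving. Without one of these readings, the identity the stationary condition actually delivers is the normalised weighted average $\overline{\bm{X}}_{TSL}=\left(\sum_i \bm{X}_i/\sqrt{1+\norm{\bm{X}_i}^2}\right)\Big/\left(\sum_j 1/\sqrt{1+\norm{\bm{X}_j}^2}\right)$, and matching it to the multiplicative prefactor as printed is precisely the point requiring care.

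Finally, once the prefactor is pinned down, I would confirm that the resulting expression lands inside the manifold: each weight $1/\sqrt{1+\norm{\bm{X}_i}^2}$ is strictly positive and the scalar prefactor is positive, so $\overline{\bm{X}}_{TSL}$ is a positive linear combination of the HPD matrices $\bm{X}_i$, hence itself Hermitian positive-definite and an element of $\mathscr{P}(N,\mathbb{C})$. This verifies that the minimiser of \eqref{eq:TBDMeanDef} is attained within $\mathscr{P}(N,\mathbb{C})$ and closes the argument, modulo the reconciliation of the normalising constant flagged above.
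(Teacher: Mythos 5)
Your derivation is correct and coincides with the paper's own (omitted) argument: the authors dispose of Propositions \ref{prop:tslmean}, \ref{prop:tldmean} and \ref{prop:tvnmean} by noting they follow from straightforward substitution into the stationary condition \eqref{eq:TBDMeanSolution}, which for $F(\bm{X})=\frac{1}{2}\norm{\bm{X}}^2$ (so $\nabla F(\bm{X})=\bm{X}$) yields exactly the normalised weighted average you wrote down,
\begin{equation*}
\overline{\bm{X}}_{TSL}=\left(\sum_{j=1}^m\frac{1}{\sqrt{1+\norm{\bm{X}_j}^2}}\right)^{-1}\sum_{i=1}^m\frac{\bm{X}_i}{\sqrt{1+\norm{\bm{X}_i}^2}}.
\end{equation*}
The obstacle you flag is not a gap in your proof but a typo in the statement: the printed prefactor $\sum_{j}\sqrt{1+\norm{\bm{X}_j}^2}$ must be read as the reciprocal of the sum of reciprocals, $\bigl(\sum_{j}(1+\norm{\bm{X}_j}^2)^{-1/2}\bigr)^{-1}$, and your test case showing the two scalars differ by a factor $m^2$ when all $\norm{\bm{X}_j}$ are equal is a valid refutation of the formula as literally printed. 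Two internal cross-checks corroborate your reading: the same miswritten prefactor recurs verbatim in Propositions \ref{prop:tldmean} and \ref{prop:tvnmean}, whereas the influence-function formula \eqref{eq:influ_of_TSL} does use the correct normaliser $\bigl(\sum_{i}(1+\norm{\bm{X}_i}^2)^{-1/2}\bigr)^{-1}$; moreover, only the weighted-average form reduces to the arithmetic mean when the $\norm{\bm{X}_i}$ coincide, as any sensible mean must. Your closing observation that the result is a positive linear combination of HPD matrices scaled by a positive scalar, hence itself lies in $\mathscr{P}(N,\mathbb{C})$, supplies the existence detail that the paper leaves implicit.
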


\begin{prop}\label{prop:tldmean}
The TLD mean for a set of $m$ HPD matrices $\{ \bm{X}_1, \bm{X}_2, \ldots, \bm{X}_m \}$ is given by
\begin{equation*}
\overline{\bm{X}}_{TLD} = \left(\sum_{j=1}^m\sqrt{1+\norm{\bm{X}_j^{-1}}^2} \times \sum_{i=1}^{m}\frac{\bm{X}_i^{-1}}{\sqrt{1+\norm{\bm{X}_i^{-1}}^2} }\right)^{-1}.
\end{equation*}
\end{prop}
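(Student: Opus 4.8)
The plan is to specialize the general stationary condition \eqref{eq:TBDMeanSolution} to the potential $F(\bm{X})=-\ln\det\bm{X}$ and solve the resulting algebraic equation for $\overline{\bm{X}}$. Since strict convexity of $\frac1m\sum_{i=1}^m\delta_F(\bm{X},\bm{X}_i)$ has already been invoked to guarantee that a stationary point, if it exists, is the unique minimizer, it suffices to exhibit the matrix satisfying \eqref{eq:TBDMeanSolution} and to check that it indeed lies in $\mathscr{P}(N,\mathbb{C})$.

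First I would recall from the derivation of the TLD divergence \eqref{eq:TBD_TLD} that $\nabla F(\bm{X})=-\bm{X}^{-\operatorname{H}}$. Because we restrict to the HPD matrix manifold, every $\bm{X}_i$ is Hermitian, so $\bm{X}_i^{-\operatorname{H}}=\bm{X}_i^{-1}$ and hence $\norm{\nabla F(\bm{X}_i)}=\norm{\bm{X}_i^{-1}}$. Substituting these into \eqref{eq:TBDMeanSolution}, the common factor $-1$ appears on both sides and cancels, leaving
\begin{equation*}
\overline{\bm{X}}^{-1}=\sum_{i=1}^{m}\frac{\bm{X}_i^{-1}}{\sqrt{1+\norm{\bm{X}_i^{-1}}^2}}\Big/\sum_{j=1}^{m}\frac{1}{\sqrt{1+\norm{\bm{X}_j^{-1}}^2}}.
\end{equation*}
The final step is to invert both sides and collect the scalar normalizing factor, which produces the closed form asserted in the proposition.

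The computation is essentially routine substitution, so I expect no genuine obstacle beyond two points of care. The first is the \emph{well-definedness} of the inverse above: each $\bm{X}_i^{-1}$ is HPD, the scalar weights $1/\sqrt{1+\norm{\bm{X}_i^{-1}}^2}$ are strictly positive, and a positive combination of HPD matrices is again HPD and therefore invertible; this simultaneously shows that the stationary point lies in $\mathscr{P}(N,\mathbb{C})$ (so the constrained and unconstrained minimizers coincide) and that $\overline{\bm{X}}_{TLD}$ is itself HPD. The second is the bookkeeping of the two sign factors and, especially, the placement of the normalizing scalar $\sum_{j}(1+\norm{\bm{X}_j^{-1}}^2)^{-1/2}$ relative to the matrix inverse, since mishandling the order of scalar multiplication and inversion is precisely the kind of computational confusion the paper sets out to correct.
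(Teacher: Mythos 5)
Your strategy is exactly the one the paper intends (its own proof is omitted as ``straightforward''): specialize the stationary condition \eqref{eq:TBDMeanSolution} to $F(\bm{X})=-\ln\det\bm{X}$, use $\nabla F(\bm{X})=-\bm{X}^{-1}$ on Hermitian matrices, cancel the common sign, and invert. Your intermediate equation for $\overline{\bm{X}}^{-1}$ is correct, and the observation that the right-hand side is a positive combination of HPD matrices---so the stationary point is itself HPD and the inversion is legitimate---is a genuine point in your favour that the paper does not spell out.

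The problem is the last step, which you assert rather than perform: inverting your displayed equation gives
\begin{equation*}
\overline{\bm{X}}=\left(\sum_{j=1}^{m}\frac{1}{\sqrt{1+\norm{\bm{X}_j^{-1}}^2}}\right)\left(\sum_{i=1}^{m}\frac{\bm{X}_i^{-1}}{\sqrt{1+\norm{\bm{X}_i^{-1}}^2}}\right)^{-1},
\end{equation*}
whose scalar prefactor is $S=\sum_{j}\bigl(1+\norm{\bm{X}_j^{-1}}^2\bigr)^{-1/2}$, whereas the proposition as printed has prefactor $T^{-1}$ with $T=\sum_{j}\sqrt{1+\norm{\bm{X}_j^{-1}}^2}$. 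These coincide only for $m=1$; in general $S\neq T^{-1}$. A concrete sanity check makes the discrepancy vivid: take $\bm{X}_1=\cdots=\bm{X}_m=\bm{I}$. Any reasonable mean must return $\bm{I}$, and your formula does, since $S\,A^{-1}=\frac{m}{\sqrt{1+N}}\cdot\frac{\sqrt{1+N}}{m}\bm{I}=\bm{I}$ with $A=\sum_i\bm{X}_i^{-1}/\sqrt{1+\norm{\bm{X}_i^{-1}}^2}$; but the printed formula returns $(T A)^{-1}=m^{-2}\bm{I}$. So ``collect the scalar normalizing factor'' is precisely where the argument breaks: what the stationary condition yields is the expression above, and the printed statement has the normalizing sum's reciprocal misplaced (the same slip occurs in Propositions \ref{prop:tslmean} and \ref{prop:tvnmean}). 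A complete proof must either derive and state the corrected closed form, explicitly noting that the proposition as written cannot satisfy \eqref{eq:TBDMeanSolution} for $m\geq 2$, or concede that it proves a different formula from the one claimed. Ironically, you flagged this exact pitfall---the placement of the normalizing scalar relative to the inversion---in your closing remarks, but then did not carry the bookkeeping through to see that it is fatal to the statement as printed.
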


\begin{prop}\label{prop:tvnmean}
The TVN mean for a set of $m$ HPD matrices $\{ \bm{X}_1, \bm{X}_2, \ldots, \bm{X}_m \}$ is given by
\begin{equation*}
\begin{aligned}
\overline{\bm{X}}_{TVN} = \exp\left(\sum_{j=1}^m\mu_j\times \sum_{i=1}^{m} \frac{ \operatorname{Log}\bm{X}_i}{\mu_i}\right),
\end{aligned}
\end{equation*}
where
\begin{equation*}
 \mu_i =\sqrt{1+\norm{ \operatorname{Log}\bm{X}_i }^2},\quad i=1,2,\ldots,m.
\end{equation*}
\end{prop}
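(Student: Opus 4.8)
The plan is to specialise the general stationarity characterisation \eqref{eq:TBDMeanSolution} to the TVN potential $F(\bm{X})=\operatorname{tr}(\bm{X}\operatorname{Log}\bm{X}-\bm{X})$ and then undo the logarithm by exponentiating. The first ingredient is the gradient of $F$. Rather than redo the Fr\'echet derivative of $\operatorname{tr}(\bm{X}\operatorname{Log}\bm{X})$, whose only delicate point is the cancellation of the term carrying the derivative of $\operatorname{Log}\bm{X}$ and which is already settled in the proof of Proposition \ref{prop:tvn} (Appendix \ref{appen:A}), I would read $\nabla F$ directly off the divergence \eqref{eq:TBD_TVN}: comparing its numerator, which is the Bregman divergence $\operatorname{B}_F(\bm{X},\bm{Y})$, against the defining form \eqref{eq:TBDmatrix} pins down the linear term and forces $\nabla F(\bm{Y})=(\operatorname{Log}\bm{Y})^{\operatorname{H}}$, with the denominator $\sqrt{1+\norm{(\operatorname{Log}\bm{Y})^{\operatorname{H}}}^2}$ serving as a consistency check on $\norm{\nabla F(\bm{Y})}$. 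On the HPD cone $\mathscr{P}(N,\mathbb{C})$ each $\operatorname{Log}\bm{X}_i$ is Hermitian, so $\nabla F(\bm{X})=\operatorname{Log}\bm{X}$ and $\norm{\nabla F(\bm{X}_i)}=\norm{\operatorname{Log}\bm{X}_i}$; in particular the weights entering \eqref{eq:TBDMeanSolution} are exactly $\mu_i=\sqrt{1+\norm{\operatorname{Log}\bm{X}_i}^2}$.

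With these data in hand the proof is mostly substitution. Existence and uniqueness of the minimiser are already guaranteed by the strict convexity of $\frac1m\sum_i\delta_F(\bm{X},\bm{X}_i)$, so it suffices to solve the stationarity equation. Plugging $\nabla F(\bm{X})=\operatorname{Log}\bm{X}$ into \eqref{eq:TBDMeanSolution} expresses $\operatorname{Log}\overline{\bm{X}}_{TVN}$ as the $\mu$-weighted combination of the matrices $\operatorname{Log}\bm{X}_i$ prescribed there, and exponentiating both sides recovers $\overline{\bm{X}}_{TVN}=\exp(\cdots)$ in the stated closed form.

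The single step that warrants care is this final inversion. I must check that exponentiating is legitimate, i.e. that $\exp$ and $\operatorname{Log}$ are genuinely mutually inverse on the matrix at hand and that the result lands back in $\mathscr{P}(N,\mathbb{C})$. This holds because the argument of $\exp$ is a real linear combination of the Hermitian matrices $\operatorname{Log}\bm{X}_i$ and is therefore itself Hermitian, so its principal exponential is HPD, with eigenvalues off the negative real axis and logarithmic eigenvalues confined to the strip $-\pi<\operatorname{Im}(z)<\pi$ that defines $\operatorname{Log}$. I expect this exponential/logarithm bookkeeping, rather than any computation, to be the only real obstacle; once the gradient identity is granted, everything else reduces to the already-proved master equation \eqref{eq:TBDMeanSolution}.
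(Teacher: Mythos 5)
Your strategy is exactly the paper's intended (and omitted, ``straightforward'') proof: identify $\nabla F$ for the TVN potential, substitute into the stationarity equation \eqref{eq:TBDMeanSolution}, and exponentiate. Your identification $\nabla F(\bm{X})=(\operatorname{Log}\bm{X})^{\operatorname{H}}=\operatorname{Log}\bm{X}$ on $\mathscr{P}(N,\mathbb{C})$ is legitimate (reading it off the Bregman numerator of \eqref{eq:TBD_TVN} is consistent with Appendix \ref{appen:A}), and your check that the exponential of a real combination of Hermitian matrices lands back in $\mathscr{P}(N,\mathbb{C})$, with $\operatorname{Log}$ and $\exp$ mutually inverse there, is more careful than anything the paper records.

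The gap is in the very last step, which you present as pure substitution. Putting $\nabla F(\bm{X}_i)=\operatorname{Log}\bm{X}_i$ into \eqref{eq:TBDMeanSolution} yields
\begin{equation*}
\operatorname{Log}\overline{\bm{X}}_{TVN}
=\left(\sum_{j=1}^{m}\frac{1}{\mu_j}\right)^{-1}\sum_{i=1}^{m}\frac{\operatorname{Log}\bm{X}_i}{\mu_i},
\end{equation*}
a weighted average whose scalar weights $\mu_i^{-1}\big/\sum_{j}\mu_j^{-1}$ sum to one. The proposition instead carries the prefactor $\sum_{j=1}^{m}\mu_j$. These prefactors are not the same object: by Cauchy--Schwarz, $\bigl(\sum_{j}\mu_j\bigr)\bigl(\sum_{j}\mu_j^{-1}\bigr)\geq m^2$, so they coincide only when $m=1$, and for $m\geq 2$ the two closed forms disagree in general. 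Consequently your claim that ``exponentiating both sides recovers the stated closed form'' asserts an identity that the master equation does not deliver; what it proves is the harmonically normalised formula displayed above. (The same substitution applied to Propositions \ref{prop:tslmean} and \ref{prop:tldmean} exhibits the identical mismatch, so the prefactor $\sum_j\mu_j$ is evidently a systematic typo in the paper for $\bigl(\sum_j\mu_j^{-1}\bigr)^{-1}$.) A complete blind proof has to actually compute this normalisation and then either state the corrected formula or flag the discrepancy with the proposition as written; your proposal does neither, because the one step it waves through is precisely the one where the stated result and the stationarity equation part ways.
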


Proofs of the Propositions \ref{prop:tslmean}, \ref{prop:tldmean} and \ref{prop:tvnmean} are straightforward and  are omitted here.


\subsection{Influence functions}

We define an influence function that describes the effect of outliers on the estimate accuracy of the TBD mean. The influence function can be used for analysing the robustness of TBD means when the HPD data are contaminated by outliers. Let $\bm{\overline{X}}$ be the TBD mean (or AIRM mean) of $m$ HPD matrices $\{ \bm{X}_1, \bm{X}_2, \ldots, \bm{X}_m \}$, and $\bm{\widehat{X}}$ is the TBD mean  (or AIRM mean) of the contaminated HPD data obtained by adding $n$ outliers, that are also HPD matrices $\{ \bm{P}_1, \bm{P}_2, \ldots, \bm{P}_n \}$, with a weight $\varepsilon (\varepsilon \ll 1)$ into these $m$ HPD matrices. Then, $\bm{\widehat{X}}$ can be defined as a perturbation
$$\bm{\widehat{X}} = \bm{\overline{X}} + \varepsilon \bm{H}(\overline{\bm{X}},\bm{P}_1,\bm{P}_2,\ldots,\bm{P}_n)+O(\varepsilon^2),$$
 and the norm
 $$h(\overline{\bm{X}},\bm{P}_1,\bm{P}_2,\ldots,\bm{P}_n):=
\norm{ \bm{H}(\overline{\bm{X}},\bm{P}_1,\bm{P}_2,\ldots,\bm{P}_n)} $$ is defined as the influence function. 
The influence functions of the AIRM and TBD means are given as follows.

\begin{prop}\label{prop:TBDae}
The influence function of the AIRM mean for $m$ HPD matrices $\{ \bm{X}_1, \bm{X}_2, \ldots, \bm{X}_m \}$ and $n$ outliers $\{\bm{P}_1, \bm{P}_2, \ldots, \bm{P}_n\}$ with a weight $\varepsilon (\varepsilon \ll 1)$, is $h=\norm{\bm{H}}$ where
\begin{equation}\label{eq:AIRMInfluFun}
\bm{H}=- \frac{1}{n}\sum_{j=1}^{n}\frac{\bm{\overline{X}}  \operatorname{Log}(\bm{P}_j^{-1}\bm{\overline{X}})+\operatorname{Log}(\bm{\overline{X}} \bm{P}_j^{-1})\bm{\overline{X}}  }{2}.
\end{equation}
\end{prop}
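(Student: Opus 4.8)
The plan is to obtain $\bm{H}$ by implicitly differentiating the stationarity condition that characterises the AIRM (Karcher) mean. First I would record that, for the AIRM \eqref{eq:HPDmetric}, the Riemannian gradient of $\bm{X}\mapsto\frac12 d^2(\bm{X},\bm{A})$ is $-\operatorname{Log}_{\bm{X}}(\bm{A})$, where $\operatorname{Log}_{\bm{P}}(\bm{Q}):=\bm{P}^{1/2}\operatorname{Log}(\bm{P}^{-1/2}\bm{Q}\bm{P}^{-1/2})\bm{P}^{1/2}=\bm{P}\operatorname{Log}(\bm{P}^{-1}\bm{Q})$ denotes the Riemannian logarithm (a Hermitian tangent vector at $\bm{P}$). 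Hence $\bm{\overline{X}}$ solves $\sum_{i=1}^{m}\operatorname{Log}_{\bm{X}}(\bm{X}_i)=\bm{0}$, while the contaminated mean $\bm{\widehat{X}}$ solves the perturbed stationarity equation in which the $n$ outliers enter with total weight $\varepsilon$, for instance
\begin{equation*}
(1-\varepsilon)\frac1m\sum_{i=1}^{m}\operatorname{Log}_{\bm{X}}(\bm{X}_i)+\frac{\varepsilon}{n}\sum_{j=1}^{n}\operatorname{Log}_{\bm{X}}(\bm{P}_j)=\bm{0}.
\end{equation*}
Only the first-order-in-$\varepsilon$ behaviour matters, so the precise normalisation of the weights is immaterial.

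Next I would substitute $\bm{\widehat{X}}=\bm{\overline{X}}+\varepsilon\bm{H}+O(\varepsilon^2)$ and expand in powers of $\varepsilon$. The $O(1)$ terms reproduce $\sum_{i}\operatorname{Log}_{\bm{\overline{X}}}(\bm{X}_i)=\bm{0}$ and therefore cancel; this is exactly where the defining property of $\bm{\overline{X}}$ is used. Collecting the $O(\varepsilon)$ terms leaves the linear equation
\begin{equation*}
\frac1m\sum_{i=1}^{m} D_{\bm{X}}\operatorname{Log}_{\bm{X}}(\bm{X}_i)\big|_{\bm{\overline{X}}}[\bm{H}]=-\frac1n\sum_{j=1}^{n}\operatorname{Log}_{\bm{\overline{X}}}(\bm{P}_j),
\end{equation*}
whose right-hand side is already the averaged outlier forcing term.

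The crux is the linear operator on the left, namely the derivative of the Riemannian logarithm in its base point, which reduces to the Fr\'echet derivative of the matrix logarithm $\operatorname{Log}$. I would evaluate it through the general identity $D_{\bm{P}}\operatorname{Log}_{\bm{P}}(\bm{Q})\big|_{\bm{P}=\bm{Q}}[\bm{H}]=-\bm{H}$ (the curved analogue of the flat relation $\operatorname{Log}_{\bm{P}}(\bm{Q})=\bm{Q}-\bm{P}$), retaining this leading contribution for each summand so that the averaged operator collapses to $-\mathrm{Id}$, i.e. $\frac1m\sum_i D_{\bm{X}}\operatorname{Log}_{\bm{X}}(\bm{X}_i)|_{\bm{\overline{X}}}[\bm{H}]=-\bm{H}$. \emph{This is the main obstacle}: away from coincidence the base-point derivative also carries curvature (Jacobi-field) corrections, so keeping only $-\bm{H}$ is precisely the first-order (flat) approximation that underlies the influence function, and I would state this explicitly. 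Solving then gives $\bm{H}=\frac1n\sum_{j=1}^{n}\operatorname{Log}_{\bm{\overline{X}}}(\bm{P}_j)$.

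Finally I would convert this tangent vector into the stated matrix-logarithm form. Using $\operatorname{Log}_{\bm{\overline{X}}}(\bm{P}_j)=\bm{\overline{X}}\operatorname{Log}(\bm{\overline{X}}^{-1}\bm{P}_j)=-\bm{\overline{X}}\operatorname{Log}(\bm{P}_j^{-1}\bm{\overline{X}})$ and noting that $\bm{\overline{X}}\bm{P}_j^{-1}=\bm{\overline{X}}(\bm{P}_j^{-1}\bm{\overline{X}})\bm{\overline{X}}^{-1}$ is conjugate to $\bm{P}_j^{-1}\bm{\overline{X}}$, one gets $\big(\bm{\overline{X}}\operatorname{Log}(\bm{P}_j^{-1}\bm{\overline{X}})\big)^{\operatorname{H}}=\operatorname{Log}(\bm{\overline{X}}\bm{P}_j^{-1})\bm{\overline{X}}$; since $\operatorname{Log}_{\bm{\overline{X}}}(\bm{P}_j)$ is Hermitian, each term equals its Hermitian part, which is the symmetric expression displayed in \eqref{eq:AIRMInfluFun}. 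Substituting and taking the Frobenius norm yields $h=\norm{\bm{H}}$, completing the argument.
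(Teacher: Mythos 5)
Your proposal follows the same overall strategy as the paper: perturb the Karcher stationarity equation with the $\varepsilon$-weighted outlier terms, substitute $\bm{\widehat{X}}=\bm{\overline{X}}+\varepsilon\bm{H}+O(\varepsilon^2)$, and solve the $O(\varepsilon)$ equation. Your endgame is also correct, and in fact clarifies the statement: by conjugation-invariance of the principal logarithm, $\operatorname{Log}(\bm{\overline{X}}\bm{P}_j^{-1})\bm{\overline{X}}=\bm{\overline{X}}\operatorname{Log}(\bm{P}_j^{-1}\bm{\overline{X}})$, so the two terms symmetrized in \eqref{eq:AIRMInfluFun} are actually equal and the displayed $\bm{H}$ is just the (manifestly Hermitian) tangent vector $-\frac1n\sum_j\bm{\overline{X}}\operatorname{Log}(\bm{P}_j^{-1}\bm{\overline{X}})$, matching your $\bm{H}=\frac1n\sum_{j}\operatorname{Log}_{\bm{\overline{X}}}(\bm{P}_j)$. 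Where you diverge from the paper is the handling of the linearized operator $\frac1m\sum_i D_{\bm{X}}\operatorname{Log}_{\bm{X}}(\bm{X}_i)\big|_{\bm{\overline{X}}}$. The paper keeps this operator exact, via the Fr\'echet-derivative integral representation of $\operatorname{Log}$ (Lemmas \ref{lem:aa} and \ref{lem:ln}), but then extracts only the \emph{trace} of the linearized equation, under which the exact operator collapses to $\operatorname{tr}(\bm{\overline{X}}^{-1}\bm{H})$; it then \emph{chooses} the displayed $\bm{H}$ as a Hermitian solution of that single scalar constraint (``assuming the arbitrarity of $\bm{\overline{X}}$''). You instead replace the operator wholesale by $-\mathrm{Id}$, invoking the coincidence identity $D_{\bm{P}}\operatorname{Log}_{\bm{P}}(\bm{Q})\big|_{\bm{P}=\bm{Q}}[\bm{H}]=-\bm{H}$.

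Be aware that this last step is a genuine approximation, not an identity: the coincidence identity applies to none of the summands, since $\bm{\overline{X}}\neq\bm{X}_i$ in general, and the Jacobi-field corrections are quadratic in $\operatorname{Log}_{\bm{\overline{X}}}(\bm{X}_i)$ at leading order, so the stationarity condition $\sum_i\operatorname{Log}_{\bm{\overline{X}}}(\bm{X}_i)=\bm{0}$ does \emph{not} make them cancel upon averaging; your formula is exact only in special situations (e.g.\ $m=1$, where $\bm{\overline{X}}=\bm{X}_1$, or commuting data). You do flag this explicitly, and it is only fair to note that the paper's own proof makes an equivalent leap at exactly the same spot: the trace condition is necessary but far from sufficient to determine $\bm{H}$, and the subsequent choice is not forced by it. Both routes therefore reach \eqref{eq:AIRMInfluFun} with a comparable, explicitly acknowledged gap; yours has the merit of making the flat (zero-curvature) approximation transparent, while the paper's retains the exact Fr\'echet derivative for as long as possible before discarding all but its trace.
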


 \begin{proof}
See Appendix \ref{appen:B}.
\end{proof}

Now let  $F$ be a differentiable and strictly convex function and let us consider the corresponding  TBD mean  $\bm{\overline{X}}$ of $m$ HPD matrices $\{ \bm{X}_1, \bm{X}_2, \ldots, \bm{X}_m \}$, and TBD mean $\bm{\widehat{X}}$ of the contaminated HPD data obtained by adding $n$ outliers, i.e., HPD matrices $\{ \bm{P}_1, \bm{P}_2, \ldots, \bm{P}_n \}$.

Define $G(\bm{X})$ as the objective function to be minimized for the enlarged set of data, namely
\begin{equation*}
\begin{aligned}
G(\bm{X}) := (1-\varepsilon)\frac{1}{m}\sum_{i=1}^{m}\delta_F(\bm{X},\bm{X}_i) + \varepsilon\frac{1}{n}\sum_{j=1}^{n}\delta_F(\bm{X},\bm{P}_j).
\end{aligned}
\end{equation*}
Since $\bm{\widehat{X}}$ is the TBD mean of $m$ HPD matrices $\{ \bm{X}_1, \bm{X}_2, \ldots, \bm{X}_m \}$ and $n$ outliers $\{\bm{P}_1, \bm{P}_2, \ldots, \bm{P}_n\}$,  we have $\nabla G(\bm{\widehat{X}}) = \bm{0}$, i.e.,
\begin{equation}\label{eq:TBD1}
\begin{aligned}
&(1-\varepsilon)\frac{1}{m}\sum_{i=1}^{m} \frac{\nabla F(\bm{\widehat{X}})- \nabla F(\bm{X}_i)}{\sqrt{1+\norm{\nabla F(\bm{X}_i)}^2}} \\
&~~~~~~~~~~~~+ \varepsilon \frac{1}{n}\sum_{j=1}^{n} \frac{\nabla F(\bm{\widehat{X}})- \nabla F(\bm{P}_j)}{\sqrt{1+\norm{\nabla F(\bm{P}_j)}^2}} = \bm{0}.
\end{aligned}
\end{equation}
Similarly, we differentiate $\nabla G(\widehat{\bm{X}})=\bm{0}$ with respect to $\varepsilon$ at $\varepsilon=0$, leading to
\begin{equation}\label{eq:lp111}
\begin{aligned}
\frac{1}{m}\sum_{i=1}^m&\frac{1}{\sqrt{1+\norm{\nabla F(\bm{X}_i)}^2}}\frac{\operatorname{d}}{\operatorname{d}\!\varepsilon}\Big|_{\varepsilon=0}\nabla F(\widehat{\bm{X}})\\
&~~~~~~+\frac{1}{n}\sum_{j=1}^n\frac{\nabla F(\overline{\bm{X}})-\nabla F(\bm{P}_j)}{\sqrt{1+\norm{\nabla F(\bm{P}_j)}^2}}=\bm{0}.
\end{aligned}
\end{equation}
Taking $\widehat{\bm{X}}=\overline{\bm{X}}+\varepsilon \bm{H}+O(\varepsilon^2)$ into account will give us the influence function. We will now study the TSL, TLD and TVN means, respectively.

\begin{prop}
For $F(\bm{X}) = \frac{1}{2}\norm{ \bm{X} }^2$ corresponding to the TSL, we have $\nabla F(\bm{X}) = \bm{X}$ and hence
 $$\frac{\operatorname{d}}{\operatorname{d}\!\varepsilon}\Big|_{\varepsilon=0}\nabla F(\widehat{\bm{X}})=\bm{H}(\overline{\bm{X}},\bm{P}_1,\bm{P}_2,\ldots,\bm{P}_n).$$
Therefore, the influence function of the TSL mean  is given by $h=\norm{\bm{H}}$ where
\begin{equation}\label{eq:influ_of_TSL}
\bm{H}= -\frac{m}{n}\left(\sum_{i=1}^{m} \frac{1}{\sqrt{1+\norm{ \bm{X}_i }^2}}\right)^{-1} \sum_{j=1}^{n} \frac{\bm{\overline{X}}-\bm{P}_j}{\sqrt{1+\norm{ \bm{P}_j }^2}}.
\end{equation}
\end{prop}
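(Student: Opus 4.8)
The plan is to specialize the general first-order relation \eqref{eq:lp111} to the quadratic potential and then read off $\bm{H}$ by inspection. Since $F(\bm{X})=\frac12\norm{\bm{X}}^2$ has the linear gradient $\nabla F(\bm{X})=\bm{X}$, the composition $\nabla F(\widehat{\bm{X}})=\widehat{\bm{X}}$ is itself linear in $\widehat{\bm{X}}$. Substituting the ansatz $\widehat{\bm{X}}=\overline{\bm{X}}+\varepsilon\bm{H}+O(\varepsilon^2)$ and differentiating at $\varepsilon=0$ therefore gives immediately $\frac{\operatorname{d}}{\operatorname{d}\!\varepsilon}\big|_{\varepsilon=0}\nabla F(\widehat{\bm{X}})=\bm{H}$, which is the auxiliary identity claimed in the statement; no chain-rule or Fr\'echet-derivative computation is needed because $\nabla F$ is the identity map.

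Next I would insert this identity, together with $\nabla F(\bm{X}_i)=\bm{X}_i$, $\nabla F(\overline{\bm{X}})=\overline{\bm{X}}$ and $\nabla F(\bm{P}_j)=\bm{P}_j$, into \eqref{eq:lp111}. The first sum collapses to $\bm{H}$ multiplied by the strictly positive scalar $c:=\frac1m\sum_{i=1}^m(1+\norm{\bm{X}_i}^2)^{-1/2}$, while the second sum becomes the fixed matrix $\frac1n\sum_{j=1}^n(\overline{\bm{X}}-\bm{P}_j)(1+\norm{\bm{P}_j}^2)^{-1/2}$. Thus \eqref{eq:lp111} reduces to the scalar-coefficient linear equation $c\,\bm{H}+\frac1n\sum_{j=1}^n(\overline{\bm{X}}-\bm{P}_j)(1+\norm{\bm{P}_j}^2)^{-1/2}=\bm{0}$ in the single matrix unknown $\bm{H}$.

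Finally, because $c>0$ the equation is solved simply by dividing through by $c$, which yields \eqref{eq:influ_of_TSL}; the influence function is then $h=\norm{\bm{H}}$ by definition. The only points to check are that $c$ is indeed nonzero (automatic, as each summand is positive) and that no surviving $O(\varepsilon)$ cross terms spoil the differentiation at $\varepsilon=0$, which again holds because $\nabla F$ is linear. I do not expect any genuine obstacle here: the step that is delicate for the TLD and TVN means — evaluating the derivative of the nonlinear gradient map $\bm{X}\mapsto\nabla F(\bm{X})$ along the perturbation $\bm{H}$ — degenerates for the TSL into the trivial identity above, so the argument is purely algebraic.
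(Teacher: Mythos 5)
Your proposal is correct and follows essentially the same route as the paper: the paper also specializes the general first-order condition \eqref{eq:lp111} to $\nabla F(\bm{X})=\bm{X}$, notes that linearity of the gradient map gives $\frac{\operatorname{d}}{\operatorname{d}\!\varepsilon}\big|_{\varepsilon=0}\nabla F(\widehat{\bm{X}})=\bm{H}$, and solves the resulting scalar-coefficient linear equation for $\bm{H}$. Your verification that the coefficient $c$ is strictly positive (hence invertible) is a small point the paper leaves implicit, but otherwise the arguments coincide.
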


\begin{prop}
For $F(\bm{X})=-\ln \det\bm{X}$ corresponding to the TLD divergence, we have $\nabla F(\bm{X})=-\bm{X}^{-1}$
and
$$\frac{\operatorname{d}}{\operatorname{d}\!\varepsilon}\Big|_{\varepsilon=0}\nabla F(\widehat{\bm{X}})=\overline{\bm{X}}^{-1}\bm{H}\overline{\bm{X}}^{-1}.$$
 The influence function of the TLD mean is given by $h=\norm{\bm{H}}$ where
\begin{equation}\label{eq:influ_of_TLD}
\bm{H} =  \frac{m}{n}\left(\sum_{i=1}^{m}\frac{1}{\sqrt{1+\norm{ \bm{X}_i^{-1} }^2}}\right)^{-1} \sum_{j=1}^{n}\frac{\bm{\overline{X}}\left(\overline{\bm{X}}^{-1}-\bm{P}_j^{-1}\right)\overline{\bm{X}}}{\sqrt{1+\norm{ \bm{P}_j^{-1} }^2}}.
\end{equation}
\end{prop}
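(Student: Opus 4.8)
The plan is to specialize the general first-order optimality condition \eqref{eq:lp111} to the potential $F(\bm{X})=-\ln\det\bm{X}$ and then solve the resulting equation for the perturbation matrix $\bm{H}$. First I would record the gradient, which was already obtained in the proof of the TLD proposition as $\nabla F(\bm{X})=-\bm{X}^{-\operatorname{H}}$; restricted to the HPD manifold $\mathscr{P}(N,\mathbb{C})$, where $\bm{X}^{\operatorname{H}}=\bm{X}$, this becomes $\nabla F(\bm{X})=-\bm{X}^{-1}$. Because the Frobenius norm is insensitive to an overall sign, this immediately gives $\norm{\nabla F(\bm{X}_i)}=\norm{\bm{X}_i^{-1}}$ and $\norm{\nabla F(\bm{P}_j)}=\norm{\bm{P}_j^{-1}}$, so every weighting factor in \eqref{eq:lp111} turns into one of the quantities displayed in \eqref{eq:influ_of_TLD}.

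The only genuinely new computation is the derivative $\frac{\operatorname{d}}{\operatorname{d}\!\varepsilon}\big|_{\varepsilon=0}\nabla F(\widehat{\bm{X}})$. Writing $\widehat{\bm{X}}=\overline{\bm{X}}+\varepsilon\bm{H}+O(\varepsilon^2)$ and using the standard identity $\frac{\operatorname{d}}{\operatorname{d}\!\varepsilon}\widehat{\bm{X}}^{-1}=-\widehat{\bm{X}}^{-1}\big(\frac{\operatorname{d}}{\operatorname{d}\!\varepsilon}\widehat{\bm{X}}\big)\widehat{\bm{X}}^{-1}$, evaluated at $\varepsilon=0$ where $\widehat{\bm{X}}=\overline{\bm{X}}$ and $\frac{\operatorname{d}}{\operatorname{d}\!\varepsilon}\widehat{\bm{X}}=\bm{H}$, yields $\frac{\operatorname{d}}{\operatorname{d}\!\varepsilon}\big|_{\varepsilon=0}(-\widehat{\bm{X}}^{-1})=\overline{\bm{X}}^{-1}\bm{H}\,\overline{\bm{X}}^{-1}$. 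This is precisely the intermediate claim asserted in the proposition, and it is the linchpin of the whole argument.

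I would then substitute this expression into the first sum of \eqref{eq:lp111} and replace $\nabla F(\overline{\bm{X}})-\nabla F(\bm{P}_j)=-\overline{\bm{X}}^{-1}+\bm{P}_j^{-1}$ in the second sum. Collecting terms, the condition reads $\Big(\frac1m\sum_i (1+\norm{\bm{X}_i^{-1}}^2)^{-1/2}\Big)\,\overline{\bm{X}}^{-1}\bm{H}\,\overline{\bm{X}}^{-1}=\frac1n\sum_j (\overline{\bm{X}}^{-1}-\bm{P}_j^{-1})(1+\norm{\bm{P}_j^{-1}}^2)^{-1/2}$, where the sign flip relative to the TSL case arises because moving the negative gradient difference to the right-hand side restores a positive expression. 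Since $\bm{H}$ sits sandwiched between two copies of $\overline{\bm{X}}^{-1}$, the equation is solved not by an ordinary inversion but by a \emph{congruence}: multiplying on both the left and the right by $\overline{\bm{X}}$ (invertible, being HPD) isolates $\bm{H}$, produces the factor $\overline{\bm{X}}(\overline{\bm{X}}^{-1}-\bm{P}_j^{-1})\overline{\bm{X}}$ in the numerator, and inverts the leading scalar to give $\frac{m}{n}\big(\sum_i (1+\norm{\bm{X}_i^{-1}}^2)^{-1/2}\big)^{-1}$. This is exactly \eqref{eq:influ_of_TLD}, and taking $h=\norm{\bm{H}}$ finishes the proof.

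I expect the congruence step to be the point deserving the most care: it is tempting to treat $\overline{\bm{X}}^{-1}\bm{H}\,\overline{\bm{X}}^{-1}$ as a scalar multiple of $\bm{H}$ and simply divide, whereas the correct operation is conjugation by $\overline{\bm{X}}$ on both sides. The accompanying sign bookkeeping (so that the final formula carries no leading minus, unlike the TSL influence function) is the other spot where a slip is easy. Everything else is a routine specialization that parallels the TSL derivation immediately above.
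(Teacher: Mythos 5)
Your proposal is correct and follows essentially the same route as the paper: specialize the linearized stationarity condition \eqref{eq:lp111} to $F(\bm{X})=-\ln\det\bm{X}$, use $\frac{\operatorname{d}}{\operatorname{d}\!\varepsilon}\big|_{\varepsilon=0}(-\widehat{\bm{X}}^{-1})=\overline{\bm{X}}^{-1}\bm{H}\overline{\bm{X}}^{-1}$ (the very fact the paper records right after the proposition), and solve for $\bm{H}$ by conjugating with $\overline{\bm{X}}$. Your sign bookkeeping (no leading minus, unlike the TSL case) and the congruence step both match the paper's implicit derivation exactly.
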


Note that we used the fact
\begin{equation*}
\begin{aligned}
\bm{0}&=\frac{\operatorname{d}}{\operatorname{d}\!\varepsilon}\bm{I}=\frac{\operatorname{d}}{\operatorname{d}\!\varepsilon}\left(\widehat{\bm{X}}\widehat{\bm{X}}^{-1}\right)=\frac{\operatorname{d}}{\operatorname{d}\!\varepsilon}\widehat{\bm{X}}\widehat{\bm{X}}^{-1}+\widehat{\bm{X}}\frac{\operatorname{d}}{\operatorname{d}\!\varepsilon}\widehat{\bm{X}}^{-1}
\end{aligned}
\end{equation*}
and hence
\begin{equation*}
\frac{\operatorname{d}}{\operatorname{d}\!\varepsilon}\widehat{\bm{X}}^{-1}=-\widehat{\bm{X}}^{-1}\frac{\operatorname{d}}{\operatorname{d}\!\varepsilon}\widehat{\bm{X}}\widehat{\bm{X}}^{-1}.
\end{equation*}

\begin{prop}
When $F(\bm{X})=\operatorname{tr}(\bm{X}\operatorname{Log}\bm{X}-\bm{X})$, the corresponding divergence is the TVN divergence. We have
$\nabla F(\bm{X})=\operatorname{Log}\bm{X}$
and
$$\frac{\operatorname{d}}{\operatorname{d}\!\varepsilon}\Big|_{\varepsilon=0}\nabla F(\widehat{\bm{X}})
=\int_0^1[(\overline{\bm{X}}-\bm{I})s+\bm{I}]^{-1}\bm{H}[(\overline{\bm{X}}-\bm{I})s+\bm{I}]^{-1}\operatorname{d}\!s.$$
Substituting $F(\bm{X})$ into \eqref{eq:lp111} and taking trace on both sides give us
\begin{equation*}
\begin{aligned}
\operatorname{tr}(\overline{\bm{X}}^{-1}\bm{H})&=-\frac{m}{n}\left(\sum_{i=1}^{m}\frac{1}{\sqrt{1+\norm{ \operatorname{Log}\bm{X}_i }^2}}\right)^{-1}\\
&~~~~\times \sum_{j=1}^{n}\frac{\operatorname{tr}\left(\operatorname{Log}\overline{\bm{X}}-\operatorname{Log}\bm{{P}_j}\right)}{\sqrt{1+\norm{ \operatorname{Log}\bm{P}_j }^2}}.
\end{aligned}
\end{equation*}
Assuming the arbitrarity of  $\overline{\bm{X}}$, we obtain
the influence function of the TVN mean as $h=\norm{\bm{H}}$ in which we choose
\begin{equation}\label{eq:influ_of_TVN}
\begin{aligned}
\bm{H}&=-\frac{m}{n}\left(\sum_{i=1}^{m}\frac{1}{\sqrt{1+\norm{ \operatorname{Log}\bm{X}_i }^2}}\right)^{-1}\\
&~~~~\times \sum_{j=1}^{n}\frac{\overline{\bm{X}}^{1/2}\left(\operatorname{Log}\overline{\bm{X}}-\operatorname{Log}\bm{{P}_j}\right)\overline{\bm{X}}^{1/2}}{\sqrt{1+\norm{ \operatorname{Log}\bm{P}_j }^2}}.
\end{aligned}
\end{equation}
\end{prop}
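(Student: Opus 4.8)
The plan is to follow exactly the template already used for the TSL and TLD means: identify $\nabla F$, compute its first-order variation along the perturbation $\widehat{\bm{X}}=\overline{\bm{X}}+\varepsilon\bm{H}+O(\varepsilon^2)$, substitute into the differentiated stationarity condition \eqref{eq:lp111}, and extract $\bm{H}$. First I would record that $\nabla F(\bm{X})=\operatorname{Log}\bm{X}$, which is the matrix analogue of $\frac{\operatorname{d}}{\operatorname{d}\!x}(x\ln x-x)=\ln x$ and follows from the same Fr\'echet-derivative computation used for Proposition \ref{prop:tvn}, using $\frac{\operatorname{d}}{\operatorname{d}\!\varepsilon}\big|_{0}\operatorname{tr}((\bm{X}+\varepsilon\bm{Y})\operatorname{Log}(\bm{X}+\varepsilon\bm{Y}))=\operatorname{tr}(\bm{Y}\operatorname{Log}\bm{X})+\operatorname{tr}\bm{Y}$ together with the Hermiticity of $\operatorname{Log}\bm{X}$ for HPD $\bm{X}$.

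The crux is the variation of $\operatorname{Log}\widehat{\bm{X}}$. Here I would invoke the standard integral representation of the Fr\'echet derivative of the matrix logarithm (see \cite{Hig2008}): for an HPD base point $\overline{\bm{X}}$ and direction $\bm{H}$,
\[
D\operatorname{Log}(\overline{\bm{X}})[\bm{H}]=\int_0^1[(\overline{\bm{X}}-\bm{I})s+\bm{I}]^{-1}\bm{H}[(\overline{\bm{X}}-\bm{I})s+\bm{I}]^{-1}\operatorname{d}\!s.
\]
Since $\frac{\operatorname{d}}{\operatorname{d}\!\varepsilon}\big|_{0}\widehat{\bm{X}}=\bm{H}$, the chain rule yields precisely the claimed formula for $\frac{\operatorname{d}}{\operatorname{d}\!\varepsilon}\big|_{0}\nabla F(\widehat{\bm{X}})$. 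I expect this to be the main technical step, since it rests on the nontrivial fact that the logarithm's derivative has this symmetric resolvent form rather than the naive $\overline{\bm{X}}^{-1}\bm{H}$ one might guess by analogy with the scalar case.

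Next I would substitute $\nabla F=\operatorname{Log}(\cdot)$ and this integral into \eqref{eq:lp111} and take the trace of both sides. By cyclicity the two resolvent factors combine, and the remaining scalar integral reduces to $\int_0^1[(\overline{\bm{X}}-\bm{I})s+\bm{I}]^{-2}\operatorname{d}\!s=\overline{\bm{X}}^{-1}$, which I would verify by setting $\bm{M}:=\overline{\bm{X}}-\bm{I}$ (invertible when $\overline{\bm{X}}\neq\bm{I}$), noting that $\bm{M}$ commutes with the resolvent so that $[\bm{M}s+\bm{I}]^{-2}=-\bm{M}^{-1}\frac{\operatorname{d}}{\operatorname{d}\!s}[\bm{M}s+\bm{I}]^{-1}$, and integrating this total derivative to obtain $\bm{M}^{-1}(\bm{I}-\overline{\bm{X}}^{-1})=\overline{\bm{X}}^{-1}$. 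Consequently the first sum contributes $\frac1m\sum_i(1+\norm{\operatorname{Log}\bm{X}_i}^2)^{-1/2}\operatorname{tr}(\overline{\bm{X}}^{-1}\bm{H})$, and isolating this scalar yields the displayed trace identity.

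Finally, the trace identity constrains only the scalar $\operatorname{tr}(\overline{\bm{X}}^{-1}\bm{H})$, not the full matrix $\bm{H}$, so the last step is to exhibit the Hermitian, congruence-type representative \eqref{eq:influ_of_TVN} and check consistency: pairing it with $\overline{\bm{X}}^{-1}$ under the trace and using cyclicity, the factor $\overline{\bm{X}}^{-1}\overline{\bm{X}}^{1/2}(\operatorname{Log}\overline{\bm{X}}-\operatorname{Log}\bm{P}_j)\overline{\bm{X}}^{1/2}$ collapses to $\operatorname{Log}\overline{\bm{X}}-\operatorname{Log}\bm{P}_j$ under the trace, reproducing the right-hand side. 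The influence function is then $h=\norm{\bm{H}}$. This ``de-tracing'' is the least canonical part of the argument; the remark on the arbitrariness of $\overline{\bm{X}}$ should be read as selecting this natural symmetric solution of the scalar constraint, so that all genuine rigor resides in the derivative-of-$\operatorname{Log}$ and integral-evaluation steps above.
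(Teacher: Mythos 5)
Your proposal is correct and takes essentially the same route as the paper: the same gradient $\nabla F(\bm{X})=\operatorname{Log}\bm{X}$, the same integral representation of the Fr\'echet derivative of the matrix logarithm (the paper's Lemma \ref{lem:ln}), the same trace-plus-cyclicity reduction using $\int_0^1[(\overline{\bm{X}}-\bm{I})s+\bm{I}]^{-2}\operatorname{d}\!s=\overline{\bm{X}}^{-1}$ (the paper's Lemma \ref{lem:aa}), and the same final selection of the symmetric congruence representative $\overline{\bm{X}}^{1/2}(\operatorname{Log}\overline{\bm{X}}-\operatorname{Log}\bm{P}_j)\overline{\bm{X}}^{1/2}$ checked against the scalar constraint. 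Your explicit observation that the trace identity determines only $\operatorname{tr}(\overline{\bm{X}}^{-1}\bm{H})$, so that \eqref{eq:influ_of_TVN} is a chosen consistent representative, is exactly the content of the paper's ``we choose'' step.
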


It is obvious that the influence function of AIRM mean (see Eq. \eqref{eq:AIRMInfluFun}), that is,
\begin{equation}
h(\overline{\bm{X}},\bm{P}_1,\bm{P}_2,\ldots,\bm{P}_n)=\norm{ \frac{1}{n}\sum_{j=1}^{n}\bm{\overline{X}}  \operatorname{Log}(\bm{P}_j^{-1}\bm{\overline{X}})},
 \end{equation}
is not bounded with respect to its arguments $\bm{P}_1,\bm{P}_2,\ldots,\bm{P}_n$. However,  the influence functions of the TSL, TLD and TVN means are all bounded. It suffices to show that the second term of \eqref{eq:lp111} is bounded by noting that the first term of \eqref{eq:lp111} is independent of  $\bm{P}_1,\bm{P}_2,\ldots,\bm{P}_n$. We have
  \begin{equation*}
  \begin{aligned}
  \norm{\frac{1}{n}\sum_{j=1}^n\frac{\nabla F(\overline{\bm{X}})-\nabla F(\bm{P}_j)}{\sqrt{1+\norm{\nabla F(\bm{P}_j)}^2}}}&\leq \frac{1}{n} \sum_{j=1}^n\frac{\norm{\nabla F(\overline{\bm{X}})}+\norm{\nabla F(\bm{P}_j)}}{\sqrt{1+\norm{\nabla F(\bm{P}_j)}^2}}\\
  &\leq \frac{1}{n} \sum_{j=1}^n\left(\norm{\nabla F(\overline{\bm{X}})}+1\right)\\
  &=\norm{\nabla F(\overline{\bm{X}})}+1.
  \end{aligned}
  \end{equation*}
Consequently, the three influence functions are all bounded with respect to $\bm{P}_1,\bm{P}_2,\ldots,\bm{P}_n$.

\subsection{Robustness Analysis}

To validate the advantage of the robustness of TBD mean, numerical simulations are given under different number of sample data and outliers. The results are compared with the AIRM mean and SCM. In the simulation, the sample data is generated according to an $N$-dimensional complex circular Gaussian distribution with zero-mean and a known covariance matrix $\bm{\Sigma}$ given by
\begin{equation}
\bm{\Sigma} = \bm{\Sigma_0} + \bm{I},
\end{equation}
where
\begin{equation*}
 \bm{\Sigma_0}(i,k) = \sigma_c^2\rho^{| i-k|}e^{\operatorname{i}2\pi f_d(i-k)},\quad  i,k = 1,2,\ldots, N.
\end{equation*}
 Here, $\rho$ is the one-lag correlation coefficient,  $\sigma_c$ denotes the clutter-to-noise power ratio, and $f_d$ is the clutter normalized Doppler frequency. In this part, we set $\rho=0.9$, $\sigma_c^2=20 dB$, $f_d=0.2$, and the dimension of covariance matrix $N = 8$.

 We first generate $K$ number of sample data, each of that is an $N$-dimensional vector. The TBD mean and the AIRM mean $\bm{\overline{R}}$ are computed according to $K$ HPD matrices derived by the generated $K$ sample data, respectively. The HPD matrix can be computed via Eq. \eqref{eq:AR1}. Then, the offset error between the real matrix $\bm{\Sigma}$ and the estimated matrix $\bm{\overline{R}}$ is computed, which is defined as $$\norm{ \bm{\overline{R}}- \bm{\Sigma}}\Big{/}\norm{\bm{\Sigma}}.$$ The computation is repeated $1000$ times for each value and the averaged results are shown in Fig. \ref{fig:Robustnes_Of_Num_Of_Sample}.
\begin{figure}[H]
\centering
{\includegraphics[width=8.5cm]{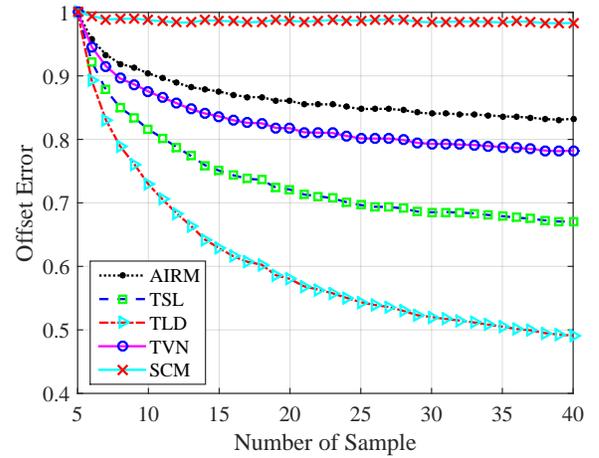}}
\caption{Offset errors of TBD and AIRM means under different number of sample data}
\label{fig:Robustnes_Of_Num_Of_Sample}
\end{figure}

In Fig. \ref{fig:Robustnes_Of_Num_Of_Sample}, offset errors of all the four geometric means decrease with the number of sample data while the offset error of the SCM is the least sensitive. It implies that estimation error decreases fast for geometric means when the number of sample data increases, while for the SCM, this effect is not obvious. It is also observed that the offset errors of geometric means are smaller than that of the SCM. Namely, the geometric means are more robust to the number of sample data than the SCM. Moreover, the TBD means have better robustness than the AIRM mean, and the TLD has the best robustness followed by the TSL mean.

To show  robustness of the TBD mean to outliers, we add $M$ outliers into the generated $50$ sample data. The outlier is modeled as $\bm{x}=\alpha \bm{p}+\bm{c}$, where $\bm{p}$ is the steering vector, and $\alpha$ denotes the amplitude coefficient. The signal/interference to clutter ratio (SCR/ICR) is defined as
\begin{equation}
\textsc{SCR} = |\alpha|^2\bm{p}^{\operatorname{H}}\bm{R}^{-1}\bm{p}.
\end{equation}
Here, the SCR is set to $40$ dB. Values of the influence functions for the TSL, TLD and TVN means, as well as for the AIRM mean can be computed using Eqs. \eqref{eq:influ_of_TSL},  \eqref{eq:influ_of_TLD},  \eqref{eq:influ_of_TVN} and  \eqref{eq:AIRMInfluFun}, respectively. Value of the influence function of the SCM can be calculated in a similar manner. Again, we repeat the computation $1000$ times and take their average. The number of outliers $M$ varies from $1$ to $40$.

Fig. \ref{fig:Robustnes_Of_Num_Of_Outlier} shows the influence values of the TBD means, the AIRM mean and the SCM under different number of outliers. Obviously, the geometric means have lower influence values than the SCM and the influences values of the TBD means are smaller than that of the AIRM mean. It implies that the geometric means are more robust to outliers than the SCM and the TBD means have better robustness than the AIRM mean. In addition, the TLD mean is similar  to the TSL mean about robustness and both of them are more robust to outliers compared with the TVN mean.

\begin{figure}[H]
\centering
{\includegraphics[width=8.2cm]{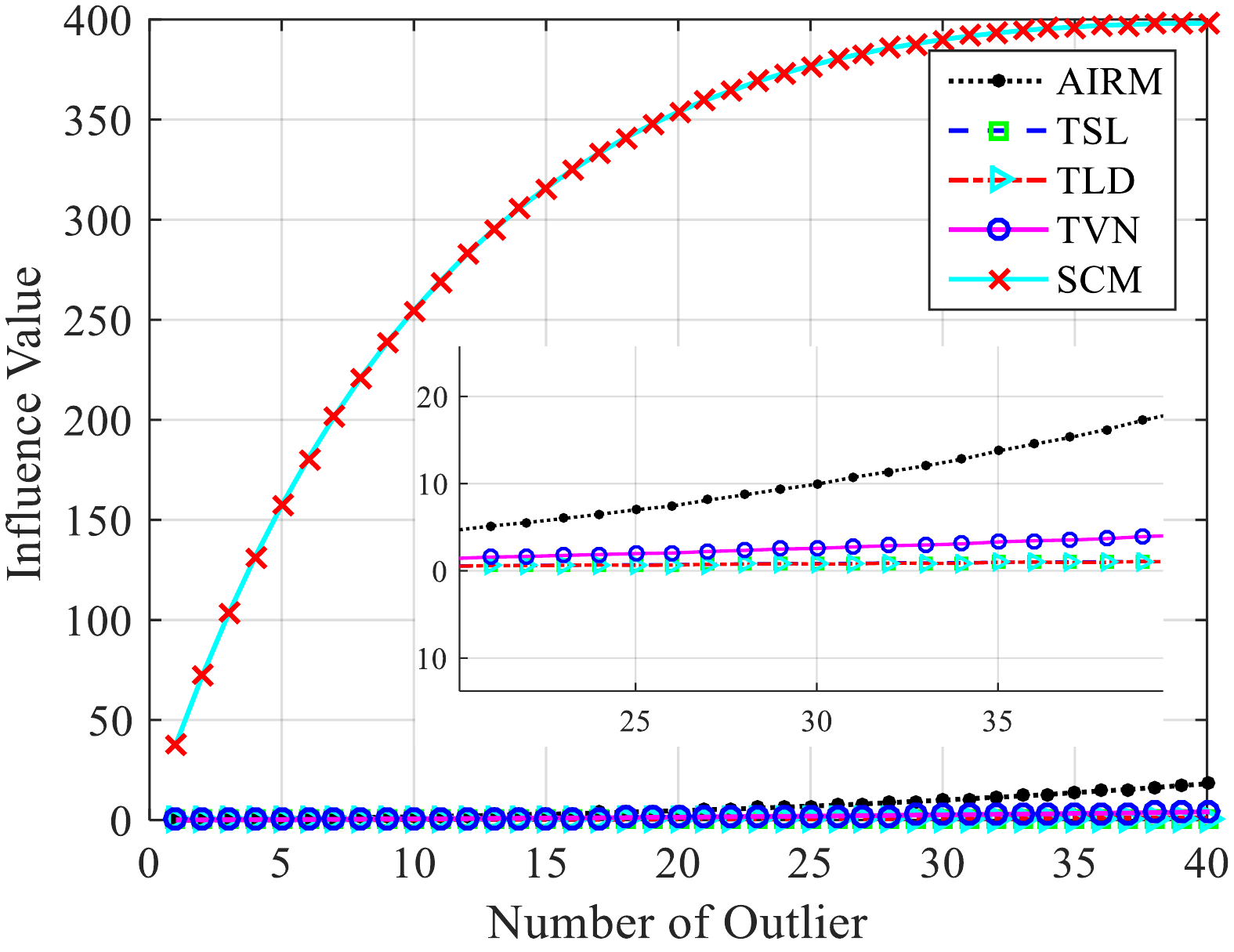}}
\caption{Influence values of TBD and AIRM means under different number of outliers}
\label{fig:Robustnes_Of_Num_Of_Outlier}
\end{figure}

\section{Simulation results}

In order to validate the  performances of the proposed TBD-MIG detector in nonhomogeneous clutter, numerical simulations are studied in this section. For comparison purposes, we also show the performances of AIRM-MIG detector and AMF. To decrease the computational load, the probability of false alarm ($P_{fa}$) is set to $10^{-3}$. The detection thresholds and probabilities of detection ($P_d$) are derived by $100/P_{fa}$ and $1000$ independent trials, respectively. For the sake of clarification, the AMF,  and the AIRM-MIG and TBD-MIG detectors  are repeated as follows:
\begin{equation*}
\begin{aligned}
T_{AMF} &= \frac{| \bm{x}^{\operatorname{H}}\bm{R}_{SCM}^{-1}\bm{p} |^2}{\bm{p}^{\operatorname{H}}\bm{R}_{SCM}^{-1}\bm{p}},  \\
T_{AIRM} &= \norm{ \operatorname{Log}(\bm{X}^{-1}\bm{R}_{AIRM}) },\\
T_{TSL} &= \frac{\norm{ \bm{X} - \bm{R}_{TSL} }^2}{\sqrt{1+\norm{ \bm{R}_{TSL} }^2}}, \\
T_{TLD} &= \frac{\ln\det(\bm{R}_{TLD}\bm{X}^{-1})+\operatorname{tr}(\bm{R}_{TLD}^{-1}\bm{X} - \bm{I})}{\sqrt{1+\norm{ \bm{R}_{TLD}^{-1} }^2}},  \\
T_{TVN} &= \frac{\operatorname{tr}\left(\bm{X}\operatorname{Log}\bm{X} - \bm{X}\operatorname{Log}\bm{R}_{TVN} - \bm{X} + \bm{R}_{TVN}\right)}{\sqrt{1+\norm{ \operatorname{Log}\bm{R}_{TVN} }^2}},
\end{aligned}
\end{equation*}
where $\bm{x}$ denotes the sample data, and $\bm{X}$ is an HPD matrix estimated by $\bm{x}$.

We consider two kinds of nonhomogeneous environments for target detection: one is the Gaussian clutter, while the other is the compound-Gaussian clutter where the texture component is assumed to be a Gamma distribution with the scale parameter $s$ and the shape parameter $v$. Two interferences are injected into clutter data. Values of the parameters used in the simulations, including the target, clutter and interference normalized Doppler frequencies $f_d,f_c,f_{in}$, are collected in Table \ref{Tab::paremater}.

\begin{table}[H]
  \centering
  \caption{Parameter values  used in the simulations}\label{Tab::paremater}
    \begin{tabular}{ccccccc}
		\toprule  
		$N$ &ICR (dB)&$f_c$&$f_d$&$f_{in}$&$s$&$v$ \\
		\midrule  
		8&20&0.1&0.2&0.22&1&3 \\
		\bottomrule  
    \end{tabular}
\end{table}

\begin{figure*}[h]
\centering
\subfigure[Gaussian  clutter, $K=8, N=8$] {\includegraphics[width=8.5cm,angle=0]{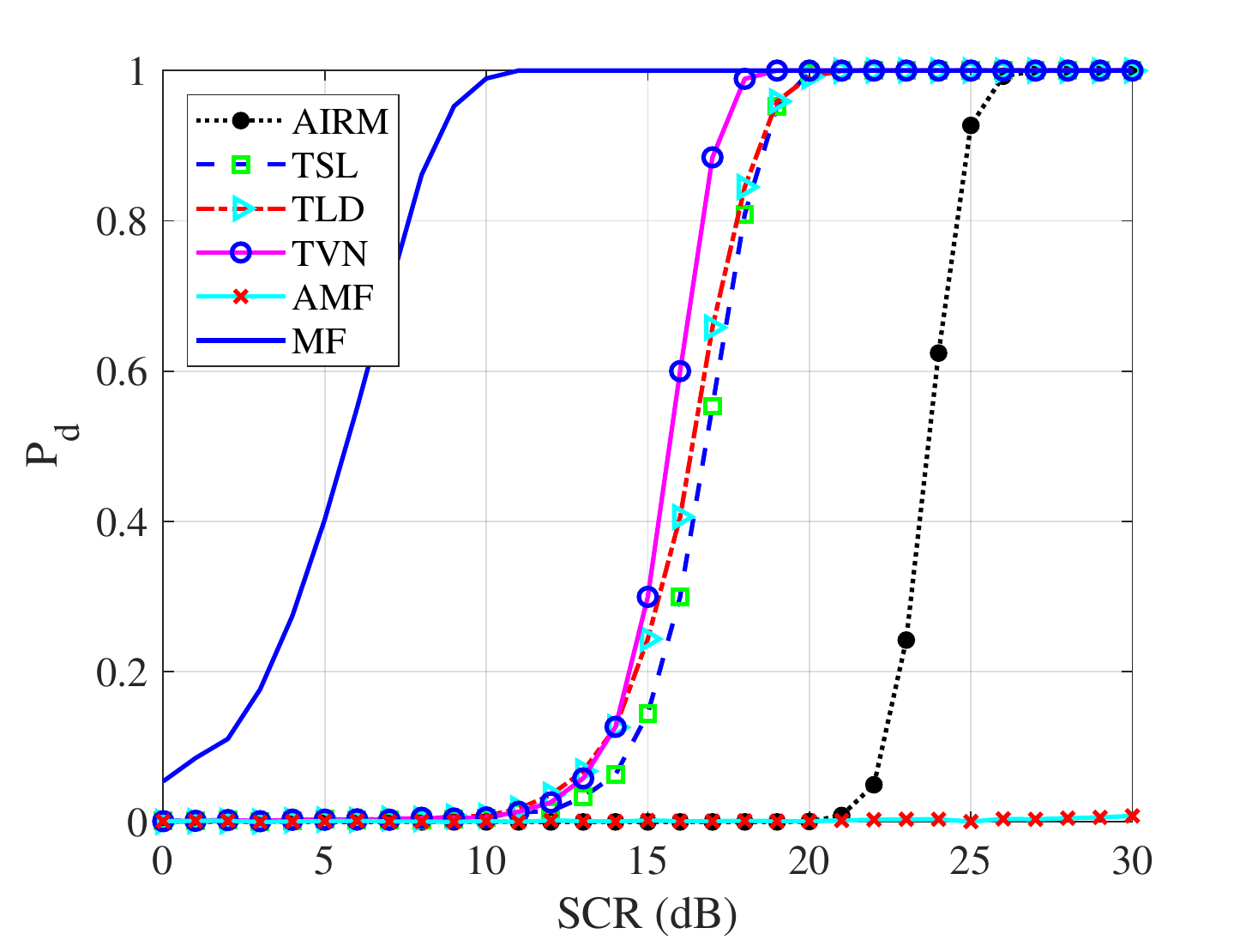}}
\subfigure[Non-Gaussian  clutter, $K=8, N=8$] {\includegraphics[width=8.5cm,angle=0]{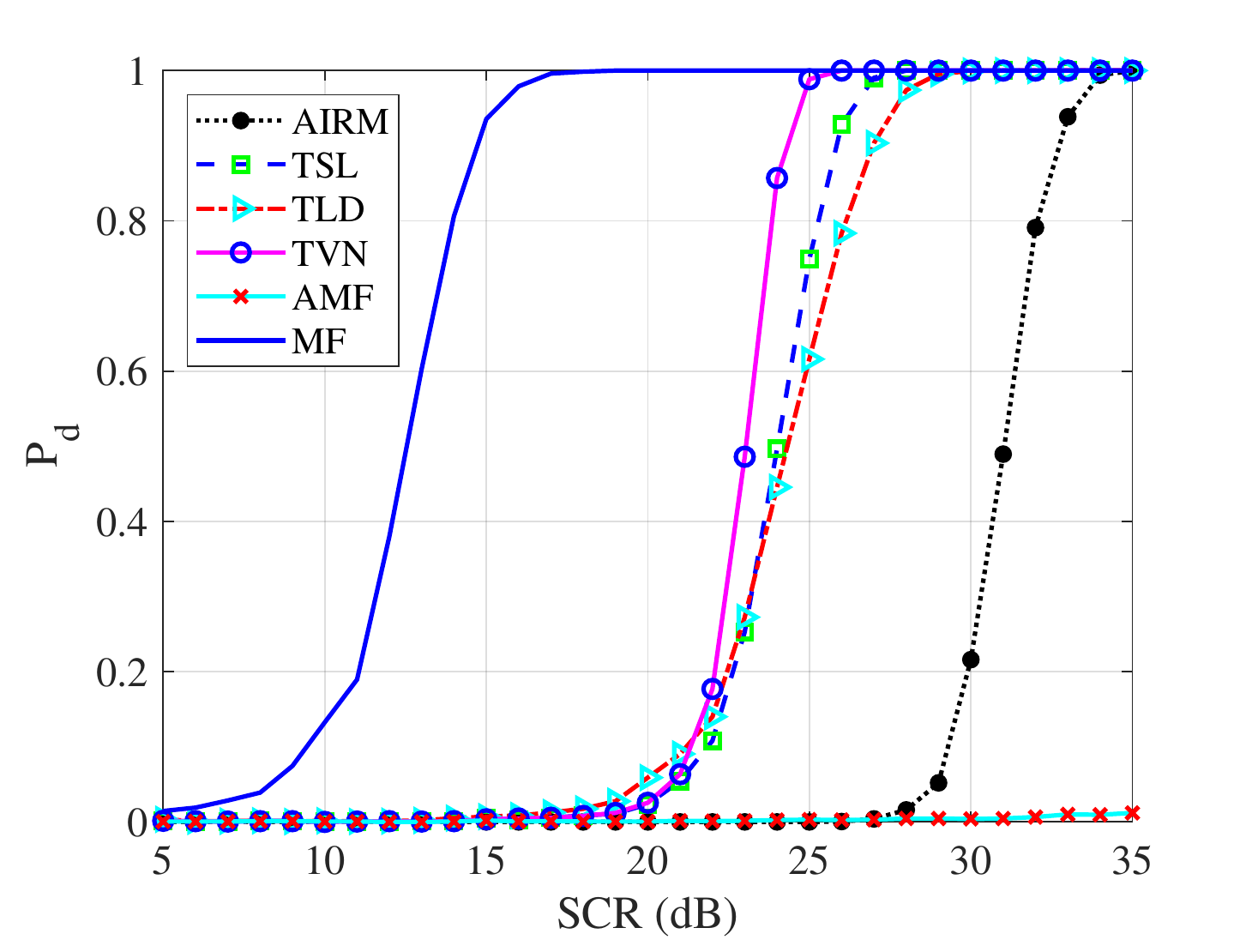}}
\subfigure[Gaussian clutter, $K=12, N=8$] {\includegraphics[width=8.5cm,angle=0]{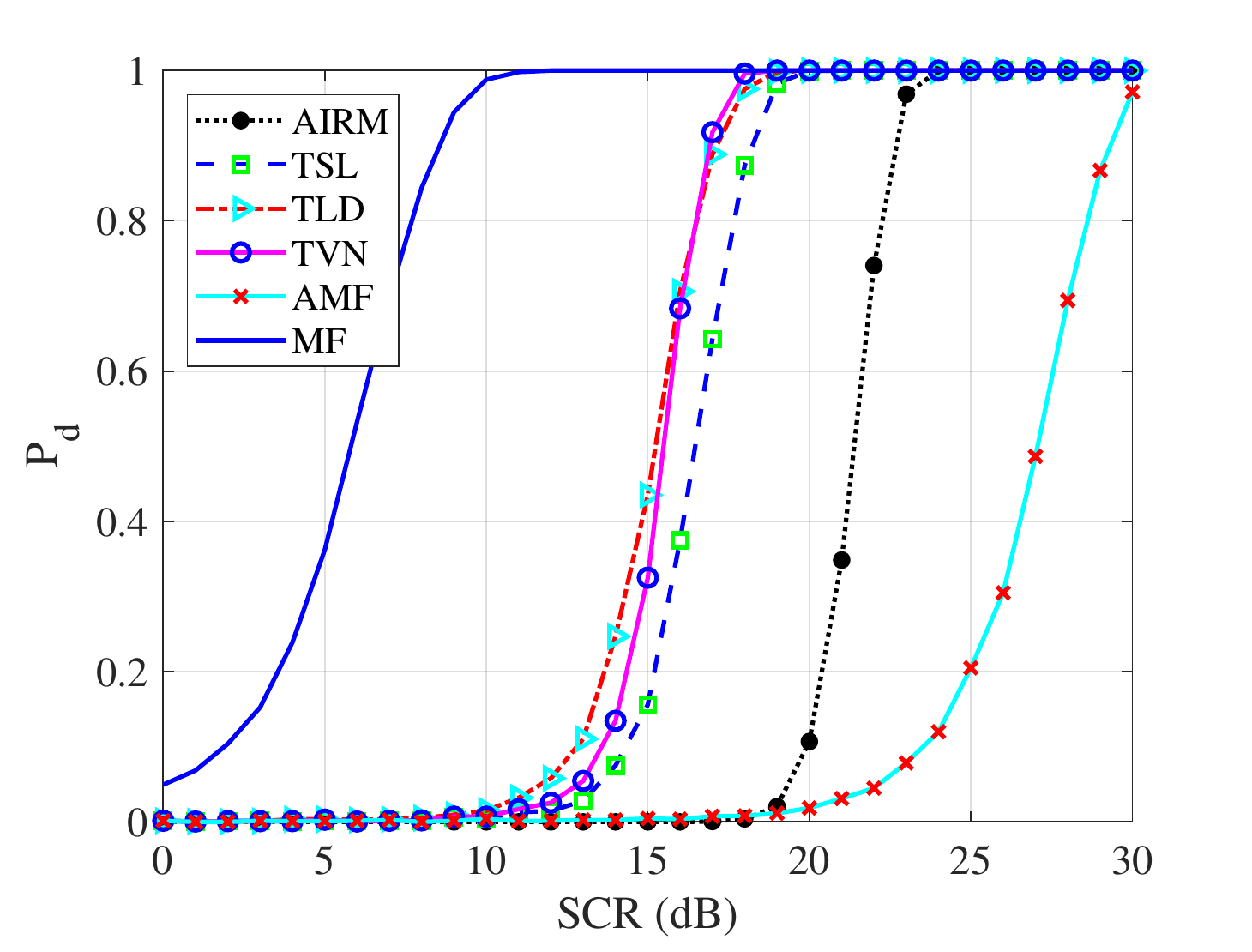}}
\subfigure[Non-Gaussian  clutter, $K=12, N=8$] {\includegraphics[width=8.5cm,angle=0]{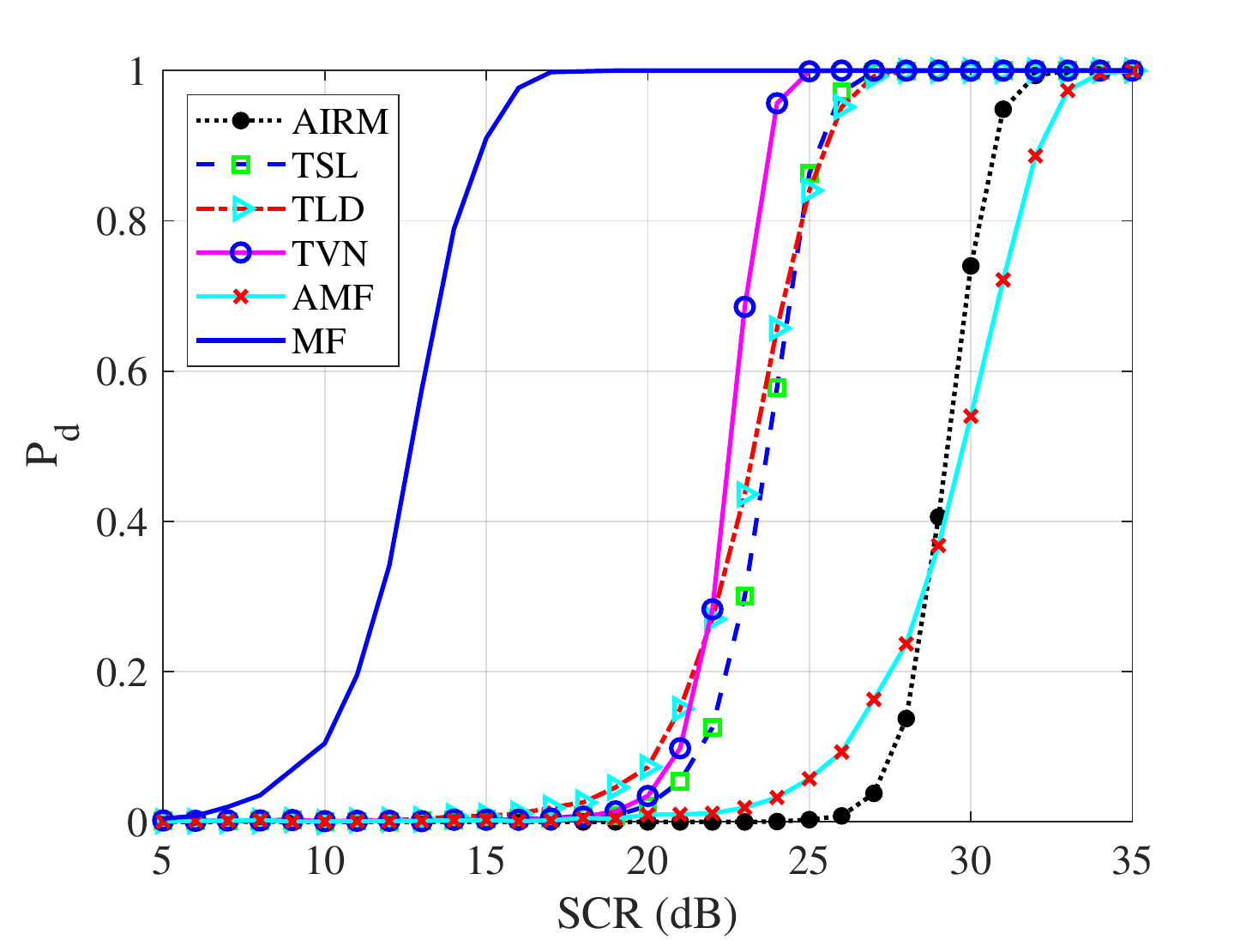}}
\subfigure[Gaussian  clutter, $K=16, N=8$] {\includegraphics[width=8.5cm,angle=0]{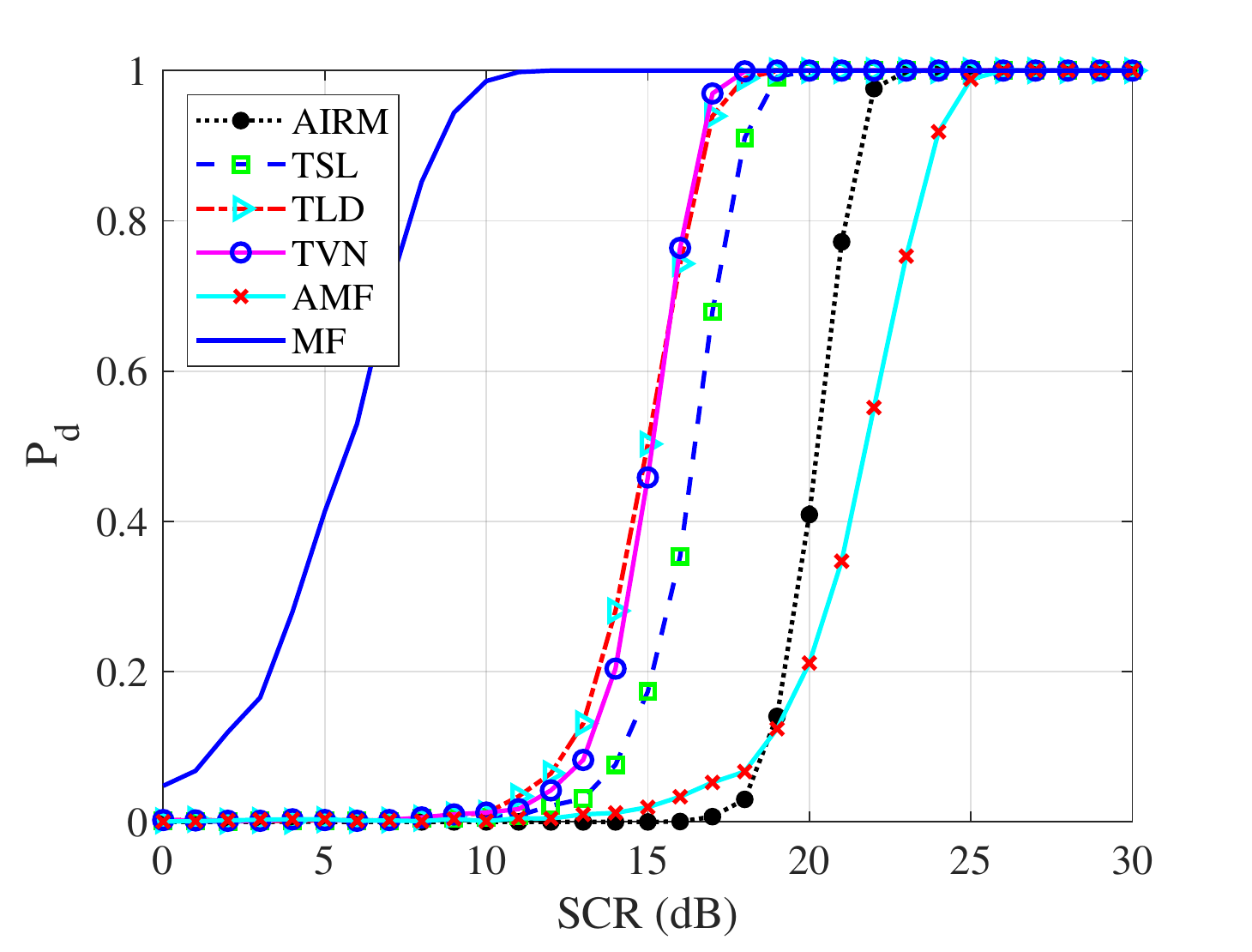}}
\subfigure[Non-Gaussian clutter, $K=16, N=8$] {\includegraphics[width=8.5cm,angle=0]{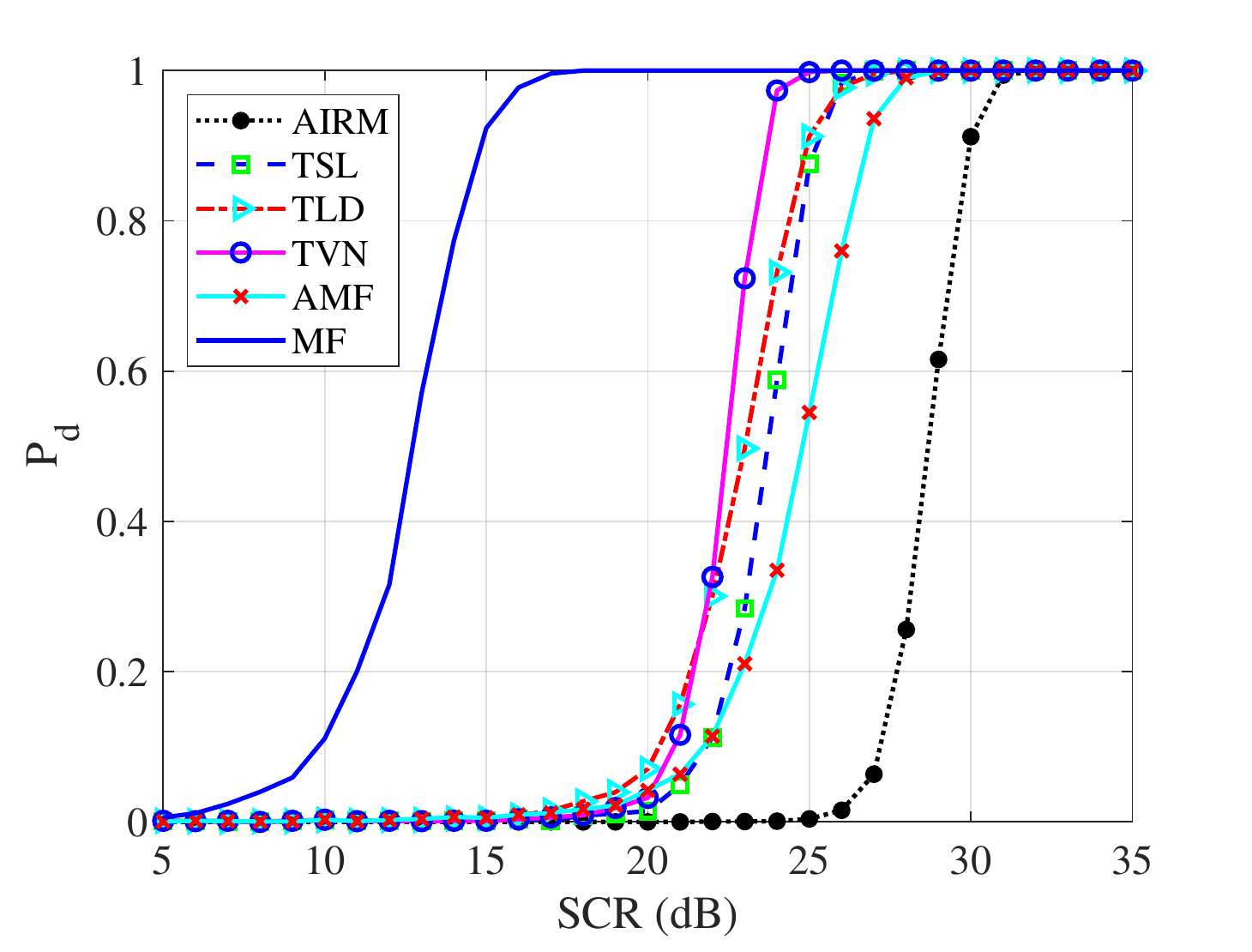}}
\caption{Probabilities of detection versus the signal to clutter ratio in Gaussian and non-Gaussian clutter for Toeplitz HPD structure, $P_{fa}=10^{-3}$.}
\label{fig:Toeplitz_PD_SCR}
\end{figure*}

\begin{figure*}[h]
\centering
\subfigure[Gaussian  clutter, $K=8, N=8$] {\includegraphics[width=8.5cm,angle=0]{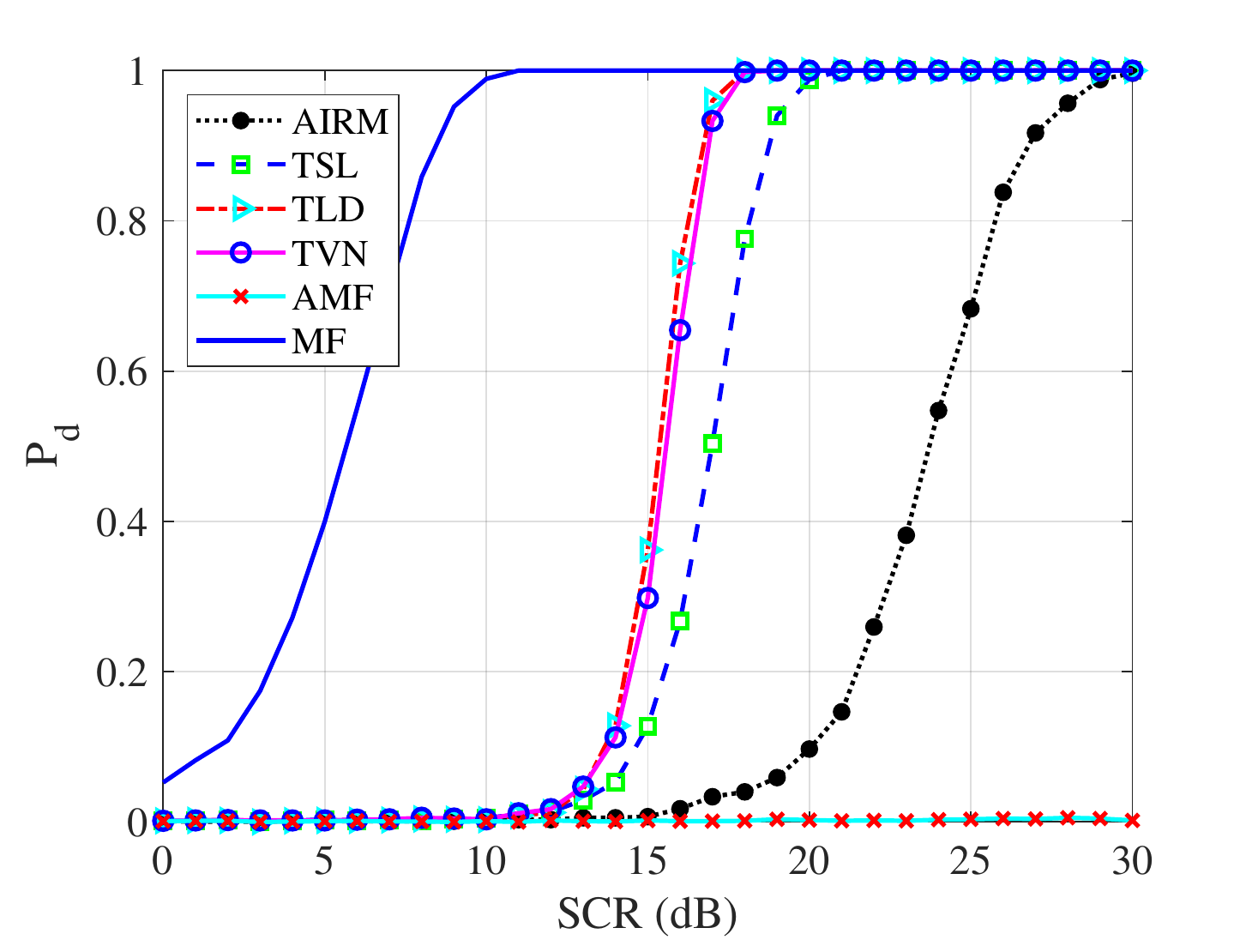}}
\subfigure[Non-Gaussian  clutter, $K=8, N=8$] {\includegraphics[width=8.5cm,angle=0]{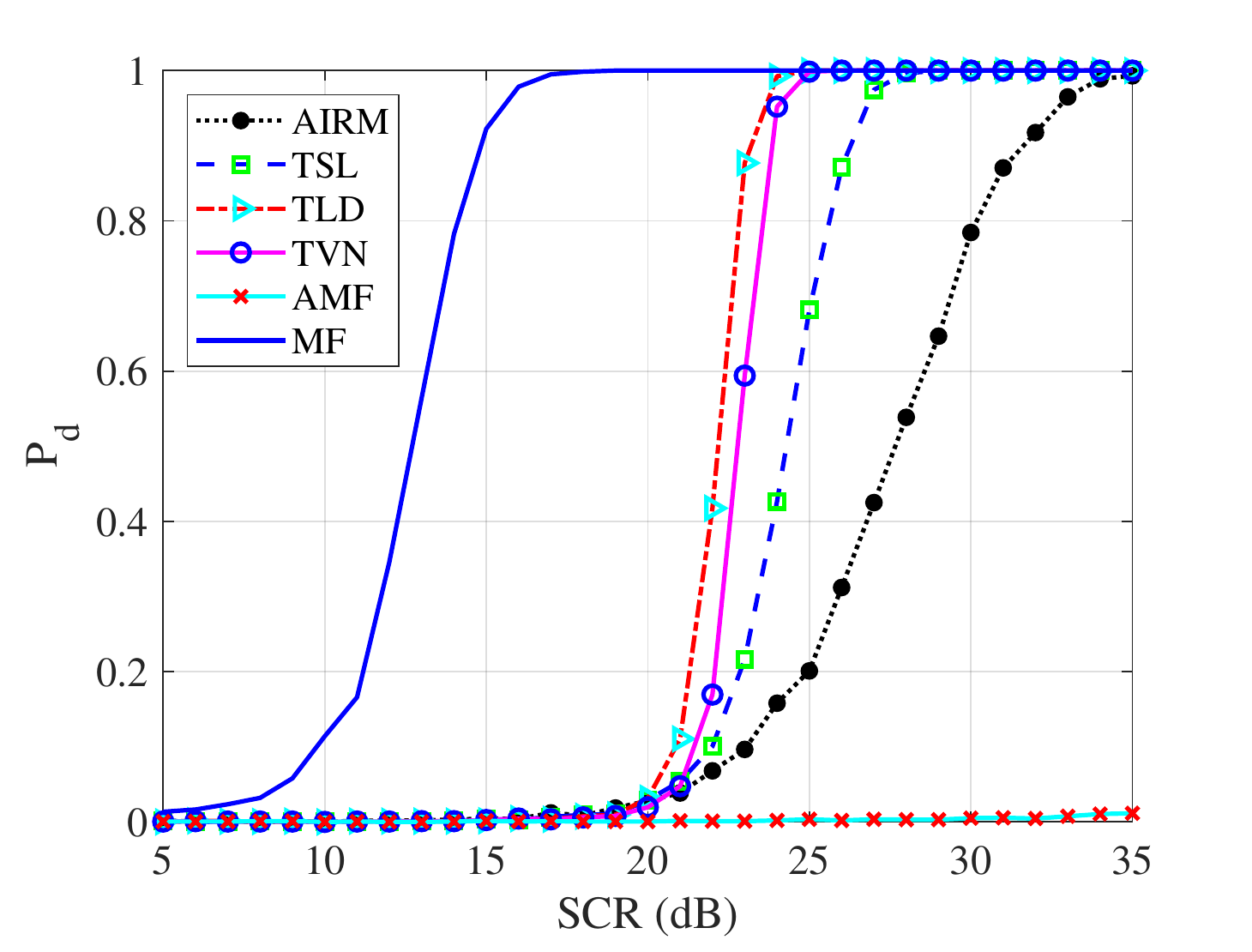}}
\subfigure[Gaussian  clutter, $K=12, N=8$] {\includegraphics[width=8.5cm,angle=0]{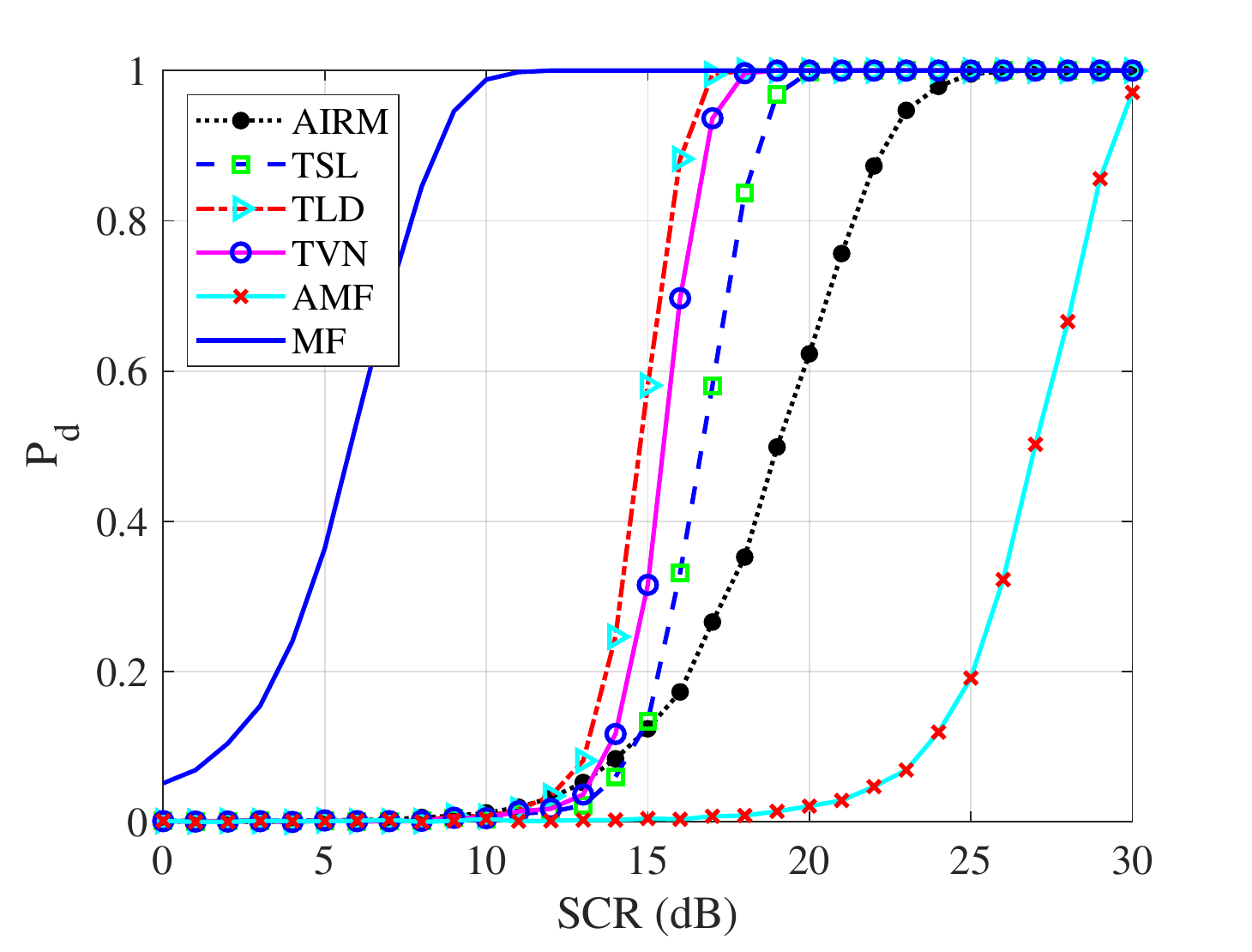}}
\subfigure[Non-Gaussian  clutter, $K=12, N=8$] {\includegraphics[width=8.5cm,angle=0]{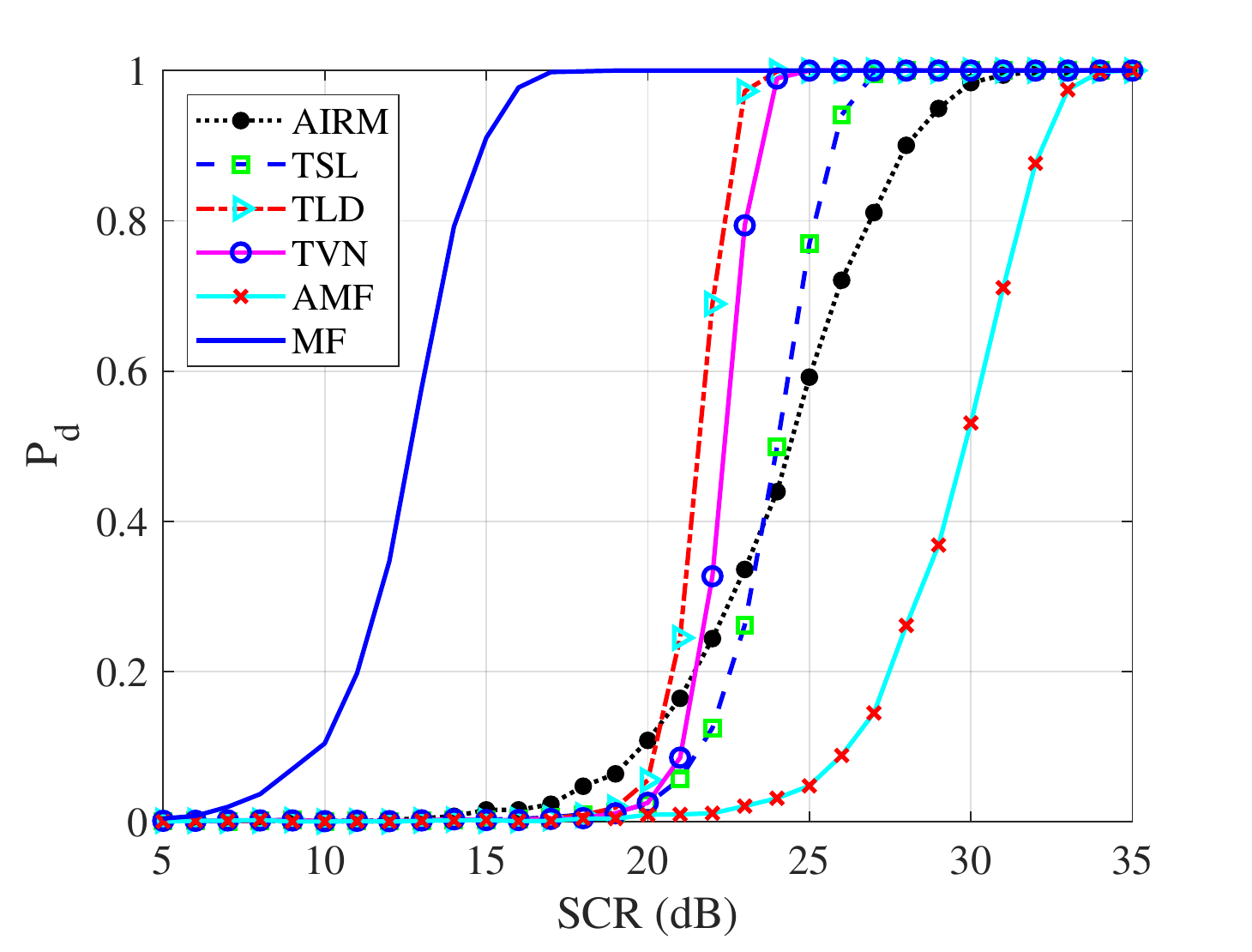}}
\subfigure[Gaussian clutter, $K=16, N=8$] {\includegraphics[width=8.5cm,angle=0]{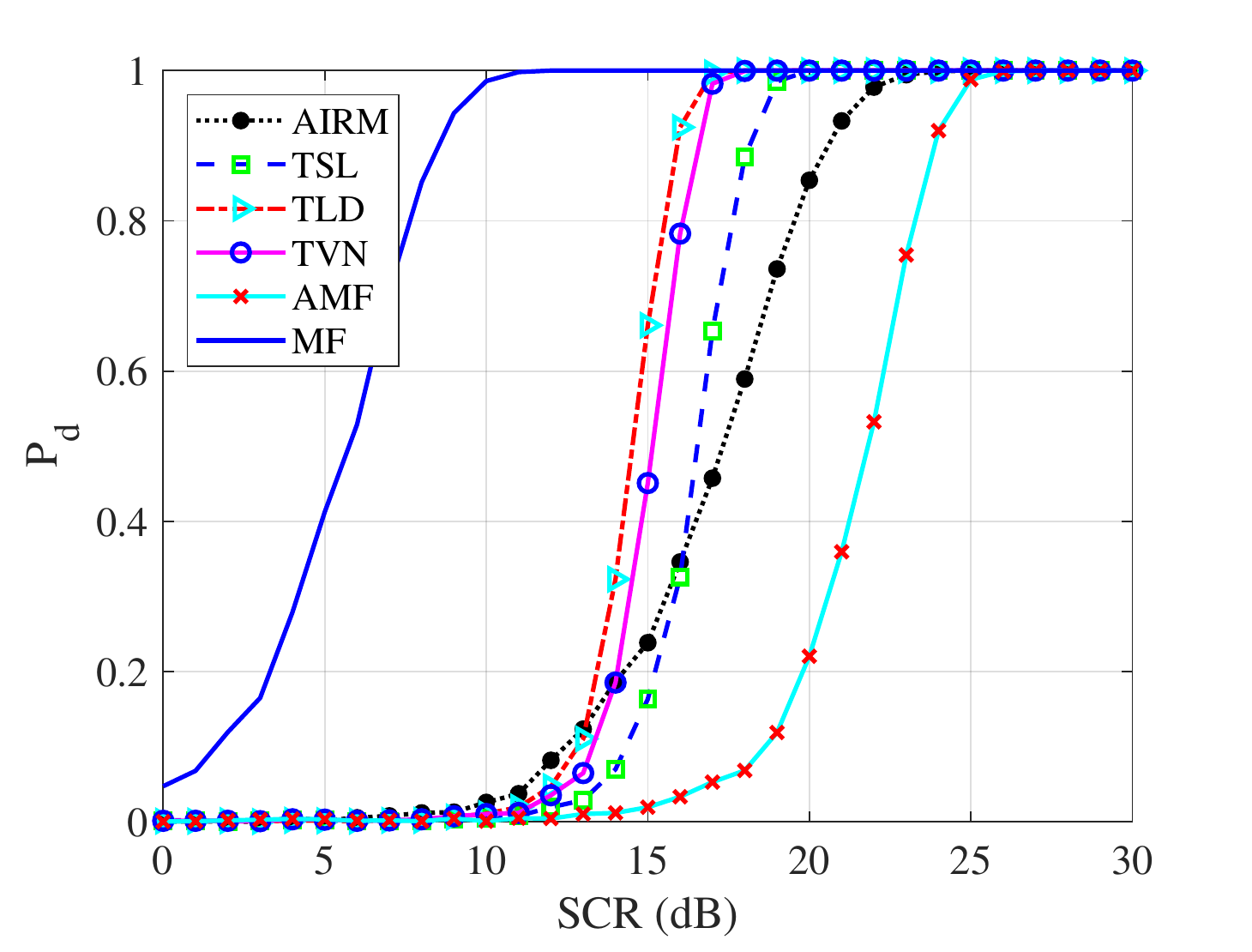}}
\subfigure[Non-Gaussian  clutter, $K=16, N=8$] {\includegraphics[width=8.5cm,angle=0]{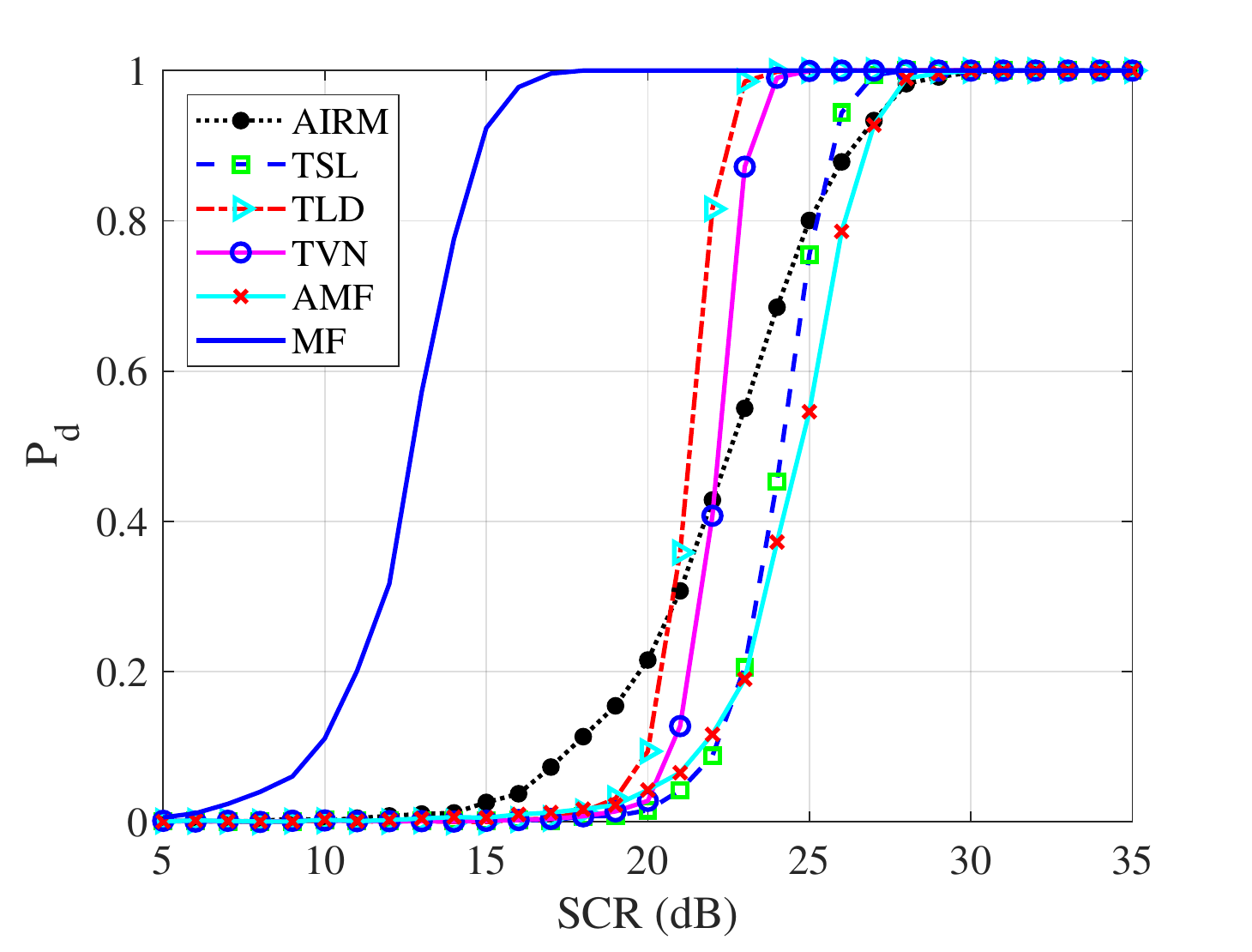}}
\caption{Probabilities of detection versus the signal to clutter ratio in Gaussian and non-Gaussian clutter for diagonal loading structure, $P_{fa}=10^{-3}$.}
\label{fig:Loading_PD_SCR}
\end{figure*}

In Fig. \ref{fig:Toeplitz_PD_SCR}, we compare the probabilities of detection of the proposed TBD-MIG detectors with the AIRM-MIG detector and the AMF in Gaussian and non-Gaussian clutters for Toeplitz HPD matrix case \eqref{eq:AR1}. Here, we also give the curves of the matched filter (MF) with the known covariance matrix as the performance benchmark for the AMF. Unlike the AMF, the optimal detection performances of the MIG detectors are difficult to determine, as the performance is closely related to the geometric measure used in the detector as well as the robustness of its corresponding geometric mean about outliers. Obviously, as the number of secondary data increases, performances of all  detectors improve. Particularly, all the considered detectors experience severe performance degradation in the non-Gaussian clutter with respect to the Gaussian clutter. When $K=N=8$, the AMF is invalid since the estimate error of the SCM is too larger. However, all the MIG detectors can still work in the case of $K=N$. The TBD-MIG detectors achieve significant performance advantage over the AIRM-MIG detector and the AMF in Gaussian and non-Gaussian clutters. The AIRM-MIG detector outperforms the AMF for $K=8,12$ in the Gaussian clutter, and for $K=8$ in the non-Gaussian clutter. Moreover, the AIRM-MIG detector can achieve performance improvement  for SCR bigger than $19$ dB in the case of $K=16$ in the Gaussian clutter and for SCR bigger than $29$ dB in the case of $K=12$ in the non-Gaussian clutter. The AMF has better performance than the AIRM-MIG detector for $K=16$ in the non-Gaussian clutter.

Fig. \ref{fig:Loading_PD_SCR} shows the performance comparison results of MIG detectors and the AMF in
 Gaussian and non-Gaussian clutters for the HPD matrix case \eqref{eq:AR2} obtained from diagonal loading.  Similar performance improvement can be seen as to the Toeplitz HPD matrix case when the number of secondary data increases. All  MIG detectors perform better than the AMF in Gaussian and non-Gaussian clutters for different $K$. The TBD-MIG detectors outperform the AIRM-MIG detector for $K=8, 12$ in the Gaussian clutter and for $K=8$ in the non-Gaussian clutter. Besides, the TBD-MIG detectors has better performance than the AIRM-MIG detector for SCR bigger than $16$ dB in the Gaussian clutter and for SCR bigger than $24$ dB in the non-Gaussian clutter.  In  the TBD-MIG detectors, the TLD-MIG detector has the best performance that is followed by the TVN-MIG detector.

\begin{figure}[H]
\centering
\subfigure[$Toeplitz \ Structure$] {\includegraphics[width=7.5cm,angle=0]{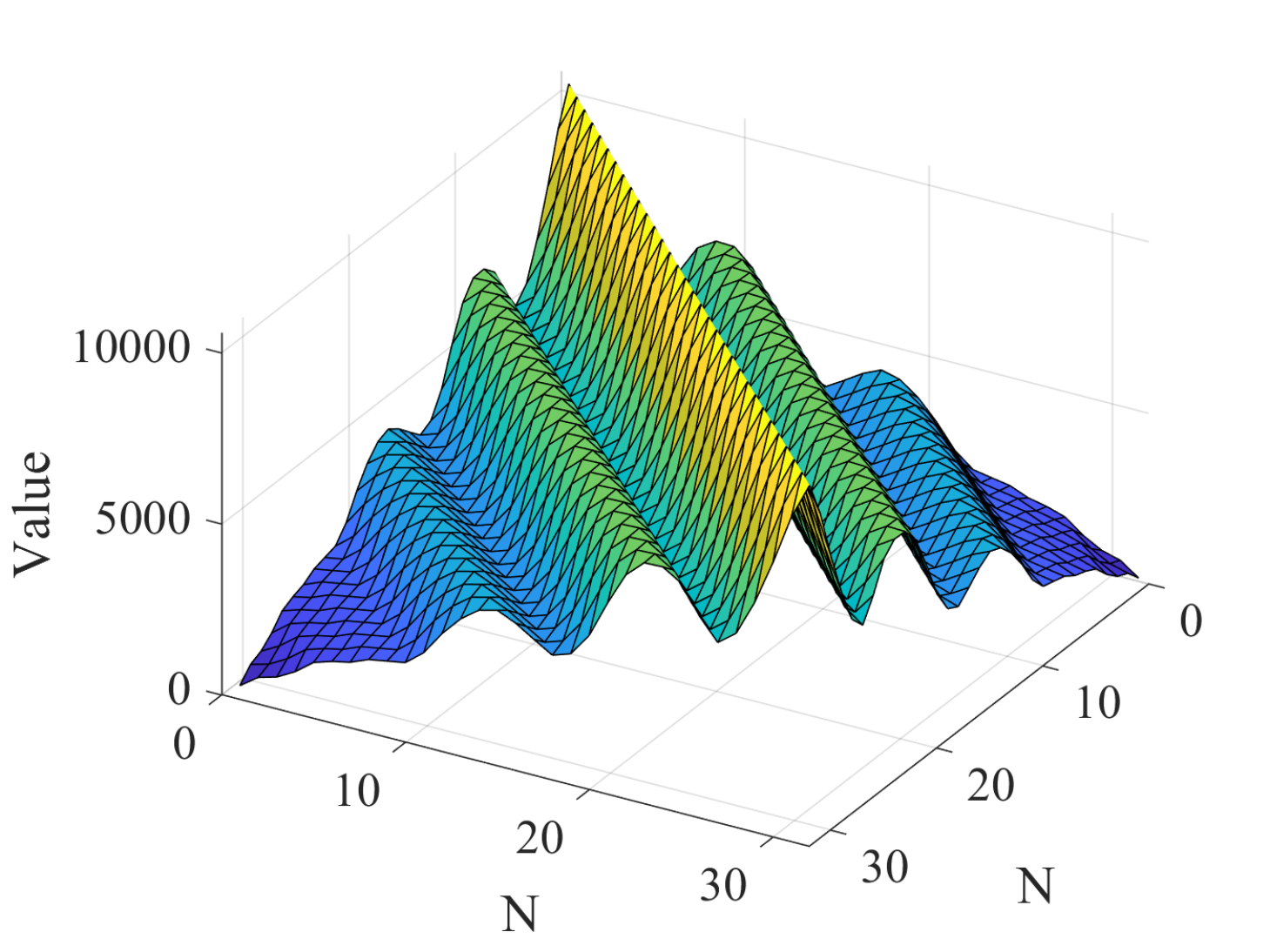}}
\subfigure[$Diagonal \ Structure$] {\includegraphics[width=7.5cm,angle=0]{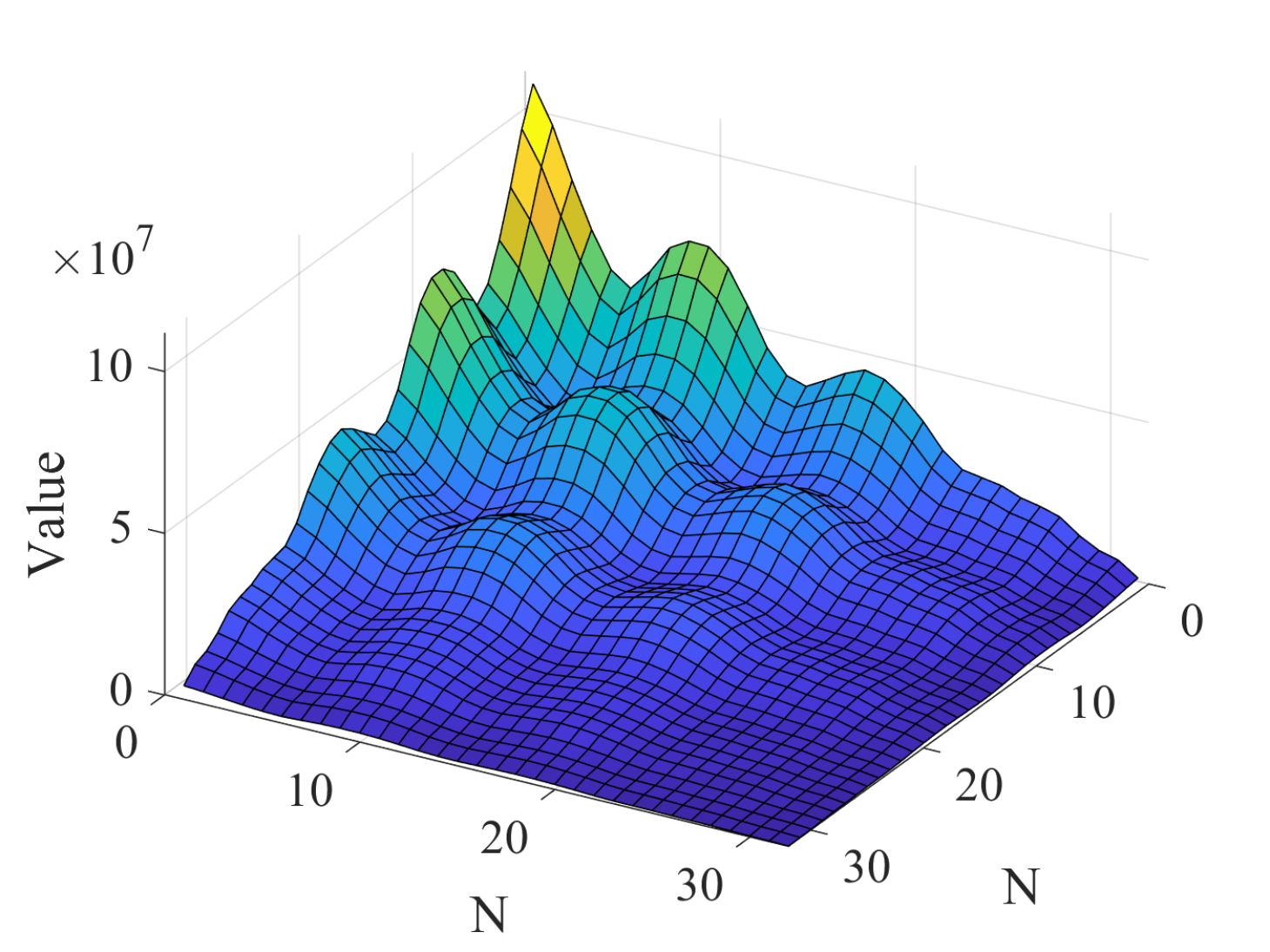}}
\caption{Energy distributions under different matrix structures.}
\label{fig:Matrix_Structure}
\end{figure}

Next we  examine the effect of different matrix structures on the detection performance, namely Toeplitz HPD matrices \eqref{eq:AR1} and HPD matrices obtained from diagonal loading \eqref{eq:AR2}. Firstly, we analyze  the energy distributions for these two matrix structures, shown in Fig. \ref{fig:Matrix_Structure}. It is observed that for the Toeplitz HPD matrix case, the energy is distributed parallel to the main diagonal and mainly distributed on the main diagonal. The farther the main diagonal is away, the less is the energy. However, for the diagonal loading HPD matrix case, the energy is mainly concentrated in a certain angle. The difference of their energy distributions is potentially one reason of their different detection performances.  In Fig. \ref{fig:Structure_PD_SCR}, we analyze  the performance of MIG detectors under different matrix structures. In the Gaussian clutter case, it is shown in Fig. \ref{fig:Structure_PD_SCR} (a) that the TLD-MIG and AIRM-MIG detectors with the diagonal loading structure can achieve several performance improvements over their counterparts with the Toeplitz structure, whereas both the TSL-MIG and TVN-MIG detectors have similar performance with the diagonal loading and Toeplitz structures. In the non-Gaussian clutter case, Fig. \ref{fig:Structure_PD_SCR} (b) shows that except for the similar performance of TSL-MIG detector, the performances of all MIG detectors with the diagonal loading structure are better to that of the detectors with the Toeplitz structure.

\begin{figure}[H]
\centering
\subfigure[Gaussian  clutter] {\includegraphics[width=7.5cm,angle=0]{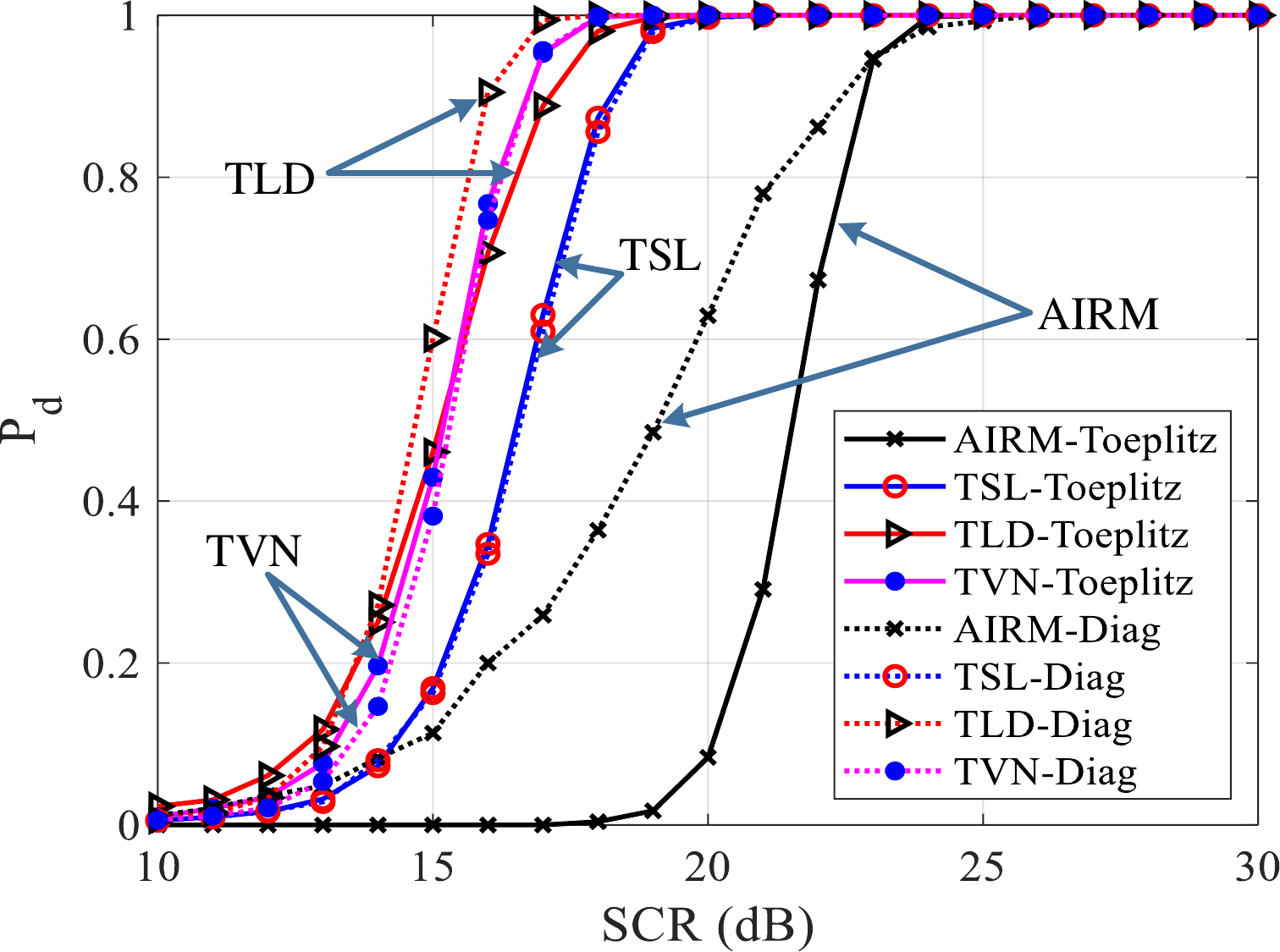}}
\subfigure[Non-Gaussian \ clutter] {\includegraphics[width=7.5cm,angle=0]{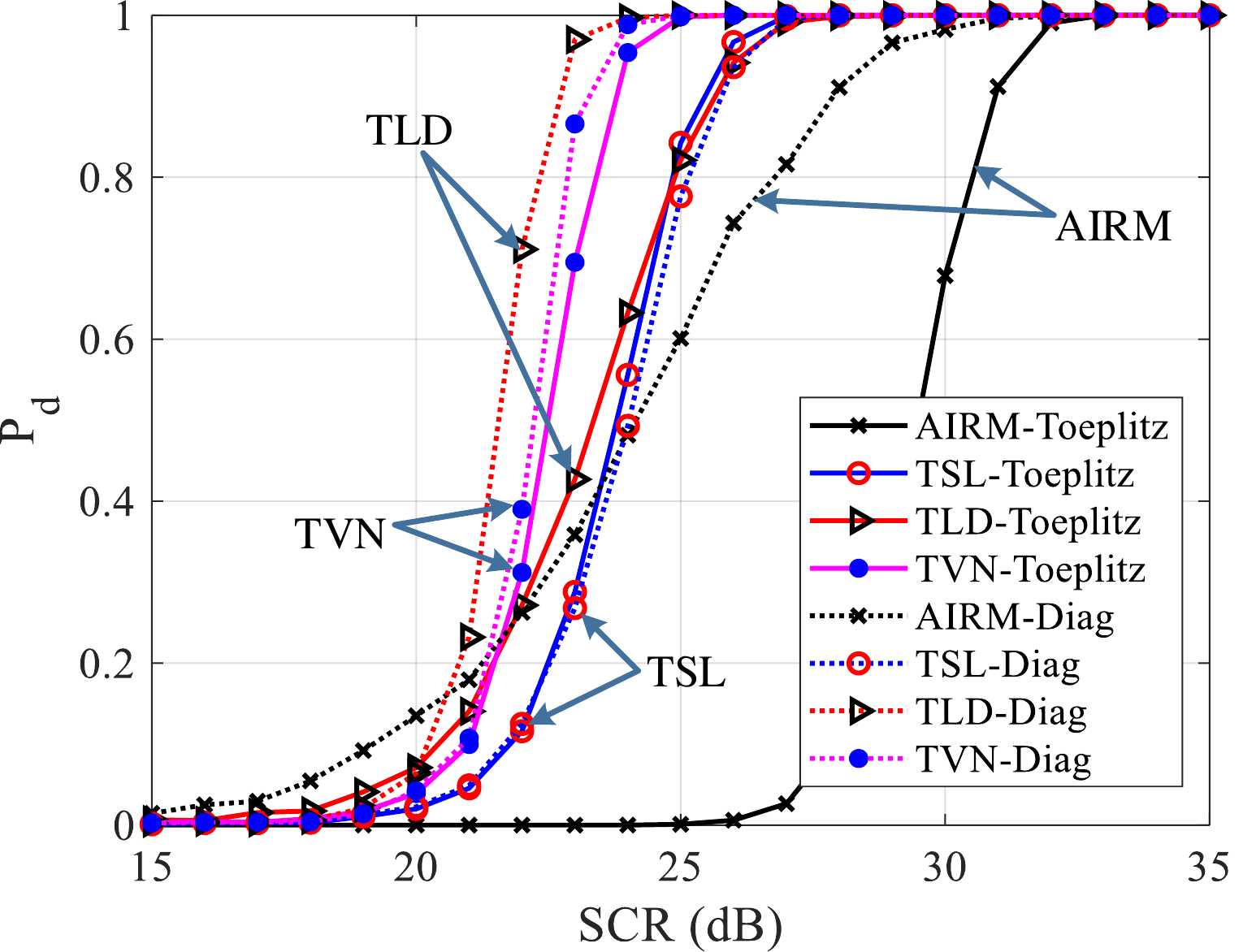}}
\caption{Probabilities of detection versus the signal to clutter ratio under different matrix structures.}
\label{fig:Structure_PD_SCR}
\end{figure}

\section{Conclusions}

In this paper, we  proposed a TBD-MIG detector to investigate the problem of target detection in nonhomogeneous clutter. The sample data has been assumed to be modeled as HPD matrices, which is used as the secondary data to estimate the CCM by the TBD mean. We then reformulated the problem of signal detection as discriminating two points on the HPD matrix manifold. Three TBD-MIG detectors, referred to as the TSL-MIG, TLD-MIG and TVN-MIG detectors, were introduced. Influence functions related to  geometric means with respect to different divergences were defined and calculated in closed-form, providing convenience of a theoretical analysis of the robustness to outliers. Interestingly, the TBD mean is upper bounded about the interference energy. Numerical simulations shown that the TBD-MIG detectors outperform the AIRM-MIG detector and the AMF in homogeneous clutter. Energy distributions of two matrix structures, i.e., Toeplitz HPD matrices and HPD matrices obtained from diagonal loading, were analyzed, and a comparison of their influences on detection performance was also conducted.

 From the theoretical aspect, it would be interesting to study the Riemannian-geometric structures induced from the proposed total divergences as well as their relations with that of the well-known AIRM and the Log-Euclidean geometry of HPD matrix manifolds \cite{AFPA2007}.
Possible future research in applications will concern  the target detection in real datasets using the TBD-MIG detectors and the extension of MIG detectors to the structured covariance interference, e.g., persymmetric covariance matrices \cite{COP2016,ZWZN2020}.

\appendices
\section{Proof of Proposition \ref{prop:tvn}}
\label{appen:A}

The following two lemmas are used in the proof of Proposition \ref{prop:tvn}.

\begin{lem}[\cite{Hig2008,Moa2005}] \label{lem:aa}
Suppose $\bm{X}$ is an invertible matrix that does not have eigenvalues in the closed negative real line and  denote $\operatorname{Log}\bm{X}$ its principal logarithm. The matrix $\bm{X}$ satisfies the following properties.
\begin{itemize}
\item[(i)] Both $\bm{X}$ and $\operatorname{Log}\bm{X}$ commute with $[(\bm{X}-\bm{I})s+\bm{I}]^{-1}$ for any real number $s$.
\item[(ii)]   The following identity holds that
\begin{equation*}
\begin{aligned}
\int_0^1[(\bm{X}&-\bm{I})s+\bm{I}]^{-2}\operatorname{d}\!s\\
&=(\bm{I}-\bm{X})^{-1}[(\bm{X}-\bm{I})s+\bm{I}]^{-1}\Big|_{s=0}^1\\
&=\bm{X}^{-1}.
\end{aligned}
\end{equation*}
\end{itemize}
\end{lem}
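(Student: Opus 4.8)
The plan is to handle the two parts in order, using part (i) as the computational engine for part (ii); throughout I write $\bm{M}(s):=(\bm{X}-\bm{I})s+\bm{I}=s\bm{X}+(1-s)\bm{I}$. For part (i), first I would note that $\bm{M}(s)$ is a degree-one polynomial in $\bm{X}$ and therefore commutes with $\bm{X}$; conjugating the identity $\bm{M}(s)\bm{X}=\bm{X}\bm{M}(s)$ by $\bm{M}(s)^{-1}$ on both sides then shows that $\bm{M}(s)^{-1}$ commutes with $\bm{X}$ as well, whenever the inverse exists. For the logarithm I would invoke the holomorphic functional calculus: since $\bm{X}$ has no eigenvalue on the closed negative real axis, $\operatorname{Log}\bm{X}=f(\bm{X})$ with $f$ the principal logarithm, while $\bm{M}(s)^{-1}=g(\bm{X})$ with $g(z)=1/(sz+1-s)$, and two functions of the same matrix commute, which gives the claim. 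A quick sanity check confirms $\bm{M}(s)$ is invertible for $s\in[0,1]$: its eigenvalues $s\lambda+(1-s)$ are convex combinations of an eigenvalue $\lambda\notin(-\infty,0]$ of $\bm{X}$ and of $1$, hence never vanish.

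For part (ii), I would start from $\bm{M}'(s)=\bm{X}-\bm{I}$ together with the standard derivative of an inverse, $\frac{\operatorname{d}}{\operatorname{d}\!s}\bm{M}(s)^{-1}=-\bm{M}(s)^{-1}\bm{M}'(s)\bm{M}(s)^{-1}$. The decisive step is to apply part (i): because $\bm{X}-\bm{I}$ commutes with $\bm{M}(s)^{-1}$, the right-hand side collapses to $-(\bm{X}-\bm{I})\bm{M}(s)^{-2}$, so that $\bm{M}(s)^{-2}=(\bm{I}-\bm{X})^{-1}\frac{\operatorname{d}}{\operatorname{d}\!s}\bm{M}(s)^{-1}$. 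Integrating by the fundamental theorem of calculus and evaluating at $\bm{M}(1)=\bm{X}$ and $\bm{M}(0)=\bm{I}$ produces the middle expression of the displayed identity, namely $(\bm{I}-\bm{X})^{-1}(\bm{X}^{-1}-\bm{I})$; writing $\bm{X}^{-1}-\bm{I}=\bm{X}^{-1}(\bm{I}-\bm{X})$ and using that $\bm{X}^{-1}$ commutes with $(\bm{I}-\bm{X})^{-1}$ then reduces this to $\bm{X}^{-1}$.

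The hard part will be the invertibility of $\bm{X}-\bm{I}$, which is implicitly required to form $(\bm{I}-\bm{X})^{-1}$ but fails exactly when $\bm{X}$ has an eigenvalue equal to $1$. I would dispose of this by continuity: both sides of the identity are analytic in $\bm{X}$ on the open set of admissible matrices, on which matrices with no eigenvalue at $1$ are dense, so the identity extends to the exceptional case. Alternatively, one may triangularize $\bm{X}$ and reduce to the scalar integral $\int_0^1(s\lambda+1-s)^{-2}\,\operatorname{d}\!s=1/\lambda$, which holds for every admissible $\lambda$ including $\lambda=1$ and recovers $\bm{X}^{-1}$ directly, thereby bypassing the singular intermediate step. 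Everything else is a routine application of the fundamental theorem of calculus.
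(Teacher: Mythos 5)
Your proof is correct, but there is nothing in the paper to compare it against: the paper states this lemma with citations to \cite{Hig2008,Moa2005} and uses it as an imported tool in Appendix A, without giving any proof of its own. Your argument is a sound, self-contained substitute. For part (i), writing $\bm{M}(s)=s\bm{X}+(1-s)\bm{I}$, the polynomial-in-$\bm{X}$ observation plus conjugation by $\bm{M}(s)^{-1}$ handles $\bm{X}$, the holomorphic functional calculus handles $\operatorname{Log}\bm{X}$, and your spectral check that $s\lambda+(1-s)\neq 0$ for $s\in[0,1]$ whenever $\lambda\notin(-\infty,0]$ supplies exactly the invertibility needed in part (ii); reading ``for any real number $s$'' as ``whenever the inverse exists'' is the right precision, since $\bm{M}(s)$ can indeed be singular for $s$ outside $[0,1]$. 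For part (ii), the identity $\frac{\operatorname{d}}{\operatorname{d}\!s}\bm{M}(s)^{-1}=(\bm{I}-\bm{X})\bm{M}(s)^{-2}$ and the fundamental theorem of calculus reproduce the displayed chain of equalities, and you correctly flag that the middle expression presupposes $1\notin\operatorname{spec}(\bm{X})$, disposing of the exceptional case by density and continuity, which is valid. One caveat: your triangularization ``alternative'' is not actually a complete bypass, because reducing to the scalar integral $\int_0^1(s\lambda+1-s)^{-2}\operatorname{d}\!s=1/\lambda$ only matches the diagonal entries of the Schur form, while the off-diagonal entries of $\int_0^1[(\bm{T}-\bm{I})s+\bm{I}]^{-2}\operatorname{d}\!s$ still require an argument; so the continuity argument must remain the primary one. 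A cleaner way to avoid the singular intermediate expression altogether is to integrate the antiderivative $s\,\bm{M}(s)^{-1}$: using part (i),
\begin{equation*}
\frac{\operatorname{d}}{\operatorname{d}\!s}\left[s\,\bm{M}(s)^{-1}\right]
=\bm{M}(s)^{-2}\left[\bm{M}(s)-s(\bm{X}-\bm{I})\right]
=\bm{M}(s)^{-2},
\end{equation*}
so that $\int_0^1\bm{M}(s)^{-2}\operatorname{d}\!s=\bm{M}(1)^{-1}=\bm{X}^{-1}$ with no inverse of $\bm{I}-\bm{X}$ ever formed.
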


\begin{lem}[\cite{Moa2005}] \label{lem:ln}
\begin{itemize}
\item[(i)] For an arbitrary matrix $\bm{B}(s)$ with $s\in\mathbb{R}$ and arbitrary real numbers $a<b$, the following commutative property holds:
\begin{equation*}
\operatorname{tr}\left(\int_a^b\bm{B}(s)\operatorname{d}\!s\right)=\int_a^b\operatorname{tr}\left(\bm{B}(s)\right)\operatorname{d}\!s.
\end{equation*}
\item[(ii)] Suppose $\bm{A}(\varepsilon)$ is an invertible matrix which does not have eigenvalues lying in the closed real line.
Then, we have
\begin{equation*}
\begin{aligned}
\frac{\operatorname{d}}{\operatorname{d}\!\varepsilon}\operatorname{Log}\bm{A}(\varepsilon)=\int_0^1&
[(\bm{A}(\varepsilon)-\bm{I})s+\bm{I}]^{-1} \\
&\times \frac{\operatorname{d}\!\bm{A}(\varepsilon)}{\operatorname{d}\!\varepsilon}[(\bm{A}(\varepsilon)-\bm{I})s+\bm{I}]^{-1}\operatorname{d}\!s.
\end{aligned}
\end{equation*}
\end{itemize}
\end{lem}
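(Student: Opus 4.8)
The plan is to handle the two parts separately, since they are of different natures. Part (i) is a routine consequence of the entrywise definition of the integral of a matrix-valued function: I would set the $(k,k)$ entry of $\int_a^b \bm{B}(s)\,\operatorname{d}\!s$ to be $\int_a^b B_{kk}(s)\,\operatorname{d}\!s$, and then, because the trace is the \emph{finite} sum of the diagonal entries and a finite sum commutes with integration, obtain $\operatorname{tr}\left(\int_a^b \bm{B}(s)\,\operatorname{d}\!s\right)=\sum_k \int_a^b B_{kk}(s)\,\operatorname{d}\!s=\int_a^b \sum_k B_{kk}(s)\,\operatorname{d}\!s=\int_a^b \operatorname{tr}(\bm{B}(s))\,\operatorname{d}\!s$. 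The only point worth noting is that interchanging a finite sum with the integral requires no convergence machinery.

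For part (ii) the key tool I would invoke is the integral representation of the principal logarithm, $\operatorname{Log}\bm{A}=\int_0^1 (\bm{A}-\bm{I})[(\bm{A}-\bm{I})s+\bm{I}]^{-1}\,\operatorname{d}\!s$, valid whenever $\bm{A}$ has no eigenvalue on the closed negative real line (this is the relevant hypothesis for $\operatorname{Log}$). Writing $\bm{M}(s,\varepsilon):=(\bm{A}(\varepsilon)-\bm{I})s+\bm{I}=s\bm{A}(\varepsilon)+(1-s)\bm{I}$, I would first check that $\bm{M}(s,\varepsilon)$ is invertible for every $s\in[0,1]$: its eigenvalues are $s\lambda+(1-s)$ with $\lambda$ an eigenvalue of $\bm{A}(\varepsilon)$, and since $\lambda\notin(-\infty,0]$ the segment joining $1$ to $\lambda$ never meets the origin. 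This makes the integrand smooth in $(s,\varepsilon)$ on a compact $s$-interval and licenses differentiation under the integral sign.

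Differentiating term by term, I would apply the product rule together with the standard identities $\frac{\operatorname{d}}{\operatorname{d}\!\varepsilon}\bm{M}^{-1}=-\bm{M}^{-1}\left(\frac{\operatorname{d}}{\operatorname{d}\!\varepsilon}\bm{M}\right)\bm{M}^{-1}$ and $\frac{\operatorname{d}}{\operatorname{d}\!\varepsilon}\bm{M}=s\frac{\operatorname{d}\!\bm{A}}{\operatorname{d}\!\varepsilon}$, which yields the integrand $\frac{\operatorname{d}\!\bm{A}}{\operatorname{d}\!\varepsilon}\bm{M}^{-1}-s(\bm{A}-\bm{I})\bm{M}^{-1}\frac{\operatorname{d}\!\bm{A}}{\operatorname{d}\!\varepsilon}\bm{M}^{-1}$. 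The crucial simplification is the purely algebraic identity $s(\bm{A}-\bm{I})=\bm{M}-\bm{I}$, whence $s(\bm{A}-\bm{I})\bm{M}^{-1}=(\bm{M}-\bm{I})\bm{M}^{-1}=\bm{I}-\bm{M}^{-1}$. Substituting this collapses the bracket to $\bm{M}^{-1}\frac{\operatorname{d}\!\bm{A}}{\operatorname{d}\!\varepsilon}\bm{M}^{-1}$, giving exactly the claimed formula.

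I expect the main obstacle to be analytic rather than algebraic: establishing (or citing \cite{Hig2008,Moa2005}) the integral representation of $\operatorname{Log}\bm{A}$, verifying that $\bm{M}(s,\varepsilon)$ stays invertible uniformly in $s\in[0,1]$ so that the resolvent never hits a singularity, and thereby justifying the interchange of $\frac{\operatorname{d}}{\operatorname{d}\!\varepsilon}$ with $\int_0^1$. Once these are in place the computation is short; the commutativity of $\bm{A}$ with $[(\bm{A}-\bm{I})s+\bm{I}]^{-1}$ recorded in Lemma \ref{lem:aa}(i) keeps the manipulations unambiguous, even though the final collapse uses only $\bm{M}\bm{M}^{-1}=\bm{I}$ and $s(\bm{A}-\bm{I})=\bm{M}-\bm{I}$.
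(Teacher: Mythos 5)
Your proposal is correct, but there is nothing in the paper to compare it against: the paper does not prove this lemma at all, it imports it wholesale from \cite{Moa2005} (the citation in the lemma header is the "proof"), and then uses it as a black box in Appendices A and B. Your argument supplies the missing derivation and is essentially the standard one from the cited literature: part (i) is the entrywise definition of the matrix integral plus finiteness of the trace sum, and part (ii) follows by differentiating the integral representation $\operatorname{Log}\bm{A}=\int_0^1 (\bm{A}-\bm{I})[(\bm{A}-\bm{I})s+\bm{I}]^{-1}\operatorname{d}\!s$ under the integral sign; I checked the algebra, and the collapse of the product-rule terms via $s(\bm{A}-\bm{I})\bm{M}^{-1}=\bm{I}-\bm{M}^{-1}$ with $\bm{M}=(\bm{A}-\bm{I})s+\bm{I}$ does yield exactly $\bm{M}^{-1}\frac{\operatorname{d}\!\bm{A}}{\operatorname{d}\!\varepsilon}\bm{M}^{-1}$. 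Two points in your write-up are genuine improvements over the paper's statement: you correct the hypothesis, which in the paper reads ``no eigenvalues lying in the closed real line'' but must be the closed \emph{negative} real line $(-\infty,0]$ for the principal logarithm (and for your spectral argument that $s\lambda+(1-s)\neq 0$ on $s\in[0,1]$); and you make explicit the justification for interchanging $\frac{\operatorname{d}}{\operatorname{d}\!\varepsilon}$ with $\int_0^1$, which the cited source also glosses over. The only thing you lean on without proof is the integral representation of $\operatorname{Log}$ itself, which is reasonable to cite from \cite{Hig2008}.
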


{\it Proof of Proposition \ref{prop:tvn}.}
Denote $\bm{A}(\varepsilon):=\bm{X}+\varepsilon\bm{Y}$, which is assumed to have no eigenvalues lying in the negative real line. Gradient of the function $F$ is calculated as
\begin{equation*}
\begin{aligned}
\langle \nabla F&(\bm{X}),\bm{Y}\rangle:=\frac{\operatorname{d}}{\operatorname{d}\!\varepsilon}\Big|_{\varepsilon=0} F(\bm{X}+\varepsilon \bm{Y})\\
&=\frac{\operatorname{d}}{\operatorname{d}\!\varepsilon}\Big|_{\varepsilon=0}\operatorname{tr}\left(\bm{A}(\varepsilon)\operatorname{Log}\bm{A}(\varepsilon)-\bm{A}(\varepsilon)\right)\\
&=\operatorname{tr}\left(\frac{\operatorname{d}\!\bm{A}(\varepsilon)}{\operatorname{d}\!\varepsilon}\operatorname{Log}\bm{A}(\varepsilon)\right)\Big|_{\varepsilon=0}\\
&~~~~+\operatorname{tr}\left(\bm{A}(\varepsilon)\frac{\operatorname{d}}{\operatorname{d}\!\varepsilon}\operatorname{Log}\bm{A}(\varepsilon)\right)\Big|_{\varepsilon=0}
-\operatorname{tr}(\bm{Y})\\
&=\operatorname{tr}\left(\bm{Y}\operatorname{Log}\bm{X}-\bm{Y}\right)+\operatorname{tr}\left(\bm{A}(\varepsilon)\frac{\operatorname{d}}{\operatorname{d}\!\varepsilon}\operatorname{Log}\bm{A}(\varepsilon)\right)\Big|_{\varepsilon=0}.
\end{aligned}
\end{equation*}
 What left is to compute the last differentiation term. By using Lemmas \ref{lem:aa} and \ref{lem:ln}, we have
 \begin{equation*}
 \begin{aligned}
 \operatorname{tr}&\left(\bm{A}(\varepsilon)\frac{\operatorname{d}}{\operatorname{d}\!\varepsilon}\operatorname{Log}\bm{A}(\varepsilon)\right)
 =\operatorname{tr}\left(\int_0^1\bm{A}(\varepsilon)\left[(\bm{A}(\varepsilon)-\bm{I})s+\bm{I}\right]^{-1}\right.\\
 &~~~~~~~~~~~~~~~~\left.\frac{\operatorname{d}\!\bm{A}(\varepsilon)}{\operatorname{d}\!\varepsilon}\left[(\bm{A}(\varepsilon)-\bm{I})s+\bm{I}\right]^{-1}\operatorname{d}\!s\right)\\
 &=\int_0^1\operatorname{tr}\Big(\bm{A}(\varepsilon)\left[(\bm{A}(\varepsilon)-\bm{I})s+\bm{I}\right]^{-1}\\
 &~~~~~~~~~~~~~~~~\frac{\operatorname{d}\!\bm{A}(\varepsilon)}{\operatorname{d}\!\varepsilon}\left[(\bm{A}(\varepsilon)-\bm{I})s+\bm{I}\right]^{-1}\Big)\operatorname{d}\!s\\
 &=\int_0^1\operatorname{tr}\left(\left[(\bm{A}(\varepsilon)-\bm{I})s+\bm{I}\right]^{-2}\bm{A}(\varepsilon)\frac{\operatorname{d}\!\bm{A}(\varepsilon)}{\operatorname{d}\!\varepsilon}\right)\operatorname{d}\!s\\
  &=\operatorname{tr}\left(\int_0^1\left[(\bm{A}(\varepsilon)-\bm{I})s+\bm{I}\right]^{-2}\operatorname{d}\!s~\bm{A}(\varepsilon)\frac{\operatorname{d}\!\bm{A}(\varepsilon)}{\operatorname{d}\!\varepsilon}\right)\\
 &=\operatorname{tr}\left((\bm{I}-\bm{A}(\varepsilon))^{-1}\left[(\bm{A}(\varepsilon)-\bm{I})s+\bm{I}\right]\Big|_{s=0}^1\bm{A}(\varepsilon)\frac{\operatorname{d}\!\bm{A}(\varepsilon)}{\operatorname{d}\!\varepsilon}\right)\\
  &=\operatorname{tr}\left(\frac{\operatorname{d}\!\bm{A}(\varepsilon)}{\operatorname{d}\!\varepsilon}\right),
 \end{aligned}
 \end{equation*}
 that equals to $\operatorname{tr}(\bm{Y})$ for the  matrix $\bm{A}(\varepsilon)=\bm{X}+\varepsilon\bm{Y}$. Therefore, we have
\begin{equation*}
\langle \nabla F(\bm{X}),\bm{Y}\rangle=\operatorname{tr}\left(\bm{Y}\operatorname{Log}\bm{X}\right)=\langle \left(\operatorname{Log}\bm{X}\right)^{\operatorname{H}},\bm{Y}\rangle,
\end{equation*}
and consequently $\nabla F(\bm{X})=(\operatorname{Log} \bm{X})^{\operatorname{H}}$ and  norm of the gradient  is
\begin{equation*}
\begin{aligned}
\norm{\nabla F(\bm{X})} &=\norm{ (\operatorname{Log} \bm{X})^{\operatorname{H}}}.
\end{aligned}
\end{equation*}
Consequently, the Bregman divergence is given by
\begin{equation*}
\operatorname{B}_F(\bm{X},\bm{Y})= \operatorname{tr}\left( \bm{X}(\operatorname{Log} \bm{X} - \operatorname{Log} \bm{Y}) - \bm{X} + \bm{Y} \right)
\end{equation*}
and hence we obtain the TVN as
\begin{equation*}
\delta_F(\bm{X},\bm{Y})=\frac{\operatorname{tr}\left( \bm{X}(\operatorname{Log} \bm{X} - \operatorname{Log} \bm{Y} ) - \bm{X} + \bm{Y} \right)}{\sqrt{1+\norm{ (\operatorname{Log} \bm{Y})^{\operatorname{H}}}^2}}.
\end{equation*}

\section{Proof of Proposition \ref{prop:TBDae}}
\label{appen:B}
Define $G(\bm{X})$ as the objective function to be minimized for the enlarged $m+n$ HPD matrices, namely
\begin{equation*}
\begin{aligned}
G(\bm{X}) :&= (1-\varepsilon)\frac{1}{m}\sum_{i=1}^{m}\norm{\operatorname{Log}(\bm{X}_i^{-1}\bm{X})}^2 \\
&~~~~+ \varepsilon\frac{1}{n}\sum_{j=1}^{n}\norm{\operatorname{Log}(\bm{P}_j^{-1}\bm{X})}^2.
\end{aligned}
\end{equation*}
The gradient of the norm function with respect to the AIRM was shown to be \cite{Moa2005}
\begin{equation*}
\nabla \norm{\operatorname{Log}\left(\bm{X}_i^{-1}\bm{X}\right)}^2=2\bm{X}\operatorname{Log}\left(\bm{X}_i^{-1}\bm{X}\right).
\end{equation*}
Consequently, we have
\begin{equation*}
\begin{aligned}
\nabla G(\bm{X})&= 2(1-\varepsilon)\frac{1}{m}\sum_{i=1}^{m}\bm{X}\operatorname{Log}(\bm{X}_i^{-1}\bm{X})\\
&~~~~+ 2\varepsilon\frac{1}{n}\sum_{j=1}^{n}\bm{X}\operatorname{Log}(\bm{P}_j^{-1}\bm{X}).
\end{aligned}
\end{equation*}

As $\bm{\widehat{X}}=\overline{\bm{X}}+\varepsilon \bm{H}+O(\varepsilon^2)$ is the mean of $m$ HPD matrices $\{ \bm{X}_1, \bm{X}_2, \ldots, \bm{X}_m \}$ and $n$ outliers $\{\bm{P}_1, \bm{P}_2, \ldots, \bm{P}_n\}$, then $\nabla G(\bm{\widehat{X}})=\bm{0}$, namely
\begin{equation*}\label{eq:AIRM1}
(1-\varepsilon)\frac{1}{m}\sum_{i=1}^{m}\operatorname{Log}(\bm{X}_i^{-1}\bm{\widehat{X}}) + \varepsilon\frac{1}{n}\sum_{j=1}^{n}\operatorname{Log}(\bm{P}_j^{-1}\bm{\widehat{X}})=\bm{0}.
\end{equation*}
To obtain the linear term about $\varepsilon$, we can simply differentiate the equality above and then set $\varepsilon$ to be zero. By doing so, we obtain
\begin{equation}\label{eq:lp1}
\frac{1}{n}\sum_{j=1}^n\operatorname{Log}\left(\bm{P}_j^{-1}\overline{\bm{X}}\right)+\frac{1}{m}\sum_{i=1}^m\frac{\operatorname{d}}{\operatorname{d}\!\varepsilon}\Big|_{\varepsilon=0} \operatorname{Log}\left(\bm{X}_i^{-1}\widehat{\bm{X}}\right)=\bm{0}.
\end{equation}
The condition that $\overline{\bm{X}}$ is the mean of $m$ HPD matrices $\{ \bm{X}_1, \bm{X}_2, \ldots, \bm{X}_m \}$ is applied, i.e.,
$$
\frac{1}{m}\sum_{i=1}^{m}\operatorname{Log}(\bm{X}_i^{-1}\bm{\overline{X}}) = \bm{0}.
$$

Taking trace of the identity \eqref{eq:lp1} and using Lemmas \ref{lem:aa} and \ref{lem:ln}, a similar calculation as the proof of Proposition \ref{prop:tvn} (see Appendix \ref{appen:A}) yields
\begin{equation*}
\operatorname{tr}\left(\overline{\bm{X}}^{-1}\bm{H}+\frac{1}{n}\sum_{j=1}^n\operatorname{Log}\left(\bm{P}_j^{-1}\overline{\bm{X}}\right)\right)=0.
\end{equation*}
This can be written using the metric \eqref{eq:glmetric} as well. 
Assuming the arbitrarity of  $\overline{\bm{X}}$, we can choose
\begin{equation*}
\bm{H}= - \frac{1}{n}\sum_{j=1}^{n}\frac{\bm{\overline{X}}  \operatorname{Log}(\bm{P}_j^{-1}\bm{\overline{X}})+\operatorname{Log}(\bm{\overline{X}} \bm{P}_j^{-1})\bm{\overline{X}}  }{2}.
\end{equation*}
This finishes the proof.

\bibliographystyle{IEEEtran}

\bibliography{mybibfile}

\end{document}